\newcommand*{\Nset}{\mathbb{N}}
\newcommand*{\di}{\Diamond}
\newcommand*{\bo}{\square}
\newcommand*{\F}{\mathcal{F}}
\newcommand*{\M}{\mathcal{M}}
\newcommand*{\A}{\mathcal{A}}
\renewcommand*{\AA}{\mathbb{A}}
\newcommand*{\B}{\mathcal{B}}
\newcommand*{\BB}{\mathbb{B}}
\newcommand*{\MM}{\mathrm{Mod}}
\newcommand*{\NN}{\mathbb{N}}
\newcommand*{\VV}{\mathbb{V}}
\newcommand*{\EE}{\mathbb{E}}
\newcommand*{\G}{\mathcal{G}}
\newcommand*{\T}{\mathcal{T}}
\newcommand*{\EF}[1]{\mathrm{FS}^\Phi_{#1}}
\newcommand*{\ML}{\mathrm{ML}}
\newcommand*{\FO}{\mathrm{FO}}
\newcommand*{\uTL}{\mathrm{unary\text{-}TL}}
\newcommand*{\PAL}{\mathrm{PAL}}
\newcommand*{\EL}{\mathrm{EL}}
\renewcommand*{\phi}{\varphi}
\renewcommand*{\theta}{\vartheta}
\newcommand*{\tower}{\mathrm{twr}}
\newcommand*{\dom}{\mathrm{dom}}
\newcommand*{\powerset}{\mathcal{P}}
\newcommand*{\kaikki}{\square}
\newcommand*{\seur}[2]{\bo(#1,#2)}
\newcommand*{\s}{\mathrm{sz}}
\newcommand*{\ms}{\mathrm{ms}}
\newcommand*{\cs}{\mathrm{cs}}
\newcommand*{\bisim}{\leftrightarroweq}
\newcommand*{\liitos}{\textstyle{\bigtriangleup}}
\newcommand*{\win}{(\mathrm{win})}
\newcommand*{\sep}{(\mathrm{sep})}
\newcommand*{\ordo}{\mathcal{O}}
\newcommand*{\CC}{\mathbb{C}}
\newcommand*{\DD}{\mathbb{D}}
\newcommand*{\Lit}{\mathit{Lit}}
\newcommand*{\LTL}{\mathrm{LTL}}
\newcommand*{\mnabla}{\mathop{\nabla}}
\newcommand*{\set}{V}
\newcommand*{\rel}{E}
\newcommand*{\back}{B}
\newcommand*{\lab}{\mathrm{lab}}
\newcommand*{\res}{\mathrm{res}}
\newcommand*{\lef}{\mathrm{left}}
\newcommand*{\rig}{\mathrm{right}}
\newcommand*{\tup}{c}
\newcommand*{\Log}{L}
\renewcommand{\AA}{\mathbb{A}}
\newcommand*{\AAA}{\mathfrak{A}}
\newcommand*{\BBB}{\mathfrak{B}}
\newcommand*{\MMM}{\mathfrak{M}}
\newcommand*{\FS}{\mu\textnormal{-}\mathrm{FS}^\Phi}
\newcommand*{\card}{\mathrm{card}}
\renewcommand{\phi}{\varphi}
\newcommand*{\lit}{l}
\renewcommand{\part}{\rightharpoonup}
\newcommand*{\size}{\mathrm{sz}}
\newcommand*{\new}{\mathsf{new}}
\newcommand*{\old}{\mathsf{old}}
\newcommand*{\CTL}{\mathrm{CTL}}
\newcommand*{\Var}{\mathit{Var}}
\newcommand*{\Prop}{\mathit{Prop}}
\newcommand*{\LL}{\mathbb{L}}
\newcommand*{\RR}{\mathbb{R}}
\newcommand*{\nice}{relevant}
\newcommand*{\poin}{\mathrm{PM}}
\newcommand*{\clock}{\mathrm{CM}}
\renewcommand{\L}{\mathrm{L}}
\newcommand{\Smash}[1]{\smash{\mathrlap[\scriptstyle]{#1}}}
\theoremstyle{plain}
\newtheorem{theorem}{Theorem}[section]
\newtheorem{lemma}[theorem]{Lemma}
\newtheorem{corollary}[theorem]{Corollary}
\theoremstyle{definition}
\newtheorem{definition}[theorem]{Definition}
\newtheorem{remark}[theorem]{Remark}
\begin{document}

\title{Formula size games for modal logic and $\mu$-calculus}
\author{Lauri Hella \\
Tampere University \\
FI-33014 \\
Tampere University \\
\texttt{lauri.hella@tuni.fi} \\
+358503053204
\and
Miikka Vilander \\
Tampere University \\
FI-33014 \\
Tampere University \\
\texttt{miikka.vilander@tuni.fi} \\
+358405677359 }

  \maketitle
  
\begin{abstract}
We propose a new version of formula size game for modal logic. The game characterizes
the equivalence of pointed Kripke-models up to formulas of given numbers of
modal operators and binary connectives. Our 
game is similar to the well-known Adler-Immerman game. However, due to a crucial difference
in the definition of positions of the game, its winning condition is simpler, and the 
second player does not have a trivial optimal strategy. Thus, unlike
the Adler-Immerman game, our game is a genuine two-person game.
We illustrate the use of the game by proving a non-elementary succinctness gap
between bisimulation invariant first-order logic $\FO$ and (basic) modal logic $\ML$. We also present a version of the game for the modal $\mu$-calculus $\L_\mu$ and show that $\FO$ is also non-elementarily more succinct than $\L_\mu$.
\end{abstract}

  \noindent Keywords:
  Succinctness, formula size game, modal logic, modal $\mu$-calculus, bisimulation invariant first-order logic.


\section{Introduction}

Logical languages are often compared in terms of expressiveness and computational complexity. The authors of \cite{Gogic:1995:CLK:1625855.1625967} argue that another important semantic aspect of a logical 
language is the size of formulas needed for expressing properties of structures. 
If two logics $\Log$ and $\Log'$ are equivalent in terms of expressivity,
one of them may be able to express interesting properties much more succinctly than the other. According to the standard terminology, for a given function $f$ on natural numbers, $\Log$ is said to be $f$ times more succinct than $\Log'$ if there is a sequence $(\varphi_n)_{n\in\mathbb{N}}$, of $\Log$-formulas
such that for any sequence $(\psi_n)_{n\in\mathbb{N}}$ of equivalent 
$\Log'$-formulas, the size of $\psi_n$ is at least $f(m_n)$, where $m_n$ is the 
size of~$\varphi_n$.

The succinctness of various modal and temporal logics has been an active 
area of research for the last couple of decades; see e.g. 
\cite{ctl,lutzsattlerwolter,fo2,immermangames,markey,lutz} for earlier work
on the topic and \cite{frenchiliev,unionsuccinctness,syntaxtrees,vDitIliev,McC-DFrPiRe,Fe-DuIliev} for recent work. Typical results
in the area state an exponential succinctness gap between two equally expressive
logics. Often such a gap is reflected in the complexity
of the logics in question. For example, Etessami, Vardi and Wilke proved in 
\cite{fo2} that the two-variable fragment $\FO^2$ of first-order logic and $\uTL$ 
(a weak version of temporal logic) have the same expressive power over 
$\omega$-words, but $\FO^2$ is exponentially more 
succinct than $\uTL$. Furthermore, the complexity of satisfiability for $\FO^2$
is NEXPTIME-complete, while the complexity of $\uTL$ is in NP \cite{sistla}. 
However, being more succinct does not always imply higher complexity: 
for example, public announcement logic $\PAL$ is exponentially more succinct
than epistemic logic $\EL$, but the complexity of satisfiability is the same for both of them \cite{lutz}.

The most commonly used methods for proving succinctness results are
formula size games and extended syntax trees. Formula size games were first
introduced by Adler and Immerman in \cite{immermangames} for branching-time 
temporal logic $\CTL$. The method of extended syntax trees was originally
formulated by Grohe and Schweikardt in \cite{linearFO} for first-order logic. 
The notion of extended syntax tree was actually inspired by the Adler-Immerman 
game, and in a certain sense these two methods are equivalent: an extended syntax
tree can be interpreted as a winning strategy for one of the players of the 
corresponding formula size game.
Both of these methods have been adapted to a large number of modal languages,
including epistemic logic \cite{epistemic}, multimodal logics with union and intersection
operators on modalities \cite{multimodalsuccinctness} and modal logic with contingency
operators \cite{syntaxtrees}.

The basic idea of the Adler-Immerman game is that one of the players, S (spoiler), 
tries to show that two sets of pointed models $\AA$ and $\BB$ can be separated by
a formula of size $n$, while the other player, D (duplicator), aims to show that no 
formula of size at most $n$ suffices for this.
The moves that S makes in the game reflect directly the logical operators in a formula
that is supposed to separate the sets $\AA$ and $\BB$. Any pair $(\sigma,\delta)$ of 
strategies for the players S and D produces a finite game tree $T_{\sigma,\delta}$, and 
S wins this play if the size of $T_{\sigma,\delta}$ is at most $n$.
The strategy $\sigma$ is a winning strategy for S if using it, S wins every play of the game. 
If this is the case, then there is a formula of size at most $n$ that separates the sets, 
and this formula can actually be read from the strategy $\sigma$.

A peculiar feature of the Adler-Immerman game is that the second player, duplicator, 
can be completely eliminated from it. This is because D has an optimal strategy $\delta_{\max}$,
which is to always choose the maximal allowed answer; this strategy guarantees that 
the size of the tree $T_{\sigma,\delta}$ is as large as possible.  Thus, in this sense 
the Adler-Immerman game is not a genuine two-person game, but rather a one-person
game. Extended syntax trees, on the other hand, do away with the game aspect entirely.

In the present paper, we propose another type of formula size game for modal logic.
Our game is a natural adaptation of the game introduced by Hella and V\"a\"an\"anen 
\cite{formulasize} for propositional logic and first-order logic. 
The basic setting in our game is the same as in the Adler-Immerman game: there are
two players, S and D, and two sets of structures that S claims can
be separated by a formula of some given size. The crucial difference is that in our
game we define positions to be tuples $(k,\AA,\BB)$ instead of just pairs $(\AA,\BB)$
of sets of structures, where $k$ is a parameter referring to the number of 
modal operators and binary connectives in a formula. In each move S has to decrease
the parameter~$k$. The game ends when the players reach a
position $(k^*,\AA^*,\BB^*)$ such that either there is a literal separating 
$\AA^*$ and $\BB^*$, or S cannot make any moves because
$k^*=0$. In the former case, S wins the play; otherwise D wins.

Thus, in contrast to the Adler-Immerman game, to determine the winner in our game 
it suffices to consider a single ``leaf-node"
$(k^*,\AA^*,\BB^*)$ of the game tree. This also means that our game is a real two-person 
game: the final position $(k^*,\AA^*,\BB^*)$ of a play depends on the moves of D, and
there is no simple optimal strategy for D that could be used for eliminating the role of D
in the game. 

We believe that our game is more intuitive and thus, in some cases it may be easier to use than
the Adler-Immerman game. On the other hand, it should be remarked that the two games are 
essentially equivalent: The moves corresponding to connectives and modal operators
are the same in both games (when restricting to the sets $\AA$ and $\BB$ in a position
$(k,\AA,\BB)$). Hence, in principle, it is possible to translate a winning strategy in one
of the games to a corresponding winning strategy in the other. 

Additionally, we introduce a formula size game for modal $\mu$-calculus.
This game is obtained by adapting the formula size game of modal logic 
to the setting with fixed point operators $\mu$ and $\nu$. A new challenge
in defining such a game is that if S uses a fixed point $\eta X$ ($\eta\in\{\mu,\nu\}$)
as the logical operator in his move, and later uses the corresponding 
variable $X$, then in the next round, the game has to return to the
subformula that follows $\eta X$. This means that the play may become infinite,
and defining the correct winning condition for infinite plays is complicated.
We solve this problem by adding ordinal clocks to the pointed Kripke models in the 
sets $\AA$ and $\BB$. The idea is that the the ordinals corresponding to 
a fixed point variable $X$ decrease each time the game returns to an earlier 
formula from a position with label $X$. This, in conjunction with keeping the 
sets $\AA$ and $\BB$ always finite, guarantees that every play of
the game is finite. The idea of using ordinal clocks is also used in
\cite{boundedGTS} to define finite semantic games for 
$\Log_\mu$.

We illustrate the use of our games by proving two non-elementary succinctness gaps,
one between first-order logic $\FO$ and (basic) modal logic $\ML$ and the other between $\FO$ and modal $\mu$-calculus $\L_\mu$. More precisely, we define 
a bisimulation invariant property of pointed Kripke-models by a first-order formula 
of linear size, and show that this property cannot be defined by any $\ML$- or $\L_\mu$-formula
of size less than the exponential tower of height  $n-1$. 
Furthermore, we show that the same property of pointed Kripke-models is already
definable by a formula of size $\ordo(2^n)$ in a version $\ML^2$ of $2$-dimensional 
modal logic. Hence the same non-elementary succinctness result holds for $\ML^2$
over $\ML$.

A similar gap between $\FO$ and temporal logic follows from a construction in the PhD thesis
\cite{stockmeyer} of Stockmeyer. He proved that the satisfiability problem of $\FO$
over words is of non-elementary complexity. Etessami and Wilke \cite{untilhierarchy}
observed that from Stockmeyer's proof it is possible to extract $\FO$-formulas of size 
$\ordo(n)$ whose smallest models are words of length non-elementary in $n$.  
On the other hand, it is well known that any satisfiable formula of temporal logic
has a model of size $\ordo(2^n)$, where $n$ is the size of the formula. Another result related to ours can be found in \cite{otto}, where Otto shows that $\FO$ is exponentially more succinct than $\ML$ by relating the modal depth of the $\ML$-formula to the quantifier rank of the $\FO$-formula. In contrast to this, our proof relies entirely on the number of disjunctions and conjunctions in the $\ML$-formula.

For modal $\mu$-calculus, the literature regarding succinctness is scarcer. In \cite{GroSch2004} Grohe and Schweikardt show several succinctness gaps between monadic second-order logics, many with fixed points. They use automata-theoretic techniques and cite a non-elementary succinctness gap between MSO and $\L_\mu$ as well-known\footnote{We were unable to find a source in the literature for this result but we are reasonably convinced that a non-elementary gap already between $\FO$ and $\L_\mu$ is a new result.}.

The structure of the paper is as follows. In Section 2 we present the logics used in the paper, fix some notation and define our notion of formula size. In Section 3 we present the formula size game for $\ML$ and show some basic results for it. Section 4 is dedicated to the non-elementary succinctness gap between $\FO$ and $\ML$ and all necessary definitions and lemmas to prove it. In Section 5 we define the formula size game for $\L_\mu$ and show basic results. Finally, in Section 6 we show the non-elementary succinctness of $\FO$ over $\L_\mu$. Section 7 is the conclusion.

The work on modal logic was previously published in the conference paper \cite{DBLP:conf/aiml/HellaV16}. This version has some minor changes to the modal logic part and the sections on modal $\mu$-calculus are completely new.

\section{Preliminaries}

In this section we fix some notation, define the syntax and semantics of basic modal logic and modal $\mu$-calculus, and define our notions of formula size.
For a detailed account on the logics used in the paper, we refer to the textbook
\cite{blackburn} of Blackburn, de Rijke and Venema for basic modal logic and \cite{BRADFIELD2007721} for the modal $\mu$-calculus.

\subsection*{Basic modal logic and first-order logic}

Let $\Prop$ be an infinite set of proposition symbols and let $\Phi \subseteq \Prop$. Let $\M = (W, R, V)$, where $W$ is a set, $R \subseteq W \times W$ and $V : \Phi \to \powerset(W)$, and let $w \in W$. The structure $(\M, w)$ is called a \emph{pointed Kripke-model for $\Phi$}. 

Let $(\M, w)$ be a pointed Kripke-model. We use the notation
\[
\seur{\M}{w} := \{ (\M, v) \mid v \in W, wR^\M v\}.
\]
If $\AA$ is a set of pointed Kripke-models, we use the notation
\[
\kaikki\AA := \bigcup\limits_{(\M, w) \in \AA} \seur{\M}{w}.
\]
Furthermore, if $f$ is a function $f : \AA \to \kaikki\AA$ such that $f(\M, w) \in \seur{\M}{w}$ for every $(\M, w) \in \AA$, then we use the notation
\[
\di_f\AA := f(\AA).
\]

Intuitively $\bo(\M, w)$ is the set of all successor models of $(\M, w)$, $\bo\AA$ is the collection of all successor models of all models $(\M, w) \in \AA$ and $\di_f \AA$ consists of one successor for each model in $\AA$, where the successors are given by the function $f$.
We now define the syntax and semantics of basic modal logic for pointed models. 

Let $\Phi \subseteq \Prop$. The set of formulas of 
~$\ML(\Phi)$
is generated by the following grammar
\[
\varphi := \top \mid \bot \mid p \mid \neg p \mid (\varphi \land \varphi ) \mid ( \varphi 
\lor \varphi )
\mid \di \varphi \mid \bo \varphi,
\]
where $p \in \Phi$.

As is apparent from the definition of the syntax, we assume that all $\ML$-formulas are in negation normal form. This is useful for the formula size game that we introduce in the next section.

The satisfaction relation $(\M, w) \vDash \phi$ between pointed Kripke-models $(\M, w)$ and
$\ML(\Phi)$-formulas $\phi$ is defined as follows:
\begin{enumerate}[(1)]
\item $(\M, w) \vDash \top$ for all $(\M, w)$, and $(\M, w) \nvDash \bot$ for all $(\M, w)$,
\item $(\M, w) \vDash p \Leftrightarrow w \in V(p)$, and $(\M, w) \vDash \neg p \Leftrightarrow w \notin V(p)$,
\item $(\M, w) \vDash (\phi \land \psi) \Leftrightarrow (\M, w) \vDash \phi$ and $(\M, w) \vDash \psi$,
\item $(\M, w) \vDash (\phi \lor \psi) \Leftrightarrow (\M, w) \vDash \phi$ or $(\M, w) \vDash \psi$,
\item $(\M, w) \vDash \di\phi \Leftrightarrow$ there is $(\M, v) \in \seur{\M}{w}$ such that $(\M, v) \vDash \phi$,
\item $(\M, w) \vDash \bo\phi \Leftrightarrow$ for every $(\M, v) \in \seur{\M}{w}$ it holds that $(\M, v) \vDash \phi$.
\end{enumerate}
Furthermore, if $\AA$ is a class of pointed Kripke-models, then 
\[
\AA \vDash \phi \Leftrightarrow (\A, w) \vDash \phi\text{ for every }(\A, w) \in \AA.
\]
For the sake of convenience we also use the notation
\[
\AA \vDash \neg\phi \Leftrightarrow (\A, w) \nvDash \phi\text{ for every }(\A, w) \in \AA.
\]
Note that this is only a notational convention as $\neg\phi$ is not in negation normal form and as such is generally not a formula in our syntax.

In Section \ref{nelonen}, we consider the case $\Phi=\emptyset$. In this case the only available literals are the constants $\top$ and $\bot$, which are always true or false respectively.

The syntax and semantics for first-order logic are defined in the standard way. 
Each $\ML$-formula $\phi$ defines a class $\MM(\phi)$ of pointed Kripke-models:
\[
	\MM(\phi):=\{(\M,w)\mid (\M, w) \vDash \phi\}.
\] 
In the same way, any $\FO$-formula $\psi(x)$ in the vocabulary consisting of the accessibility
relation symbol $R$ and unary relation symbols $U_p$ for $p\in\Phi$ defines
a class $\MM(\psi)$ of pointed Kripke-models:
\[
	\MM(\psi):=\{(\M,w)\mid \M \vDash \psi[w/x]\}.
\] 
The formulas $\phi\in\ML$ and $\psi(x)\in\FO$ are \emph{equivalent} if $\MM(\phi)=\MM(\psi)$.

The well-known link between $\ML$ and $\FO$ is the following theorem.

\begin{theorem}[van Benthem Characterization Theorem]
A first-order formula $\psi(x)$ is equivalent to some formula in $\ML$ if and only if $\MM(\psi)$ is bisimulation invariant.
\end{theorem}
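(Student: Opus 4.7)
The plan is to prove the two directions separately. The forward implication (``only if'') amounts to showing that every $\ML$-formula is preserved under bisimulation. I would verify this by a routine induction on the structure of $\phi \in \ML(\Phi)$: for literals and the constants $\top,\bot$ the claim follows from the atomic clause of bisimulation, for $\wedge$ and $\vee$ it is immediate, and for $\di\phi$ and $\bo\phi$ the forth and back conditions, respectively, supply the inductive step.

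The reverse implication is the substantive part. Given a bisimulation invariant first-order formula $\psi(x)$, I would show it is equivalent to some $\ML$-formula by the classical compactness-plus-saturation argument. Let $T_\psi := \{\phi \in \ML(\Phi_\psi) \mid \MM(\psi) \vDash \phi\}$, where $\Phi_\psi$ is the finite set of proposition symbols occurring in $\psi$. The crux is to establish $T_\psi \vDash \psi$: once this holds, compactness yields a finite subset $\{\phi_1,\ldots,\phi_n\} \subseteq T_\psi$ with $\phi_1 \wedge \cdots \wedge \phi_n \vDash \psi$, and since each $\phi_i$ is also a consequence of $\psi$, the $\ML$-formula $\phi_1 \wedge \cdots \wedge \phi_n$ is equivalent to $\psi$.

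To prove $T_\psi \vDash \psi$, I would fix a pointed Kripke model $(\M,w) \vDash T_\psi$ and argue that $(\M,w) \vDash \psi$. A first application of compactness shows that $\mathrm{Th}_{\ML}(\M,w) \cup \{\psi\}$ is satisfiable: if it were not, a finite conjunction $\phi_1 \wedge \cdots \wedge \phi_n$ of $\ML$-formulas true at $(\M,w)$ would entail $\neg\psi$, and the negation normal form of $\neg(\phi_1 \wedge \cdots \wedge \phi_n)$ would be an $\ML$-formula in $T_\psi$ failing at $(\M,w)$, contradicting $(\M,w) \vDash T_\psi$. Pick a model $(\mathcal{N},v)$ of this combined theory; then $(\M,w)$ and $(\mathcal{N},v)$ satisfy the same $\ML$-formulas and $(\mathcal{N},v) \vDash \psi$. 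Passing to $\omega$-saturated elementary extensions $(\M^\ast,w)$ and $(\mathcal{N}^\ast,v)$ preserves both modal equivalence and first-order truth.

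The technical heart of the argument, and the step I expect to be the main obstacle, is the Hennessy--Milner-style lemma that modally equivalent $\omega$-saturated pointed Kripke models are bisimilar. I would prove it by taking modal equivalence itself as the candidate bisimulation and verifying the forth/back conditions via saturation: the set of $\ML$-formulas satisfied by the successors of a point forms a finitely satisfiable type, so $\omega$-saturation ensures a matching successor in the other model. With bisimilarity of $(\M^\ast,w)$ and $(\mathcal{N}^\ast,v)$ in hand, bisimulation invariance of $\psi$ gives $(\M^\ast,w) \vDash \psi$, and elementarity transfers this back to $(\M,w)$, completing the argument.
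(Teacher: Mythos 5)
The paper does not prove this statement at all: it is quoted as the classical van Benthem Characterization Theorem, used as a black box (via Corollary~\ref{Rosen}) and implicitly referred to the textbook of Blackburn, de Rijke and Venema. So there is nothing in the paper to compare your argument against line by line. That said, your proposal is the standard and correct proof of the theorem. The easy direction by induction on negation normal form is routine (note that the $\bo$-case uses the \emph{back} condition on the side of falsity and the \emph{forth} condition on the side of truth, but this is symmetric and unproblematic). The hard direction is exactly the classical compactness-plus-saturation detour: define the set $T_\psi$ of modal consequences of $\psi$, reduce to showing $T_\psi\vDash\psi$, produce a modally equivalent model of $\psi$, pass to $\omega$-saturated elementary extensions, and invoke the Hennessy--Milner property for $\omega$-saturated models. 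Two small points you gloss over but which are standard: the existence of $\omega$-saturated elementary extensions (ultrapowers over a countably incomplete ultrafilter, or elementary chains), and the fact that although the paper's $\ML$ is officially in negation normal form, it is still semantically closed under negation, which is what you need both when you place the NNF of $\neg(\phi_1\wedge\cdots\wedge\phi_n)$ into $T_\psi$ and when you argue that realizing the modal type of a successor yields a \emph{modally equivalent} successor rather than merely one satisfying a set of formulas. Restricting to the finite vocabulary $\Phi_\psi$ is the right move and matches the paper's convention that $\psi$ mentions only finitely many unary predicates. In short: the proof is correct; it simply supplies the standard argument for a result the paper takes as given.
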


If a property of pointed Kripke-models is $n$-bisimulation invariant for some $n\in\NN$, then it is also bisimulation invariant.  Thus, $\FO$-definability and $n$-bisimulation invariance imply 
$\ML$-definability for any property of pointed Kripke-models. We will use this version of 
van Benthem's characterization in Section \ref{sec-property} for showing that a certain property
is $\ML$-definable.
For the sake of easier reading, we give here the definition of $n$-bisimulation.

\begin{definition}
Let $(\M, w)$ and $(\M', w')$ be pointed $\Phi$-models. We say that $(\M, w)$ and $(\M', w')$ are \emph{$n$-bisimilar}, $(\M, w) \bisim_n (\M', w')$, if there
are binary relations $Z_n \subseteq \dots \subseteq Z_0$ such that for every $0 \leq i \leq n-1$ we have
\begin{enumerate}[(1)]
\item $(\M,w)Z_n(\M', w')$,
\item if $(\M, v)Z_0(\M',v')$, then $(\M, v) \vDash p \Leftrightarrow (\M', v') \vDash p$ for each $p \in \Phi$,
\item if $(\M,v) Z_{i+1} (\M',v')$ and $(\M, u) \in \seur{\M}{v}$ then there is $(\M', u') \in \seur{\M'}{v'}$ such that $(\M, u) Z_i (\M', u')$,
\item if $(\M,v) Z_{i+1} (\M',v')$ and $(\M', u') \in \seur{\M'}{v'}$ then there is $(\M, u) \in \seur{\M}{v}$ such that $(\M, u) Z_i (\M', u')$.
\end{enumerate}
\end{definition}

It is well known that if $\Phi$ is finite, two pointed
$\Phi$-models are $n$-bisimilar if and only if they are equivalent with respect to $\ML(\Phi)$-formulas of modal depth at most $n$.

\subsection*{Modal $\mu$-calculus}

Let $\Phi \subseteq \Prop$ and let $\Var$ be an infinite set of variables. The syntax of the modal $\mu$-calculus $\L_\mu(\Phi)$ is given by the grammar:
\[
\phi ::= \top \mid \bot \mid p \mid \neg p \mid (\phi \lor \phi) \mid (\phi \land \phi) \mid \di \phi \mid \bo \phi \mid X \mid \mu X.\phi \mid \nu X.\phi,
\]
where $p \in \Phi$ and $X \in \Var$. Note that all formulas are again in negation normal form. We additionally assume for simplicity that variables of different fixed points are distinct.


Truth of formulas of $\L_\mu(\Phi)$ is, like $\ML$, evaluated on pointed Kripke models $(\M, w)$, where $\M = (W, R, V)$. Let $\phi \in \L_\mu(\Phi)$ and let $\rho : \Var \to \powerset(W)$ be a valuation of variables. We define truth relation $(\M, w) \vDash_\rho \phi$ between pointed models and $\L_\mu(\Phi)$-formulas. Let $\|\phi\|_\rho := \{w \in W \mid (\M, w) \vDash_\rho \phi\}$ and let $\Gamma_{\phi, \rho} : \powerset(W) \to \powerset(W)$ be an operator which maps $W'$ to $\|\phi\|_{\rho[W'/X]}$. The notation $\mathrm{LFP}$ stands for least fixed point of an operator and $\mathrm{GFP}$ for greatest fixed point. Since variables only occur positively in fixed point formulas, $\Gamma_{\phi, \rho}$ is a monotone operator. By the Tarski-Knaster Theorem, the least and greatest fixed points of such a monotone operator always exist. The recursive definition of $\vDash_\rho$ is as follows:
\begin{itemize}
	\item $(\M, w) \vDash_\rho p \Leftrightarrow w \in V(p)$,
	\item $(\M, w) \vDash_\rho X \Leftrightarrow w \in \rho(X)$,
	\item $(\M, w) \vDash_\rho (\phi \lor \psi) \Leftrightarrow (\M, w) \vDash_\rho \phi \text{ or } (\M, w) \vDash_\rho \psi$,
	\item $(\M, w) \vDash_\rho (\phi \land \psi) \Leftrightarrow (\M, w) \vDash_\rho \phi \text{ and } (\M, w) \vDash_\rho \psi$,
	\item $(\M, w) \vDash_\rho \di\phi \Leftrightarrow$ there is $(\M, v) \in \seur{\M}{w}$ such that $(\M, v) \vDash_\rho \phi$,
	\item $(\M, w) \vDash_\rho \bo\phi \Leftrightarrow$ for every $(\M, v) \in \seur{\M}{w}$ it holds that $(\M, v) \vDash_\rho \phi$,
	\item $(\M, w) \vDash_\rho \mu X.\phi \Leftrightarrow w \in \mathrm{LFP}(\Gamma_{\phi, \rho})$,
	\item $(\M, w) \vDash_\rho \nu X.\phi \Leftrightarrow w \in \mathrm{GFP}(\Gamma_{\phi, \rho})$.
\end{itemize}

\subsection*{Formula size}

We define notions of formula size for $\ML$, $\L_\mu$ and $\FO$. Note that many different notions are called formula size in the literature and our notion is close to the length of the formula as a string rather than, say, the DAG-size\footnote{The DAG-size of a formula $\phi$
is the number of edges of the syntactic structure of $\phi$ in the form of a directed acyclic graph. Thus since the fan-out in the DAG is at most two, the DAG-size is at most two times the number of subformulas of $\phi$.}of it.

\begin{definition}\label{mlsize}
	The \emph{size} of a formula $\phi \in \ML$, denoted $\size(\phi)$, is defined recursively as follows:
	\begin{enumerate}[(1)]
		\item If $\phi$ is a literal, then $\size(\phi) = 1$.
		\item If $\phi = \psi \lor \vartheta$ or $\phi = \psi \land \vartheta$, then $\size(\phi) = \size(\psi) + \size(\vartheta) + 1$.
		\item If $\phi = \di\psi$ or $\phi = \bo\psi$, then $\size(\phi) = \size(\psi) + 1$.
	\end{enumerate}
\end{definition}

\begin{definition}
	The \emph{size} of a formula $\phi \in \L_\mu$, denoted $\size(\phi)$, is defined recursively as follows:
	\begin{enumerate}[(1)]
		\item $\size(l) = \size(X) = 1$, where $l$ is a literal and $X$ is a variable,
		\item $\size(\phi \lor \psi) = \size(\phi \land \psi) = \size(\phi) + \size(\psi) + 1$,
		\item $\size(\di \phi) = \size(\bo \phi) = \size(\mu X.\phi) = \size(\nu X.\phi) = \size(\phi) + 1$.
	\end{enumerate}
\end{definition}

The size of a formula is essentially its length as a string. Note however, that we do not count negations as we view them as parts of literals. Another aspect worth mentioning is the size of descriptions of proposition symbols. If we have an infinite set of propositions, the size of the encoding of each symbol in a fixed size vocabulary necessarily grows logarithmically. Here we consider all propositions to be of size one.

Similarly we define formula size for $\FO$ to be the number of binary connectives, quantifiers and literals in the formula. In general this could lead to an arbitrarily
large difference between formula size and actual string length. For example if $f$ is a unary function symbol, then atomic formulas of the form $f(x) = x$, $f(f(x)) = x$ and so on, all have size 1. In this paper however, we only consider formulas with one binary relation so this is not an issue.

\begin{definition}
The \emph{size} of a formula $\phi \in \FO$, denoted by $\s(\phi)$, is defined recursively as follows:
\begin{enumerate}[(1)]
\item If $\phi$ is a literal, then $\s(\phi) = 1$.
\item If $\phi = \neg\psi$, then $\s(\phi) = \s(\psi)$.
\item If $\phi = \psi \lor \vartheta$ or $\phi = \psi \land \vartheta$, then $\s(\phi) = \s(\psi) + \s(\vartheta) + 1$.
\item If $\phi = \exists x\psi$ or $\phi = \forall x \psi$, then $\s(\phi) = \s(\psi) + 1$.
\end{enumerate}
\end{definition}

To refer to some rather large formula sizes we need the exponential tower function.

\begin{definition}
We define the function $\tower : \Nset \to \Nset$ recursively as follows:
\begin{align*}
\tower(0) &= 1 \\
\tower(n+1) &= 2^{\tower(n)}.
\end{align*}
We will also use in the sequel the binary logarithm function, denoted by $\log$.
\end{definition}

\subsection*{Separating classes by formulas}

The definitions of the formula size games in sections \ref{mlgamesection} and \ref{mugamesection} are based on the notion
of separating classes of pointed Kripke-models by formulas. Recall that by the notation $\BB \vDash \neg \phi$ we mean that for every model $(\B, w) \in \BB$, we have $(\B, w) \nvDash \phi$. As formulas of $\ML$ are also in $\L_\mu$, we only define the following for $\L_\mu$ and $\FO$.

\begin{definition}
Let $\AA$ and $\BB$ be classes of pointed Kripke-models. 
\\
(a) We say that a formula $\phi \in \L_\mu$ \emph{separates $\AA$ from $\BB$} if 
$\AA \vDash \phi$ and $\BB \vDash \neg\phi$. 
\\
(b) Similarly, a formula $\psi(x) \in \FO$ separates $\AA$ from $\BB$ if  
for all $(\M,w)\in \AA$, $ \M\vDash \psi[w/x]$ and for all $(\M,w)\in \BB$, $ \M\vDash \neg\psi[w/x]$.
\end{definition}

In other words, a formula $\phi\in \L_\mu$ separates $\AA$
from $\BB$ if $\AA\subseteq\MM(\phi)$ and $\BB\subseteq \overline{\MM(\phi)}$, where
$\overline{\MM(\phi)}$ is the complement of $\MM(\phi)$.

\section{The formula size game for ML}\label{mlgamesection}

As in the Adler-Immerman game, the basic idea in our formula size game is that
there are two players, S (Samson) and D (Delilah), who play on a pair $(\AA,\BB)$ of two
sets of pointed Kripke-models. The aim of S is to show that $\AA$ and $\BB$
can be separated by a formula with size
at most $k$, while D tries to refute this. The moves of S reflect the 
connectives and modal operators of a formula that is supposed to separate the sets. 

The crucial difference between our game and the Adler-Immerman game is that we define positions in the game to be tuples $(k,\AA,\BB)$ instead of just pairs $(\AA,\BB)$. As in the A-I game, D chooses for connective moves, which branch she would like to see played next. However, our game never returns to the branch not chosen, so D has a genuine choice to make. The winning condition of our game is based on a natural property of single positions instead of the size of the entire game tree. 

We give now the precise definition of our game.

\begin{definition}\label{peli}
Let $\AA_0$ and $\BB_0$ be sets of pointed $\Phi$-Kripke-models and let $k_0\in\NN$. \emph{The formula size game between the sets $\AA_0$ and $\BB_0$}, denoted
$\EF{k_0}(\AA_0, \BB_0)$, has two players, S and D. The number $k_0$ is the \emph{resource parameter} of the game. The
starting position of the game is $(k_0,\AA_0, \BB_0)$. Let the position after $n$ moves be $(k, \AA, \BB)$. If $k = 0$, D wins the game. If $k > 0$, S has the following five moves to
choose from:
\begin{itemize}
\item \emph{$\lor$-move}: First, S chooses natural numbers $k_1$ and $k_2$ and sets $\AA_1$ and $\AA_2$ such that 
$k_1+k_2+1 = k$ and $\AA_1 \cup \AA_2 = \AA$. Then D decides whether the game continues from the position $(k_1, \AA_1, \BB)$ or the position 
$(k_2, \AA_2, \BB)$.

\item \emph{$\land$-move}: First, S chooses natural numbers $k_1$ and $k_2$ and sets $\BB_1$ and $\BB_2$ such that 
$k_1+k_2+1 = k$ and $\BB_1 \cup \BB_2 = \BB$. Then D decides whether the game continues from the position
$(k_1, \AA, \BB_1)$ or the position $(k_2, \AA, \BB_2)$.

\item \emph{$\di$-move}: S chooses a function $f: \AA \to \kaikki\AA$ such that $f(\A, w) \in \seur{\A}{w}$ for all $(\A, w) \in \AA$
and the game continues from the position $(k-1, \di_f\AA, \kaikki\BB)$. 

\item \emph{$\bo$-move}: S chooses a function $g: \BB \to \kaikki\BB$ such that $g(\B, w) \in \seur{\B}{w}$ for all $(\B, w) \in
\BB$ and the game continues from the position $(k-1, \kaikki\AA, \di_g\BB)$. 

\item \emph{$\Lit$-move}: S chooses a literal $l \in \Lit(\Phi)$. If $l$ separates the sets $\AA$ and $\BB$, S wins. Otherwise D wins.
\end{itemize}
\end{definition}

Since D wins if $k$ runs out, the parameter $k$ can be thought of as a resource of S that she spends on connectives and literals. In addition if there is a model $(\M, w) \in \AA$ (or $\BB$) for which $\bo(\M, w) = \emptyset$, then S cannot make a $\di$- (or $\bo$-)move. 

We prove that the formula size game indeed characterizes the separation of 
two sets of pointed Kripke-models by a formula of a given size.

\begin{theorem}\label{peruslause}
Let $\AA$ and $\BB$ be sets of pointed $\Phi$-models and let $k$ be natural number. Then the following conditions are equivalent:
\begin{enumerate}[11111]
\item[$\win_k$] S has a winning strategy in the game $\EF{k}(\AA, \BB)$.
\item[$\sep_k$] There is a formula $\phi \in \ML(\Phi)$ such that $\s(\phi) \le k$ and the formula $\phi$ separates $\AA$ from $\BB$.
\end{enumerate}
\end{theorem}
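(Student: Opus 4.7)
The plan is to prove the two directions by induction on the resource parameter $k$, exploiting the tight correspondence between the five moves available to S and the syntactic constructors in Definition \ref{mlsize}.

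For the direction $\sep_k \Rightarrow \win_k$, I would take a separating formula $\phi$ with $\s(\phi)\le k$ and describe S's strategy by recursion on $\phi$. If $\phi$ is a literal, S plays the $\Lit$-move with $\phi$ and wins immediately. If $\phi = \psi_1 \lor \psi_2$, then since $\AA\vDash \phi$ I split $\AA$ into $\AA_i := \{(\A,w)\in\AA \mid (\A,w)\vDash \psi_i\}$ (so $\AA_1\cup\AA_2 = \AA$), and note that $\BB\vDash\neg\psi_i$ for both $i$ because $\BB\vDash\neg\phi$; S plays the $\lor$-move with the partition and resources $k_i = \s(\psi_i)$ (padded if needed so that $k_1+k_2+1=k$), and the induction hypothesis gives a winning response whichever branch D chooses. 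The $\land$-case is dual, splitting $\BB$ instead. For $\phi = \di\psi$, for each $(\A,w)\in\AA$ pick a witnessing successor $f(\A,w)\vDash\psi$; then $\di_f\AA\vDash\psi$, and because $\BB\vDash\neg\di\psi$, every successor of every $(\B,w)\in\BB$ satisfies $\neg\psi$, i.e.\ $\bo\BB\vDash\neg\psi$. S plays the $\di$-move with $f$ and applies the induction hypothesis to $\psi$ (of size $k-1$) on the position $(k-1,\di_f\AA,\bo\BB)$. The $\bo$-case is dual.

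For the converse $\win_k \Rightarrow \sep_k$, I would do induction on $k$ and read the separating formula directly off a winning strategy $\sigma$ for S. In the base case $k=0$, $\sigma$ cannot exist unless already the initial position has $\AA=\emptyset$ or $\BB=\emptyset$; wait, actually for $k=0$ S loses by definition, so this case cannot arise and the implication is vacuous when no strategy exists. So the real base step is the $\Lit$-move: if $\sigma$ prescribes the literal $\lit$, then by the winning condition $\lit$ separates $\AA$ from $\BB$, and $\s(\lit)=1\le k$. For the inductive step, suppose $\sigma$'s first move is a $\lor$-move with data $(k_1,k_2,\AA_1,\AA_2)$. Since D can choose either branch, $\sigma$ is winning in both positions $(k_i,\AA_i,\BB)$; the induction hypothesis produces separating formulas $\psi_i\in\ML(\Phi)$ with $\s(\psi_i)\le k_i$. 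Then $\psi_1\lor\psi_2$ separates $\AA_1\cup\AA_2=\AA$ from $\BB$ and has size $\le k_1+k_2+1=k$. The $\land$-case is symmetric. For a $\di$-move with $f$, the induction hypothesis yields $\psi$ of size $\le k-1$ separating $\di_f\AA$ from $\bo\BB$; then $\di\psi$ separates $\AA$ from $\BB$, because each $(\A,w)\in\AA$ has the witness $f(\A,w)\vDash\psi$, while for every $(\B,w)\in\BB$ all successors lie in $\bo\BB$ and hence falsify $\psi$, so $(\B,w)\nvDash\di\psi$. The $\bo$-case is dual.

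The induction is well-founded because every non-terminal move of S strictly decreases $k$: the binary moves produce subpositions with resource at most $k-1$, and the modal moves produce a subposition with resource exactly $k-1$. The only mildly subtle step is ensuring in the $\di$/$\bo$ cases that expanding the \emph{opposite} side to $\bo\BB$ (resp.\ $\bo\AA$) is precisely what is needed, since negating a diamond introduces a universal quantification over successors; this is the substantive content of the definition of these moves and must be matched exactly by the dual semantic conditions on $\BB$ (resp.\ $\AA$). No further obstacle is expected: the theorem is essentially a soundness/completeness pairing between the strategy tree and the formula parse tree, and the only thing one has to be careful about is bookkeeping of the resource accounting $k_1+k_2+1=k$ versus $\s(\psi_1)+\s(\psi_2)+1$, which is why Definition \ref{mlsize} was set up with the $+1$ for binary connectives.
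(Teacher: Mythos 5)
Your proposal is correct and follows essentially the same route as the paper's proof: induction on $k$, with the direction $\sep_k \Rightarrow \win_k$ building S's strategy by recursion on the separating formula (splitting $\AA$ by which disjunct is satisfied, splitting $\BB$ by which conjunct fails, choosing witness functions for $\di$/$\bo$), and the converse reading the formula off the first move of a winning strategy with the same resource bookkeeping $k_1+k_2+1=k$. The only cosmetic difference is that the paper anchors the induction at $k=1$ rather than treating $k=0$ as vacuous, which changes nothing of substance.
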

\begin{proof}
The proof proceeds by induction on the number $k$. First assume $k = 1$. If S makes any non-literal move, D wins since $k = 0$ in the following position. So the only possibility for a winning strategy is a literal move. There is a winning literal move if and only if there is a literal which separates $\AA_0$ from $\BB_0$. Thus $\win_1 \Leftrightarrow \sep_1$.

Suppose then that $k > 1$ and $\win_l \Leftrightarrow \sep_l$ for all $l < k$. Assume first that $\win_{k}$ holds.
Consider the following cases according to the first move in the winning strategy of S. For $\lor$- and $\land$-moves we use the index $i$ to always mean $i \in \{1,2\}$.
\begin{enumerate}[(a)]
\item Assume the first move of the winning strategy is a literal move and $\phi$ is the literal chosen by S. Then $\phi$ separates $\AA$ and $\BB$ and $\s(\phi) = 1$ so $\sep_{k}$ trivially holds.

\item Assume that the first move of the winning strategy of S is a $\lor$-move choosing numbers $k_1, k_2 \in \Nset$ such that $k_1+k_2+1=k$, and sets $\AA_1, \AA_2 \subseteq \AA$ such that $\AA_1 \cup \AA_2 = \AA$. Since this move is given by a winning strategy, S has a
winning strategy for both possible continuations of the game, $(k_1, \AA_1, \BB)$ and $(k_2, \AA_2, \BB)$. Since $k_i < k$, by induction hypothesis there is a formula $\psi_i$ such that $\s(\psi_i) \le k_i$ and $\psi_i$ separates 
$\AA_i$ from $\BB$. Thus $\AA_i \vDash \psi_i$ so $\AA \vDash \psi_1 \lor \psi_2$. On the other hand $\BB \vDash \neg\psi_1$ and $\BB \vDash
\neg\psi_2$ so $\BB \vDash \neg(\psi_1 \lor \psi_2)$. Therefore the formula $\psi_1 \lor \psi_2$ separates $\AA$ from $\BB$. In addition
$\s(\psi_1 \lor \psi_2) = \s(\psi_1) + \s(\psi_2) + 1 \le k_1+k_2+1 = k$ so
$\sep_{k}$ holds.

\item The case in which the first move of the winning strategy of S is a right splitting move
is proved in the same way as the previous one, with the roles of $\AA$ and $\BB$
switched, and disjunction replaced by conjunction.

\item Assume that the first move of the winning strategy of S is a $\di$-move choosing a function $f: \AA \to \kaikki\AA$ such that $f(\A, w) \in \seur{\A}{w}$ for all
 $(\A, w) \in \AA$. The game continues from the position $(k-1, \di_f\AA, \kaikki\BB)$ and S has a winning strategy from this position. By induction 
hypothesis there is a formula $\psi$ such that $\s(\psi) \le k-1$ and $\psi$ separates $\di_f\AA$ from $\kaikki\BB$. Now for every $(\A, w)
\in \AA$ we have $f(\A, w) \in \seur{\A}{w}$ and $f(\A, w) \vDash \psi$. Therefore $\AA \vDash \di\psi$. On the other hand $\kaikki\BB \vDash \neg\psi$ so for every $(\B, w) \in \BB$ and every 
$(\B, v) \in \seur{\B}{w}$ we have $(\B, v) \nvDash \psi$. Thus $\BB \vDash \neg\di\psi$. So the formula $\di\psi$ 
separates $\AA$ from $\BB$ and since $\s(\di\psi) = \s(\psi) + 1 \le k$, $\sep_{k}$ holds.

\item The case in which the first move of the winning strategy of S is a right successor move
is similar to the case of left successor move. It suffices to switch the classes $\AA$ and $\BB$,
and replace $\di$ with $\bo$. 

\end{enumerate}

Now assume $\sep_{k}$ holds, and $\phi$ is the formula separating $\AA$ from $\BB$. 
We obtain a winning strategy of S 
for the game $\EF{k}(\AA, \BB)$ using 
$\phi$ as follows:
\begin{enumerate}[(a)]

\item If $\phi$ is a literal, S wins the game by making the corresponding literal move.

\item Assume that $\phi = \psi_1 \lor \psi_2$. Let $\AA_i := \{(\A, w) \in \AA \mid (\A, w) \vDash \psi_i\}$. Since $\AA \vDash \phi$ we have $\AA_1 \cup \AA_2 = \AA$. In addition, since $\BB \vDash \neg\phi$, we have $\BB \vDash \neg\psi_i$. Thus $\psi_i$ separates $\AA_i$ from $\BB$. Since $\s(\psi_1)+\s(\psi_2)+1=\s(\phi) \le k$, there are $k_1, k_2 \in \Nset$ such that $k_1+k_2+1 = k$ and $\s(\psi_i) \le k_i$.
By induction hypothesis S has winning strategies for the games $\EF{k_i}(\AA_i, \BB)$. Since $k \ge \s(\phi)
\ge 1$, S can start the game $\EF{k}(\AA, \BB)$ with a $\lor$-move choosing the numbers $k_1$ and $k_2$ and the sets $\AA_1$
and $\AA_2$. Then S wins the game by following the winning strategy for 
whichever position D chooses.

\item 
Assume that $\phi = \psi_1 \land \psi_2$. Let $\BB_1 := \{(\B, w) \in \BB \mid (\B, w) \nvDash \psi_1\}$ and $\BB_2 := \{(\B, w) \in \BB \mid (\B, w) \nvDash\psi_2$\}. Since $\BB \vDash \neg \phi$, we have $\BB_1 \cup \BB_2 = \BB$. In addition, since $\AA \vDash \phi$, we have $\AA \vDash \psi_1$ and $\AA\vDash \psi_2$. Thus $\psi_1$ separates $\AA$ from $\BB_1$ while $\psi_2$ separates $\AA$ from $\BB_2$. As in the previous case, there are $k_1, k_2 \in \Nset$ such that $k_1+k_2 = k$, $\size(\psi_1) \le k_1$ and $\size(\psi_2) \le k_2$. By induction hypothesis S has a winning strategy for the games $\EF{k}(\AA, \BB_1)$ and $\EF{k}(\AA, \BB_2)$. S wins the game $\EF{k}(\AA, \BB)$ by starting with a $\land$-move choosing the numbers $k_1$, and $k_2$ and the sets $\BB_1$ and $\BB_2$ and proceeding according to the winning strategies for the games $\EF{k}(\AA, \BB_1)$ and $\EF{k}(\AA, \BB_2)$.

\item Assume that $\phi = \di\psi$. Since $\AA \vDash \phi$, for every $(\A, w) \in \AA$ there is $(\A, v_w) \in \seur{\A}{w}$ such that $(\A, v_w) \vDash \psi$. 
We define the function $f : \AA \to \kaikki\AA$ by $f(\A, w) = (\A, v_w)$. Clearly $\di_f\AA \vDash \psi$. On the other hand $\BB \vDash \neg \phi$ so for each $(\B, w) \in \BB$ and each $(\B, v) \in \seur{\B}{w}$ we have $(\B, v) \nvDash \psi$. Therefore $\kaikki\BB \vDash \neg\psi$
and the formula $\psi$ separates $\di_f\AA$ from $\kaikki\BB$. Moreover,  $\s(\psi) = \s(\phi) - 1 \le k - 1$ so by
induction hypothesis S has a winning strategy for the game $\EF{k-1}(\di_f\AA, \kaikki\BB)$. Since $k \ge \s(\phi) \ge 1$, S can start the game 
$\EF{k}(\AA, \BB)$ with a $\di$-move choosing the function $f$. Then S wins the game by following the winning strategy for the game 
$\EF{k-1}(\di_f\AA, \kaikki\BB)$.

\item 
Assume finally that $\phi = \bo\psi$. Since $\AA \vDash \phi$, as in the previous case we obtain $\kaikki\AA \vDash \psi$. On the other hand, since $\BB \vDash \neg\phi$, for every $(\B, w) \in \BB$ there is $(\B, v_w) \in \seur{\B}{w}$ such that $(\B, v_w) \nvDash \psi$. We define the function $g: \BB \to \kaikki\BB$ by $g(\B, w) = (\B, v_w)$. Clearly $\di_g\BB \vDash \neg\psi$ so the formula $\psi$ separates the sets $\kaikki\AA$ and $\di_g\BB$. By induction hypothesis S has a winning strategy for the game $\EF{k-1}(\kaikki\AA, \di_g\BB)$. S wins the game $\EF{k}(\AA, \BB)$ by starting with a $\bo$-move choosing the function $g$ and proceeding according to the winning strategy of the game $\EF{k-1}(\kaikki\AA, \di_g\BB)$.
\end{enumerate}
\end{proof}

\begin{remark}
	In this form, the game $\EF{k}(\AA, \BB)$ tracks the size of the separating formula but with slight modifications it could track different things such as the number or nesting depth of specific operators. See e.g. the conference paper \cite{DBLP:conf/aiml/HellaV16} where the game counts propositional connectives and modal operators with two separate parameters. 
\end{remark}

Note that in Theorem~\ref{peruslause} we allow the set of proposition symbols $\Phi$ to be infinite. This is in contrast with other similar games, such as the bisimulation game and the $n$-bisimulation game. For an example let $\Phi = \{p_i \mid i \in \NN\}$ and $W = \{w\} \cup \{w_i \mid i \in \NN\}$. Furthermore let $(\A, w)$ be a pointed model, where $\dom(\A) = W$, $R^\A = \{(w, w_i) \mid i \in \NN\}$ and $V^\A(p_i) = \{w_j \mid j \ge i\}$ for each $i \in \NN$. Let $(\B, w)$ be the same model with the addition of a point $w_\NN$ in which all propositions are true. In other words $\dom(\B) = W \cup \{w_\NN\}$, $R^\B = R^\A \cup \{(b, w_\NN)\}$ and $V^\B(p_i) = V^\A(p_i) \cup \{w_\NN\}$ for each $i \in \NN$.

\begin{figure}[ht]
\centering
\begin{tikzpicture}
\node[circle, inner sep = 0pt, minimum size = 6pt, draw, very thick, label = above: {$(\A, w)$}] (a) at (-2,0) {};
\node[circle, inner sep = 0pt, minimum size = 6pt, draw, very thick, label = above: {$(\B, w)$}] (b) at (2,0) {};

\node[circle, inner sep = 0pt, minimum size = 6pt, draw, very thick, label = below: {$w_0$}] (a1) at (-3.5,-1.25) {};
\node[circle, inner sep = 0pt, minimum size = 6pt, draw, very thick, label = below: {$w_1$}] (a2) at (-2.75,-1.25) {};
\node[circle, inner sep = 0pt, minimum size = 6pt, draw, very thick, label = below: {$w_2$}] (a3) at (-2,-1.25) {};

\draw[->, thick] (a) to (a1);
\draw[->, thick] (a) to (a2);
\draw[->, thick] (a) to (a3);

\node[font = \Large] (apisteet) at (-1.375,-1.25) {$\cdots$};

\node[circle, inner sep = 0pt, minimum size = 6pt, draw, very thick, label = below: {$w_0$}] (b1) at (0.5,-1.25) {};
\node[circle, inner sep = 0pt, minimum size = 6pt, draw, very thick, label = below: {$w_1$}] (b2) at (1.25,-1.25) {};
\node[circle, inner sep = 0pt, minimum size = 6pt, draw, very thick, label = below: {$w_2$}] (b3) at (2,-1.25) {};
\node[circle, inner sep = 0pt, minimum size = 6pt, draw, very thick, label = below: {$w_\NN$}] (b4) at (3.25,-1.25) {};

\draw[->, thick] (b) to (b1);
\draw[->, thick] (b) to (b2);
\draw[->, thick] (b) to (b3);
\draw[->, thick] (b) to (b4);

\node[font = \Large] (apisteet) at (2.625,-1.25) {$\cdots$};
\end{tikzpicture}
\caption{The pointed models $(\A, w)$ and $(\B, w)$.}
\label{esimkuva}
\end{figure}


We see that by moving to $w_\NN$, S wins the ($n$-)bisimulation game between the models $(\A, w)$ and $(\B, w)$, even though the models satisfy exactly the same $\ML$-formulas. 

We prove next that $k$-bisimilarity implies that D has winning strategy in the 
formula size game with resource parameter $k$. 
This simple observation is used in the next section, when we apply the game $\EF{k}$ for proving a succinctness result for $\FO$ over $\ML$.

\begin{theorem}\label{bisim}
Let $\AA$ and $\BB$ be sets of pointed models and let $k \in \Nset$. If there are $(k-1)$-bisimilar pointed models $(\A, w) \in \AA$ and $(\B, v) \in \BB$, 
then D has a winning strategy for the game $\EF{k}(\AA, \BB)$.
\end{theorem}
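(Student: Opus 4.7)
The strategy for D is to maintain the following invariant throughout the play: whenever the game is in a position $(k', \AA', \BB')$ there exist $(\A', w')\in\AA'$ and $(\B', v')\in\BB'$ with $(\A',w')\bisim_{k'-1}(\B',v')$ (for $k'=0$ this is vacuous, and D has already won). The hypothesis of the theorem gives exactly this invariant at the starting position $(k,\AA,\BB)$. I would then show that D can react to any move of S so that the invariant passes to the next position, and finally that the invariant precludes a literal win for S.

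The plan is to argue move-by-move. For a $\lor$-move in which S partitions $\AA = \AA_1\cup\AA_2$ with $k_1+k_2+1=k'$, the distinguished model $(\A',w')$ lies in some $\AA_i$; D answers by moving to $(k_i,\AA_i,\BB')$. Since $k_i-1 \le k'-2$, the witnessing pair in $\AA_i\times\BB'$ is still $(k_i-1)$-bisimilar (weakening of bisimilarity indices). The $\land$-move is symmetric: D picks the $i$ for which $(\B',v')\in\BB_i$. For a $\di$-move with function $f\colon\AA'\to\kaikki\AA'$, the forth clause of $(k'-1)$-bisimilarity applied to the successor $f(\A',w')\in\seur{\A'}{w'}$ produces some $(\B',u')\in\seur{\B'}{v'}$ with $f(\A',w')\bisim_{k'-2}(\B',u')$; this gives a witnessing $(k'-2)$-bisimilar pair inside the new sets $(\di_f\AA',\kaikki\BB')$, and $k'-2=(k'-1)-1$ is exactly what the invariant requires. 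The $\bo$-move is handled analogously using the back clause of $(k'-1)$-bisimilarity applied to $g(\B',v')$.

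Finally, if S ever plays a $\Lit$-move at a position where $k' \ge 1$, the invariant provides $(\A',w') \bisim_{k'-1} (\B',v')$, hence in particular $(\A',w')\bisim_0(\B',v')$, so these two pointed models agree on every literal. Thus no literal can simultaneously hold in all of $\AA'$ and fail in all of $\BB'$, and S loses the literal move. If the parameter reaches $0$ instead, D wins by definition.

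The main obstacle is bookkeeping the bisimilarity index so that after a modal move the index decreases by exactly one, matching the decrease of the resource parameter; the key point is that $n$-bisimilarity weakens to $m$-bisimilarity for $m \le n$, so the $\lor$- and $\land$-moves (which can decrease $k'$ by more than one) cause no trouble, while the $\di$- and $\bo$-moves lose exactly one level of bisimilarity, just as they lose exactly one unit of the resource parameter.
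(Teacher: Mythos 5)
Your proposal is correct and follows essentially the same route as the paper: the paper's proof is an induction on the resource parameter $k$ that maintains exactly your invariant (a $(k'-1)$-bisimilar pair present in the two sets), with D choosing the side containing the witness for $\lor$/$\land$-moves, using the forth/back clauses for $\di$/$\bo$-moves, and refuting literal moves via $0$-bisimilarity. The only difference is presentational (explicit induction versus invariant maintenance), and your bookkeeping of the bisimilarity indices matches the paper's.
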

\begin{proof}
The proof proceeds by induction on the number $k \in \Nset$. If $k = 1$ and $(\A, w) \in \AA$ and $(\B, v) \in \BB$ are 
$0$-bisimilar and thus satisfy the same literals. Thus there is no literal $\phi \in \ML$ that separates the sets $\AA$ and $\BB$. Thus any literal move by S leads to D winning. In addition, any non-literal move leads to a following position with $k = 0$ so D wins the game $\EF{1}(\AA, \BB)$.

Assume that $k > 1$ and $(\A, w) \in \AA$ and $(\B, v) \in \BB$ are $(k-1)$-bisimilar. We
consider the cases of the first move of S in the game $\EF{k}(\AA, \BB)$.

If S makes a literal move, D will win as in the basic step.

If S starts with a $\lor$-move choosing the numbers $k_1$ and $k_2$ and the sets $\AA_1$ and $\AA_2$, then since $\AA_1 \cup \AA_2
= \AA$, D can choose the next position $(k_i, \AA_i, \BB)$, in such a way that $(\A, w) \in \AA_i$. Then we have
$k_i < k$ so by induction hypothesis D has a winning strategy for the game $\EF{k_i}(\AA_i, \BB)$. The case of a $\land$-move is similar.

If S starts with a $\di$-move choosing a function $f : \AA \to \kaikki\AA$, then since $(\A, w)$ and $(\B, v)$ are $(k-1)$-bisimilar, there is a pointed
model $(\B, v') \in \seur{\B}{v}$ that is $(k-2)$-bisimilar with the pointed model $f(\A, w)$. By induction hypothesis D has a winning
strategy in $\EF{k-1}(\di_f\AA, \kaikki\BB)$. The case of a $\bo$-move is similar.
\end{proof}

\section{Succinctness of FO over ML}\label{nelonen}

In this section, we illustrate the use of the formula size game $\EF{k}$ by proving
a non-elementary succinctness gap between bisimulation invariant first-order logic
and modal logic. We also show that this gap is already present between a limited 2-dimensional modal logic  $\ML^2$ and basic modal logic.

A similar gap between $\FO$ and linear temporal logic $\LTL$ has already been established in the literature. In his PhD thesis \cite{stockmeyer}, Stockmeyer proved that the satisfiability problem of $\FO$ over words is of non-elementary complexity. He reduced the problem of nonemptiness of star-free regular expressions to this satisfiability problem. Etessami and Wilke pointed out in \cite{fo2} that careful examination of Stockmeyer's proof yields $\FO$ sentences with size $\ordo(n)$ such that the minimal words satisfying these sentences have length non-elementary in $n^3$. Since all satisfiable formulas of $\LTL$ have a satisfying model at most exponential in the size of the formula, a non-elementary succinctness gap between $\FO$ and $\LTL$ is obtained.

\subsection{A property of pointed models}\label{sec-property}

For the remainder of this section we consider only the case where the set $\Phi$ of propositional symbols is empty. This makes all points in Kripke-models propositionally
equivalent so the only formulas available for the win condition of S in the game $\EF{k}$ are
$\bot$ and $\top$. Thus S can only win with a literal move from position $(k,\AA, \BB)$ if either $\AA = \emptyset$ and $\BB \neq \emptyset$, or 
$\AA \neq \emptyset$ and $\BB = \emptyset$.

We will use the following two classes in our application of the formula size game $\EF{k}$:
\begin{itemize}
\item $\AA_n$ is the class of all pointed models $(\A, w)$ such that for all
 $(\A, u), (\A, v) \in \seur{\A}{w}$, the models $(\A, u)$ and $(\A, v)$ are $n$-bisimilar.
\item $\BB_n$ is the complement of $\AA_n$.
\end{itemize}

\begin{lemma}\label{FO}
For each $n \in \Nset$ there is a formula $\phi_n(x) \in \FO$ that separates the classes $\AA_n$ and $\BB_n$ such that the size of $\phi_n(x)$ is linear with 
respect to $n$, i.e., $\s(\phi_n) = \ordo(n)$.
\end{lemma}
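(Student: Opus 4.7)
My plan is to reduce the problem to constructing, by recursion on $n$, an $\FO$-formula $\chi_n(x,y)$ of size $\ordo(n)$ that holds in a Kripke model $\M$ exactly when $(\M,x) \bisim_n (\M,y)$. Once this is in hand, the required separating formula is simply
\[
\phi_n(x) := \forall u \forall v \bigl( (R(x,u) \land R(x,v)) \to \chi_n(u,v) \bigr),
\]
which literally expresses the defining condition of $\AA_n$: every pair of successors of $x$ is $n$-bisimilar. Its size is $\s(\chi_n) + \ordo(1) = \ordo(n)$.

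For the base case I take $\chi_0(x,y) := \top$, which is correct because $\Phi = \emptyset$ makes the literal clause in the definition of $n$-bisimulation vacuous, so every pair of points is $0$-bisimilar. The interesting step is the recursion. A naive attempt would write
\[
\forall z ( R(x,z) \to \exists w (R(y,w) \land \chi_n(z,w))) \;\land\; \forall z (R(y,z) \to \exists w (R(x,w) \land \chi_n(z,w))),
\]
but this uses two copies of $\chi_n$ and hence grows exponentially. The key move is to merge the forth and back clauses into a single recursive call by quantifying disjunctively over both directions:
\[
\chi_{n+1}(x,y) := \forall z \bigl( (R(x,z) \lor R(y,z)) \to \exists w \bigl( (R(x,z) \to R(y,w)) \land (R(y,z) \to R(x,w)) \land \chi_n(z,w) \bigr) \bigr).
\]
Each recursion adds only a constant number of symbols, so $\s(\chi_n) = \ordo(n)$.

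Correctness of $\chi_n$ I would verify by induction on $n$. When $z$ is a successor of exactly one of $x$ and $y$, one implication in the inner matrix is vacuous and the other forces exactly the witness demanded by the corresponding forth or back clause of $(n{+}1)$-bisimulation, so the equivalence follows directly from the induction hypothesis. The only delicate case is when $z$ is a common successor of both $x$ and $y$: the combined formula then asks for a single $w$ that is simultaneously a successor of $x$ and of $y$ with $\chi_n(z,w)$. This is not actually stronger than the usual forth-and-back requirement, because one can always take $w := z$, which trivially satisfies $R(x,w) \land R(y,w)$, while $\chi_n(z,z)$ holds by reflexivity of $n$-bisimilarity (via the identity bisimulation) combined with the induction hypothesis.

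The main obstacle is precisely this combined recursive step: the linear size bound depends entirely on keeping a single occurrence of $\chi_n$ in the definition, which forces the forth and back conditions to be encoded jointly, and one has to verify that the corner case of common successors does not break the intended equivalence. Once that formulation is settled, the base case, the definition of $\phi_n$, and the size arithmetic are all routine.
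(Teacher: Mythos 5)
Your proposal is correct and follows essentially the same route as the paper: the separating formula is the same universally quantified wrapper, and your recursive step is (up to a trivial rearrangement of the guard $(R(x,z)\lor R(y,z))$ and starting the recursion at $n=0$ rather than $n=1$) exactly the paper's linear-size bisimilarity formula, whose key idea is likewise to merge the forth and back clauses into a single recursive occurrence. Your handling of the common-successor case via $w:=z$ and reflexivity of $\bisim_n$ is a valid justification of a point the paper leaves implicit.
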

\begin{proof}
We first define formulas $\psi_n(x,y) \in \FO$ such that 
$(\M, u)\bisim_n (\M, v)$ if and only if $\M \vDash \psi_n[u/x, v/y]$.
The formulas $\psi_n(x,y)$ are defined recursively as follows:
\begin{align*}
\psi_1(x,y) := &\exists s R(x,s) \leftrightarrow \exists t R(y,t) \\
\psi_{n+1}(x,y) := &\forall s \exists t\Big(\big(R(x,s) \rightarrow R(y,t)\big) \land \big(R(y,s) \rightarrow R(x,t)\big)  \\& \ \qquad \land \big(R(x,s) \lor R(y,s) \rightarrow \psi_n(s,t)\big)\Big) 
\end{align*}
Clearly these formulas express $n$-bisimilarity as intended. When we interpret the equivalence and implications as shorthand in the standard way,
we get the sizes $\s(\psi_1) = 11$ and $\s(\psi_{n+1}) = \s(\psi_n) + 14$. Thus $\s(\psi_n) = 14n-3$.

Now we can define the formulas $\phi_n$:
\[
\phi_n(x) := \forall y \forall z (R(x,y) \land R(x,z) \rightarrow \psi_n(y,z)).
\]
Clearly for every $(\A, w) \in \AA_n$ we have $\A \vDash \phi_n[w/x]$ and for every $(\B, v) \in \BB_n$ we have $\B \vDash \neg \phi_n[w/x]$ so the formula
$\phi_n$ separates the classes $\AA_n$ and $\BB_n$. Furthermore, $\s(\phi_n) = \s(\psi_n) + 6 = 14n+3$ so the size of $\phi_n$ is linear\footnote{{\bf Acknowledgement.} We are grateful to Martin L{\"u}ck for coming up with a linear size formula $\psi_n(x,y)$ to replace our previous one that was of exponential size.} with respect to $n$.
\end{proof}

\begin{lemma}
For each $n \in \Nset$, the formula $\phi_n$ is $(n+1)$-bisimulation invariant.
\end{lemma}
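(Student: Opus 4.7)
The plan is to exploit the fact that the semantic property expressed by $\phi_n$, namely \emph{every pair of successors of the base point is $n$-bisimilar}, is visibly preserved by $(n+1)$-bisimulation, provided we know that $\bisim_n$ behaves well under composition. So first I would record that $\bisim_n$ is an equivalence relation on pointed Kripke-models: reflexivity and symmetry are immediate from the definition, and transitivity follows by composing two witnessing chains $Z_n\subseteq\cdots\subseteq Z_0$ and $Z'_n\subseteq\cdots\subseteq Z'_0$ componentwise to obtain the chain $Z_n\circ Z'_n\subseteq\cdots\subseteq Z_0\circ Z'_0$, whose back and forth conditions are verified routinely.

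Next, the key structural observation: if $Z_{n+1}\subseteq\cdots\subseteq Z_0$ witnesses $(\M,w)\bisim_{n+1}(\M',w')$, then for any pair $(\M,u)\,Z_n\,(\M',u')$ the truncated chain $Z_n\subseteq\cdots\subseteq Z_0$ directly satisfies the clauses of the definition of $n$-bisimulation, hence $(\M,u)\bisim_n(\M',u')$.

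With these two facts in hand, the proof is short. Assume $(\M,w)\bisim_{n+1}(\M',w')$ via $Z_{n+1}\subseteq\cdots\subseteq Z_0$ and $(\M,w)\vDash\phi_n$; by symmetry it suffices to show $(\M',w')\vDash\phi_n$. Pick arbitrary $(\M',u'),(\M',v')\in\seur{\M'}{w'}$. Applying the back clause at $(\M,w)\,Z_{n+1}\,(\M',w')$ twice yields $(\M,u),(\M,v)\in\seur{\M}{w}$ with $(\M,u)\,Z_n\,(\M',u')$ and $(\M,v)\,Z_n\,(\M',v')$. By the structural observation, $(\M,u)\bisim_n(\M',u')$ and $(\M,v)\bisim_n(\M',v')$. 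Since $(\M,w)\vDash\phi_n$ and $u,v$ are successors of $w$, we have $(\M,u)\bisim_n(\M,v)$. Transitivity of $\bisim_n$ gives $(\M',u')\bisim_n(\M',v')$, and since $u',v'$ were arbitrary, $(\M',w')\vDash\phi_n$, as required.

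There is no real obstacle here; the only things that need care are (i) the verification that the componentwise composition of the two witnessing chains really is a chain of $n$-bisimulation relations, and (ii) the observation that truncating a chain witnessing $(n+1)$-bisimulation produces a chain witnessing $n$-bisimulation of successors. Both are straightforward consequences of the definition and require no new ideas.
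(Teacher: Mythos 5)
Your proof is correct and follows essentially the same route as the paper: pull the successors $u',v'$ back along the $(n{+}1)$-bisimulation to successors $u,v$ of $w$, then chain $(\M',u')\bisim_n(\M,u)\bisim_n(\M,v)\bisim_n(\M',v')$ using symmetry and transitivity of $\bisim_n$. The only difference is that you spell out the transitivity of $\bisim_n$ (via composition of the witnessing chains) and the truncation observation, which the paper uses implicitly.
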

\begin{proof}
Let $(\A, w)$ and $(\B, v)$ be $(n+1)$-bisimilar pointed models. Assume that $\A \vDash \phi_n[w/x]$. If $(\B, v_1), (\B, v_2) \in \seur{\B}{v}$, by $(n+1)$-bisimilarity there are 
$(\A, w_1), (\A, w_2) \in \seur{\A}{w}$ such that $(\A, w_1) \bisim_n (\B, v_1)$ and $(\A, w_2) \bisim_n (\B, v_2)$. Since $\A \vDash \phi_n[w/x]$, we have 
$(\B, v_1) \bisim_n (\A, w_1) \bisim_n (\A, w_2) \bisim_n (\B, v_2)$ so $\B \vDash \psi_n[v_1/x, v_2/y]$. Thus, we see that $\B \vDash \phi_n[v/x]$.
\end{proof}

It follows now from van Benthem's characterization theorem that each $\phi_n$ is equivalent
to some $\ML$-formula. Thus, we get the following corollary.

\begin{corollary}\label{Rosen}
For each $n \in \Nset$, there is a formula $\vartheta_n \in \ML$ that separates the classes $\AA_n$ and $\BB_n$.
\end{corollary}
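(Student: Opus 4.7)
The plan is to invoke van Benthem's Characterization Theorem directly, using the two preceding results. By Lemma~\ref{FO}, the first-order formula $\phi_n(x)$ separates $\AA_n$ from $\BB_n$, and by the lemma just proved, $\phi_n$ is $(n+1)$-bisimulation invariant. As noted after the statement of van Benthem's theorem in Section~2, $n$-bisimulation invariance for any fixed $n$ implies full bisimulation invariance: if two pointed models are bisimilar via some relation $Z$, then setting $Z_0 = Z_1 = \dots = Z_{n+1} = Z$ witnesses $(n+1)$-bisimilarity, so any $(n+1)$-bisimulation invariant property is automatically preserved under bisimulation.

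Having established that $\phi_n(x)$ defines a bisimulation-invariant property, van Benthem's Characterization Theorem yields an $\ML$-formula $\vartheta_n$ equivalent to $\phi_n(x)$, i.e., $\MM(\vartheta_n) = \MM(\phi_n)$. Since $\phi_n(x)$ separates $\AA_n$ from $\BB_n$, the equivalent formula $\vartheta_n$ separates the same classes, which is exactly what the corollary asserts.

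There is no serious obstacle in the argument itself; all the work has already been done in the two preceding lemmas, and the corollary is a one-line consequence of van Benthem's theorem. The point worth flagging is that this result is purely an existence statement and provides \emph{no} bound on the size of $\vartheta_n$. The real content of Section~\ref{nelonen} will then be to use the formula size game $\EF{k}$ on the classes $\AA_n$ and $\BB_n$ to show that any $\ML$-formula separating them, in particular $\vartheta_n$, must have size at least $\tower(n-1)$, thereby contrasting the linear-size $\FO$-definition from Lemma~\ref{FO} with a non-elementary lower bound for $\ML$.
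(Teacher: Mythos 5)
Your proof is correct and takes essentially the same route as the paper: the $(n+1)$-bisimulation invariance of $\phi_n$ established in the preceding lemma implies full bisimulation invariance, and van Benthem's Characterization Theorem then yields an equivalent $\ML$-formula $\vartheta_n$, which therefore separates $\AA_n$ from $\BB_n$. Your closing observation that the corollary is a pure existence statement with no size bound is exactly the point the paper goes on to exploit with the formula size game.
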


\subsection{Set theoretic construction of pointed models}\label{construction}

We have shown that the classes $\AA_n$ and $\BB_n$ can be separated both in $\ML$ and in $\FO$. Furthermore the size of the FO-formula is linear with respect to
$n$. It only remains to ask: what is the size of the smallest $\ML$-formula that separates the classes $\AA_n$ and $\BB_n$? To answer this we will need suitable subsets of $\AA_n$ and $\BB_n$ to play the formula size game
on.

\begin{definition}
Let $n \in \Nset$. \emph{The finite levels of the cumulative hierarchy} are defined recursively as follows:
\begin{align*}
\mathsf{V}_0 &= \emptyset \\
\mathsf{V}_{n+1} &= \powerset(\mathsf{V}_n)
\end{align*}
\end{definition}

For every $n \in \Nset$, $\mathsf{V}_n$ is a \emph{transitive set}, i.e., for every $a \in \mathsf{V}_n$ and every $b \in a$ it holds that $b \in \mathsf{V}_n$. Thus it is reasonable to define a model $\F_n = (\mathsf{V}_n, R_n)$, where for all $a, b \in \mathsf{V}_n$ it holds that $(a, b) \in R_n \Leftrightarrow b \in a$.

For every point $a \in \mathsf{V}_n$ we denote by $(\M_a, a)$ the pointed model, where $\M_a$ is the submodel of $\F_n$ generated by the point 
$a$. 

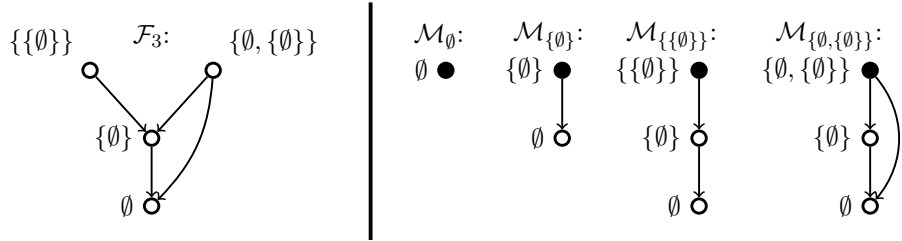
\begin{figure}[ht]
\centering
\scalebox{0.9}{
\begin{tikzpicture}
\draw[ultra thick] (1.7,3.5) to (1.7,0);

\node at (-1.5, 3) {$\F_3$:};
\node[circle, inner sep = 0pt, minimum size = 6pt, draw, very thick, label = above right : {$\{\emptyset, \{\emptyset\}\}$}] (1) at (-0.6, 2.5) {};
\node[circle, inner sep = 0pt, minimum size = 6pt, draw, very thick, label = left : $\{\emptyset\}$] (2) at (-1.5, 1.5) {};
\node[circle, inner sep = 0pt, minimum size = 6pt, draw, very thick, label = left : $\emptyset$] (3) at (-1.5, 0.5) {};
\node[circle, inner sep = 0pt, minimum size = 6pt, draw, very thick, label = above left : $\{\{\emptyset\}\}$] (4) at (-2.4, 2.5) {};

\draw[->, thick] (1) to (2);
\draw[->, thick] (2) to (3);
\draw[->, thick] (1) to [bend left = 20] (3);
\draw[->, thick] (4) to (2);

\node at (2.7, 3) {$\M_\emptyset$:};
\node[circle, fill, inner sep = 0pt, minimum size = 6pt, draw, very thick, label = left : $\emptyset$] (2) at (2.8, 2.5) {};

\node at (4.3, 3) {$\M_{\{\emptyset\}}$:};
\node[circle, fill, inner sep = 0pt, minimum size = 6pt, draw, very thick, label = left : $\{\emptyset\}$] (1) at (4.5, 2.5) {};
\node[circle, inner sep = 0pt, minimum size = 6pt, draw, very thick, label = left : $\emptyset$] (2) at (4.5, 1.5) {};

\draw[->, thick] (1) to (2);

\node at (6.1, 3) {$\M_{\{\{\emptyset\}\}}$:};
\node[circle, fill, inner sep = 0pt, minimum size = 6pt, draw, very thick, label = left : $\{\{\emptyset\}\}$] (3) at (6.5, 2.5) {};
\node[circle, inner sep = 0pt, minimum size = 6pt, draw, very thick, label = left : $\{\emptyset\}$] (1) at (6.5, 1.5) {};
\node[circle, inner sep = 0pt, minimum size = 6pt, draw, very thick, label = left : $\emptyset$] (2) at (6.5, 0.5) {};

\draw[->, thick] (1) to (2);
\draw[->, thick] (3) to (1);

\node at (8.4, 3) {$\M_{\{\emptyset, \{\emptyset\}\}}$:};
\node[circle, fill, inner sep = 0pt, minimum size = 6pt, draw, very thick, label = left : {$\{\emptyset, \{\emptyset\}\}$}] (3) at (9, 2.5) {};
\node[circle, inner sep = 0pt, minimum size = 6pt, draw, very thick, label = left : $\{\emptyset\}$] (1) at (9, 1.5) {};
\node[circle, inner sep = 0pt, minimum size = 6pt, draw, very thick, label = left : $\emptyset$] (2) at (9, 0.5) {};

\draw[->, thick] (1) to (2);
\draw[->, thick] (3) to (1);
\draw[->, thick] (3) to [bend left = 40] (2);

\end{tikzpicture}}
\caption{The model $\F_3$ and its generated submodels}
\end{figure}

\begin{lemma}\label{notbisim}
Let $n \in \Nset$ and $a, b \in \mathsf{V}_{n+1}$. If $a \neq b$, then $(\M_a, a)\not\bisim_n(\M_b, b)$. 
\end{lemma}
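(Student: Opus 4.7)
The plan is to prove the lemma by induction on $n$. The base case $n=0$ is vacuous: $\mathsf{V}_1 = \powerset(\mathsf{V}_0) = \{\emptyset\}$ has only one element, so the hypothesis $a \ne b$ is never satisfied for $a, b \in \mathsf{V}_1$.

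For the inductive step, I assume the claim for $n$ and take distinct $a, b \in \mathsf{V}_{n+2} = \powerset(\mathsf{V}_{n+1})$. Thus both $a$ and $b$ are subsets of $\mathsf{V}_{n+1}$, and without loss of generality I may pick some $c \in a \setminus b$. I will aim for a contradiction from the assumption $(\M_a, a) \bisim_{n+1} (\M_b, b)$, witnessed by relations $Z_{n+1} \supseteq \cdots \supseteq Z_0$ as in the definition of $n$-bisimulation.

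Since $c \in a$, the element $c$ is an $R$-successor of $a$ in $\M_a$. Applying the forth condition for $Z_{n+1}$ produces some successor of $b$ in $\M_b$---that is, some $d \in b$---such that $(\M_a, c)\,Z_n\,(\M_b, d)$. The key structural observation is that each $\mathsf{V}_k$ is transitive, so the submodel of $\M_a$ generated at $c$ has the same universe and edges as $\M_c$ (namely $\{c\}$ together with the transitive closure of $c$, with $\in$ as the accessibility relation), and likewise the submodel of $\M_b$ generated at $d$ coincides with $\M_d$. Consequently the restrictions of $Z_n, \ldots, Z_0$ to these generated submodels form an $n$-bisimulation, giving $(\M_c, c) \bisim_n (\M_d, d)$. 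But $c \in a \setminus b$ and $d \in b$ force $c \ne d$, while $c, d \in \mathsf{V}_{n+1}$, contradicting the induction hypothesis.

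The only real subtlety in the argument is the identification of the generated submodels at $c$ and $d$ with $\M_c$ and $\M_d$ respectively; once this is pinned down via transitivity of the levels, everything reduces to a single back-and-forth step followed by a direct appeal to the inductive hypothesis.
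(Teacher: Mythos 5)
Your proof is correct and follows essentially the same route as the paper's: induction on $n$, a single application of the forth condition to produce $d \in b$ with $(\M_c,c)\bisim_n(\M_d,d)$, and then the induction hypothesis forces $c=d$, contradicting $c \in a \setminus b$. The only differences are a harmless re-indexing of the induction and your (welcome) explicit justification, via transitivity of the levels $\mathsf{V}_k$, that the generated submodel at $c$ really is $\M_c$ — a point the paper leaves implicit.
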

\begin{proof}
We prove the claim by induction on $n$. The basic step $n = 0$ is trivial since $\mathsf{V}_1$ only has one element. For the induction step, assume that 
$a, b \in \mathsf{V}_{n+1}$ and $a \neq b$. Assume further for contradiction that 
$(\M_a, a)\bisim_n (\M_b, b)$. Since $a \neq b$, by symmetry we can 
assume that there is $x \in a$ such that $x \notin b$. By $n$-bisimilarity there is $y \in b$ such that $(\M_x, x)$ and $(\M_y, y)$ are $(n-1)$-bisimilar. Since 
$x \in a \in \mathsf{V}_{n+1}$ and $y \in b \in \mathsf{V}_{n+1}$, we have $x ,y \in \mathsf{V}_n$. By induction hypothesis we obtain $x = y$. This is a contradiction, since $x \notin b$ 
and $y \in b$.
\end{proof}

If $\AA$ is a set of pointed models, the pointed model $\liitos \AA$ is formed by taking all the pointed models of $\AA$ and connecting a new root to their distinguished points as illustrated in Figure \ref{liitoskuva}. 
To make sure that $(\liitos \AA,v)$ is bisimilar with $(\A,v)$ for any $(\A,v)\in\liitos \AA$,
we require that the models in $\AA$ are compatible in possible intersections. The precise definition is the following.

Let $\AA$ be a set of pointed models such that for all $(\A, v), (\A', v') \in \AA$ it holds that $R^\A \upharpoonright (\dom(\A) \cap \dom(\A')) = R^{\A'} \upharpoonright (\dom(\A) \cap \dom(\A'))$ and let $w \notin \dom(\A)$ for all $(\A, v) \in \AA$. We use the notation $\liitos\AA := (\M, w)$, where 
\begin{align*}
&\dom(\M) = \{w\} \cup \bigcup\{\dom(\A) \mid (\A, v) \in \AA\}, \text{ and }\\
 &R^\M = \{(w,v) \mid (\A, v) \in \AA \} \cup \bigcup\{R^\A \mid (\A, v) \in \AA\}.
\end{align*}


\begin{figure}[ht]
\centering
\begin{tikzpicture}
\node[circle, inner sep = 0pt, minimum size = 6pt, draw, very thick, label = left: {$w$}] (w) at (0,0) {};
\node (A) at (0,0.5) {$\liitos\AA$};
\node[circle, inner sep = 0pt, minimum size = 6pt, draw, very thick] (1) at (-1.875,-1.25) {};
\node[circle, inner sep = 0pt, minimum size = 6pt, draw, very thick] (2) at (-0.625,-1.25) {};
\node[circle, inner sep = 0pt, minimum size = 6pt, draw, very thick] (3) at (1.875,-1.25) {};
\node[font = \Large] (pisteet) at (0.625,-1.25) {$\cdots$};

\draw[->, thick] (w) to (1);
\draw[->, thick] (w) to (2);
\draw[->, thick] (w) to (3);

\draw (1) -- ++(-0.4,-1) -- ++(0.8,0) -- (1);
\draw (2) -- ++(-0.4,-1) -- ++(0.8,0) -- (2);
\draw (3) -- ++(-0.4,-1) -- ++(0.8,0) -- (3);

\draw[decorate, decoration = {brace, mirror}] (-2.30,-2.35) -- (2.30,-2.35);
\node (A) at (0, -2.65) {$\AA$};
\end{tikzpicture}
\caption{The pointed model $\liitos\AA$}
\label{liitoskuva}
\end{figure}


For each $n \in \NN$ we define the following sets of pointed models:
\begin{align*}
\CC_n &:= \{ \liitos \{(\M_a, a)\} \mid a \in \mathsf{V}_{n+1} \} \\
\DD_n &:= \{ \liitos \{ (\M_a, a), (\M_b, b) \} \mid a, b \in \mathsf{V}_{n+1}, a \neq b \}.
\end{align*}

In other words the pointed models in $\CC_n$ have a single successor from level $n+1$ of the cumulative hierarchy, whereas the pointed models in $\DD_n$ have two different successors from the same set. Therefore clearly $\CC_n \subseteq \AA_n$ and by Lemma~\ref{notbisim} also $\DD_n \subseteq \BB_n$. In the next subsection we will use these sets in the formula size game.

It is well known that the cardinality of $\mathsf{V}_n$ is the exponential tower of $n-1$. 
Thus, the cardinality of $\CC_n$ is $\tower(n)$.

\begin{lemma}\label{tower}
If $n \in \Nset$, we have $|\CC_n| = |\mathsf{V}_{n+1}| = \tower(n)$. \qed
\end{lemma}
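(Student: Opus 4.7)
The plan is to handle the two equalities separately, since both are essentially routine consequences of the definitions.

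For the second equality $|\mathsf{V}_{n+1}| = \tower(n)$, I would proceed by induction on $n$. The base case $n = 0$ is immediate: $|\mathsf{V}_1| = |\powerset(\emptyset)| = 1 = \tower(0)$. For the inductive step, using the definition of the cumulative hierarchy and the induction hypothesis,
\[
|\mathsf{V}_{n+2}| \;=\; |\powerset(\mathsf{V}_{n+1})| \;=\; 2^{|\mathsf{V}_{n+1}|} \;=\; 2^{\tower(n)} \;=\; \tower(n+1),
\]
which is the desired equality.

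For the first equality $|\CC_n| = |\mathsf{V}_{n+1}|$, I would exhibit the evident bijection $\Psi : \mathsf{V}_{n+1} \to \CC_n$ defined by $\Psi(a) = \liitos\{(\M_a,a)\}$. Surjectivity is immediate from the definition of $\CC_n$. For injectivity, note that in $\liitos\{(\M_a,a)\} = (\M,w)$, the fresh root $w$ has exactly one $R^\M$-successor, namely $a$ itself. Thus the element $a \in \mathsf{V}_{n+1}$ can be recovered from the pointed model $\Psi(a)$, so distinct $a, b \in \mathsf{V}_{n+1}$ give rise to distinct pointed models. (As an alternative, one could invoke Lemma~\ref{notbisim} to conclude that $(\M_a,a) \not\bisim_n (\M_b,b)$ whenever $a \neq b$, which also forces $\Psi(a) \neq \Psi(b)$.)

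I do not foresee any real obstacle; the only point needing mild care is checking that the construction $\liitos\{(\M_a,a)\}$ does indeed store $a$ as the unique successor of its distinguished point, which ensures injectivity of $\Psi$.
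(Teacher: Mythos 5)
Your proposal is correct and matches the paper's (omitted) argument: the paper simply notes that $|\mathsf{V}_n|=\tower(n-1)$ is well known and that $\CC_n$ is indexed bijectively by $\mathsf{V}_{n+1}$, which is exactly the induction and the bijection $a\mapsto\liitos\{(\M_a,a)\}$ that you spell out. The injectivity check via the unique successor of the fresh root (or via Lemma~\ref{notbisim}) is the right way to make the indexing claim rigorous.
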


\subsection{Graph colorings and winning strategies in $\EF{k}$}\label{graphs}

Our aim is to prove that any $\ML$-formula $\theta_n$ separating the
sets $\CC_n$ and $\DD_n$ is of size at least $\tower(n-1)$. To do this,
we make use of a surprising connection between the chromatic numbers
of certain graphs related to pairs of the form $(\VV,\EE)$, where $\VV\subseteq\CC_n$
and $\EE\subseteq\DD_n$, and existence
of a winning strategy for D in the game $\EF{k}(\VV,\EE)$.

Let $n \in \Nset$, $\emptyset \neq \VV \subseteq \CC_n$ and $\EE \subseteq \DD_n$. Then $\G (\VV, \EE)$ denotes the graph $(V, E)$, where
\begin{align*}
V &= \bo\VV \text{ and }\\
 E &= \{ ((\M, w), (\M', w')) \in V \times V \mid \liitos \{(\M, w), (\M', w')\} \in \EE\}.
\end{align*}
 That is, since models on the left all have exactly one successor, and ones on the right have exactly two successors from the same basic set, we can take the graph where these successors are nodes and the pairs on the right define the edges. Note that a pair on the right only produces an edge if both elements of the pair are present on the left.

\begin{definition}
Let $\G = (V, E)$ be a graph and let $C$ be a set. A function $\chi : V \to C$ is a \emph{coloring} of the graph $\G$ if for all $u, v \in V$ it holds that 
if $(u, v) \in E$, then $\chi(u) \neq \chi(v)$. If the set $C$ has $k$ elements, then $\chi$ is called a \emph{$k$-coloring} of $\G$.

The \emph{chromatic number} of $\G$, denoted by $\chi(\G)$, is the smallest number $k \in \Nset$ for which there is a $k$-coloring of $\G$.
\end{definition}

When playing the formula size game $\EF{k}(\VV, \EE)$, connective moves correspond with dividing either the vertex set or the edge set of the graph $\G(\VV, \EE)$ into two parts, forming two new graphs. In the next lemma we get simple arithmetic estimates for the behaviour of chromatic numbers in such divisions. In the case of a vertex set split, if the two new graphs are colored with separate colors, combining these colorings yields a coloring of the whole graph. For an edge split, the full graph is colored with pairs of colors given by the two new colorings. If two vertices are adjacent in the full graph, at least one of the new colorings will color them with a different color and the pairs of colors will be different.

\begin{lemma}\label{väritys}
Let $\G = (V, E)$ be a graph. 
\begin{enumerate}
\item Let $V_1, V_2 \subseteq V$ be nonempty such that $V_1 \cup V_2 = V$ and let $\G_1 = (V_1, E \upharpoonright V_1)$ and
$\G_2 = (V_2, E \upharpoonright V_2)$. Then we have  $\chi (\G) \le \chi (\G_1) + \chi (\G_2)$.
\item Let $E_1, E_2 \subseteq E$ such that $E_1 \cup E_2 = E$ and let  $\G_1 = (V, E_1)$ and \mbox{$\G_2 = (V, E_2)$}. Then $\chi (\G) \le \chi (\G_1)\chi (\G_2)$.
\end{enumerate}
\end{lemma}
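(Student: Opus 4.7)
The plan is to construct explicit colorings of $\G$ from optimal colorings of $\G_1$ and $\G_2$, in each case using a standard combinatorial trick: disjoint palettes for the vertex-split case and product palettes for the edge-split case.

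For part (1), I would fix optimal colorings $\chi_1 \colon V_1 \to C_1$ and $\chi_2 \colon V_2 \to C_2$ of $\G_1$ and $\G_2$ respectively, with $|C_i| = \chi(\G_i)$, and arrange that $C_1 \cap C_2 = \emptyset$ (just rename colors if necessary). Then define $\chi \colon V \to C_1 \cup C_2$ by
\[
\chi(v) = \begin{cases} \chi_1(v), & \text{if } v \in V_1, \\ \chi_2(v), & \text{if } v \in V_2 \setminus V_1. \end{cases}
\]
To verify this is a proper coloring of $\G$, take any edge $(u,v) \in E$. If $u,v \in V_1$, then $(u,v) \in E \upharpoonright V_1$ and $\chi_1$ distinguishes them; similarly if $u,v \in V_2 \setminus V_1 \subseteq V_2$, then $\chi_2$ does. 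If one endpoint lies in $V_1$ and the other in $V_2 \setminus V_1$, the colors belong to the disjoint sets $C_1$ and $C_2$ and are therefore distinct. Hence $\chi(\G) \le |C_1 \cup C_2| = \chi(\G_1) + \chi(\G_2)$.

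For part (2), I would fix optimal colorings $\chi_1 \colon V \to C_1$ and $\chi_2 \colon V \to C_2$ of $\G_1$ and $\G_2$, and define the product coloring $\chi \colon V \to C_1 \times C_2$ by $\chi(v) = (\chi_1(v), \chi_2(v))$. For any $(u,v) \in E = E_1 \cup E_2$, the edge lies in at least one $E_i$, and then $\chi_i$ assigns distinct colors to its endpoints, so the corresponding coordinate of $\chi(u)$ and $\chi(v)$ differs. Thus $\chi$ is proper and $\chi(\G) \le |C_1 \times C_2| = \chi(\G_1)\chi(\G_2)$.

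Both parts are essentially one-step constructions, and I do not anticipate a real obstacle: the only mild subtlety in (1) is to make the palettes disjoint so that cross-edges between $V_1$ and $V_2 \setminus V_1$ are automatically bichromatic, and in (2) the verification is immediate from the definition of $E = E_1 \cup E_2$.
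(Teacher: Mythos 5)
Your proposal is correct and matches the paper's own proof: part (1) uses exactly the combination $\chi_1 \cup (\chi_2 \upharpoonright (V_2 \setminus V_1))$ with disjoint color palettes, and part (2) uses the same product coloring $v \mapsto (\chi_1(v),\chi_2(v))$. The only difference is that you spell out the edge-case verification that the paper leaves as ``straightforward.''
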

\begin{proof}
\begin{enumerate}
\item 
Let $V_1$, $V_2$, $\G_1$ and $\G_2$ be as in the claim and let $k_1 = \chi(\G_1)$ and $k_2 = \chi(\G_2)$. Let $\chi_1 : V_1 \to \{1, \dots, k_1\}$ 
be a $k_1$-coloring of the graph $\G_1$ and let $\chi_2 : V_2 \to \{k_1+1, \dots, k_1+k_2 \}$ be a $k_2$-coloring of the graph $\G_2$. Then it is straightforward to show that
$\chi = \chi_1 \cup (\chi_2 \upharpoonright (V_2 \setminus V_1))$ is a $k_1+k_2$-coloring of the graph $\G$, whence $\chi (\G) \le k_1+k_2 =\chi (\G_1) + \chi (\G_2)$.

\item 
Let $\chi_1 : V \to \{1, \dots, k_1\}$ and $\chi_2 : V \to \{1, \dots, k_2\}$ be colorings of the graphs $\G_1$ and $\G_2$, respectively. Then it is easy to verify that the map 
$\chi : V \to \{1, \dots, k_1\} \times \{1, \dots, k_2\}$ defined by $\chi (v) = (\chi_1(v), \chi_2(v))$ is a coloring of $\G$. Thus we obtain $\chi (\G) \le |\{1, \dots, k_1\} \times 
\{1, \dots, k_2\}| = \chi (\G_1) \chi (\G_2)$.
\end{enumerate}
\end{proof}

For the condition D maintains to win the game, we use the logarithm of the chromatic number of $\G(\VV, \EE)$ as it behaves nicely with both kinds of splittings. Note that to achieve non-elementary formula size, it suffices to consider the number of binary connectives required before any modal moves can be made.

\begin{lemma}\label{winstrat}
Assume $\emptyset \neq \VV \subseteq \CC_n$ and $\EE \subseteq \DD_n$ for some $n \in \Nset$ and let $k \in \Nset$. If $k \leq \log ( \chi (\G(\VV, \EE)))$, then D has a winning strategy in the game $\EF{k}(\VV, \EE)$.
\end{lemma}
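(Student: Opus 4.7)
The plan is induction on $k$. The base case $k=0$ is immediate because D wins by definition. For the inductive step, I would first observe that $k \geq 1$ together with $k \leq \log(\chi(\G(\VV,\EE)))$ forces $\chi(\G(\VV,\EE)) \geq 2$, so $\G(\VV,\EE)$ has at least one edge. This implies $\EE \neq \emptyset$, and since $\VV \neq \emptyset$ was assumed, with $\Phi = \emptyset$ no literal $\top$ or $\bot$ can separate $\VV$ from $\EE$, so a literal move by S loses.

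Next I would handle the two propositional moves using Lemma~\ref{väritys}. For a $\lor$-move with $\VV_1 \cup \VV_2 = \VV$ and $k_1+k_2+1=k$, Lemma~\ref{väritys}(1) gives $\chi(\G(\VV,\EE)) \leq \chi(\G(\VV_1,\EE)) + \chi(\G(\VV_2,\EE))$, so at least one branch $i$ has $\chi(\G(\VV_i,\EE)) \geq \chi(\G(\VV,\EE))/2$; D chooses that branch, and the invariant propagates because $\log(\chi(\G(\VV_i,\EE))) \geq \log(\chi(\G(\VV,\EE))) - 1 \geq k-1 \geq k_i$. (Note that $\VV_i$ is then nonempty since the chromatic number is positive.) For a $\land$-move with $\EE_1 \cup \EE_2 = \EE$, Lemma~\ref{väritys}(2) gives the additive bound $\log(\chi(\G(\VV,\EE))) \leq \log(\chi(\G(\VV,\EE_1))) + \log(\chi(\G(\VV,\EE_2)))$; if the invariant failed on both branches, i.e.\ $k_i > \log(\chi(\G(\VV,\EE_i)))$ for $i=1,2$, summing would yield $k \leq \log(\chi(\G(\VV,\EE))) < k_1+k_2 = k-1$, a contradiction, so D can always pick a valid branch and apply the induction hypothesis.

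The conceptually interesting case is the modal moves, where I would short-circuit the induction by appealing directly to Theorem~\ref{bisim}. Because the edge witnessed by $\chi(\G(\VV,\EE))\geq 2$ corresponds to a pair $a\neq b$ with both $\liitos\{(\M_a,a)\},\liitos\{(\M_b,b)\} \in \VV$ and $\liitos\{(\M_a,a),(\M_b,b)\} \in \EE$, the following happens. For a $\di$-move by S with function $f$: every model in $\VV \subseteq \CC_n$ has a unique successor, so $\di_f\VV = \bo\VV$ contains $(\M_a,a)$, while $\bo\EE$ contains the $a$-successor of $\liitos\{(\M_a,a),(\M_b,b)\}$, which is bisimilar to $(\M_a,a)$ because $\mathsf{V}_n$ is transitive. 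For a $\bo$-move by S with function $g$: whichever of the two successors $g$ selects for $\liitos\{(\M_a,a),(\M_b,b)\}$ is bisimilar to either $(\M_a,a)$ or $(\M_b,b)$, and both of these already appear in $\bo\VV$ since both endpoints of the edge are vertices. In each case, the resulting position has bisimilar pointed models on the two sides, so Theorem~\ref{bisim} furnishes D with a winning strategy from there.

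The main obstacle I anticipate is simply recognizing that the modal moves do not need to propagate the graph-chromatic invariant at all: the structural assumption $\VV \subseteq \CC_n$, $\EE \subseteq \DD_n$ forces any edge in $\G(\VV,\EE)$ to produce a common bisimulation class on both sides after one modal move, and Theorem~\ref{bisim} finishes the job immediately. The remaining bookkeeping (verifying the arithmetic for $\lor$- and $\land$-moves, and checking that $\VV_i$ or $\EE_i$ are nonempty when needed for the inductive call) is routine once Lemma~\ref{väritys} is in hand.
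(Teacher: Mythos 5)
Your proposal is correct and follows essentially the same route as the paper's proof: induction on $k$, with Lemma~\ref{väritys} handling the $\lor$- and $\land$-moves and Theorem~\ref{bisim} short-circuiting the modal moves via the common (bisimilar) successor forced by any edge of $\G(\VV,\EE)$. The only differences are cosmetic --- your $\lor$-case picks the branch whose chromatic number is at least half of the total rather than arguing by contradiction with $\log(a+b)\le\log a+\log b+1$, and you spell out the nonemptiness of $\VV_i$ explicitly --- neither of which changes the argument.
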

\begin{proof}
Let $n,k \in \Nset$ and assume that $\emptyset \neq \VV \subseteq \CC_n$, $\EE \subseteq \DD_n$ and $k \leq \log (\chi (\G (\VV, \EE)))$.
We prove the claim by induction on $k$.

If $k = 0$, then D wins the game.

If $k = 1$, any non-literal move of S leads to D winning. Since $\VV, \EE \neq \emptyset$ and all models are propositionally equivalent, D will also win if S makes a literal move. 

Assume then that $k > 1$. If S starts the game with a literal move, then D wins as described above.

Assume that S begins the game with a $\di$- or $\bo$-move. Since $\chi (\G (\VV, \EE)) \ge 2$, there are pointed models $(\M, w), (\M', w') \in V$ such that $((\M, w), (\M', w')) \in E$.
Thus $\liitos \{(\M, w)\}$, $\liitos \{(\M', w')\} \in \VV$ and $\liitos \{(\M, w), (\M',w')\} \in \EE$.  In the 
following position $(k-1, \VV', \EE')$ it holds that $(\M, w) \in \VV' \cap \EE'$ or $(\M', w') \in \VV' \cap \EE'$. Thus the same pointed model is present on both
sides of the game and by Theorem~\ref{bisim}, D has a winning strategy for the game $\EF{k-1}(\VV', \EE')$.

Assume that S begins the game with a $\lor$-move choosing the numbers $k_1, k_2 \in \Nset$ and the sets $\VV_1, \VV_2 \subseteq \VV$. Consider
the graphs $\G (\VV, \EE) = (V, E)$ and $\G(\VV_i, \EE) = (V_i, E_i)$. Since $\VV_1 \cup \VV_2 = \VV$, we have $V_1 \cup
V_2 = V$. In addition, by the definition of the graphs $\G(\VV, \EE)$ and $\G(\VV_i, \EE)$ we see that $E_i = E \upharpoonright V_i$. 
Thus by Lemma~\ref{väritys}, we obtain $\chi(\G(\VV, \EE)) \le \chi(\G(\VV_1, \EE)) + \chi(\G(\VV_2, \EE))$. It must hold that $k_1 \leq \log(\chi(\G(\VV_1, \EE)))$ or
$k_2 \leq \log(\chi(\G(\VV_2, \EE)))$, since otherwise we would have
\begin{align*}
k &\leq \log(\chi(\G(\VV, \EE))) \le \log(\chi(\G(\VV_1, \EE)) + \chi(\G(\VV_2, \EE)))  \\ 
& \le \log(\chi(\G(\VV_1, \EE))) + \log(\chi(\G(\VV_2, \EE))) + 1 < k_1 + k_2 + 1 = k.
\end{align*}
Thus D can choose the next position of the game, $(k_i, \VV_i, \EE)$, in such a way that 
$k_i \leq \log(\chi(\G(\VV_i, \EE)))$. By induction hypothesis
D has a winning strategy in the game $\EF{k_i}(\VV_i, \EE)$. 

Assume then that S begins the game with a $\land$-move choosing the numbers $k_1, k_2 \in \Nset$ and the sets $\EE_1, \EE_2 \subseteq \EE$. Consider
now the graphs $\G(\VV, \EE)=(V, E)$ and $\G(\VV, \EE_i)=(V_i, E_i)$. Clearly $V_1 = V_2 = V$ and since 
$\EE_1 \cup \EE_2 = \EE$, we have $E_1 \cup E_2 = E$. Thus by Lemma~\ref{väritys}, we obtain  $\chi(\G(\VV, \EE)) \le \chi(\G(\VV, \EE_1))\chi(\G(\VV, \EE_2))$.
It must hold that $k_1 \leq \log(\chi(\G(\VV, \EE_1)))$ or $k_2 \leq \log(\chi(\G(\VV, \EE_2)))$, since otherwise we would have
\begin{align*}
k &\leq \log(\chi(\G(\VV, \EE))) \le \log(\chi(\G(\VV, \EE_1))\chi(\G(\VV, \EE_2))) \\
& = \log(\chi(\G(\VV, \EE_1))) + \log(\chi(\G(\VV, \EE_2))) < k_1+k_2+1 = k.
\end{align*}
Thus D can again choose the next position of the game, $(k_i, \VV, \EE_i)$, in such a way that $k_i \leq \log(\chi(\G(\VV, \EE_i)))$. By induction hypothesis
D has a winning strategy in the game $\EF{k_i}(\VV, \EE_i)$.
\end{proof}

\begin{theorem}\label{ML}
Let $n \in \Nset$. If a formula $\theta_n \in \ML$ separates $\AA_n$ from $\BB_n$, then $\s(\theta_n) > \tower(n-1)$.
\end{theorem}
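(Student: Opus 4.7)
The plan is to derive a contradiction by playing the formula size game $\EF{k}$ on the pair $(\CC_n, \DD_n)$, where $k = \s(\theta_n)$. Since $\CC_n \subseteq \AA_n$ and $\DD_n \subseteq \BB_n$ by the constructions in Subsection~\ref{construction}, any $\ML$-formula $\theta_n$ separating $\AA_n$ from $\BB_n$ automatically separates $\CC_n$ from $\DD_n$. Hence by Theorem~\ref{peruslause}, S has a winning strategy in the game $\EF{k}(\CC_n, \DD_n)$. On the other hand, if we can show that $k \le \log(\chi(\G(\CC_n, \DD_n)))$, then Lemma~\ref{winstrat} gives D a winning strategy in the same game, which is impossible. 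The contrapositive of this is exactly the desired bound.

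The key computation is therefore to determine the chromatic number of the graph $\G(\CC_n, \DD_n) = (V, E)$. By the definition of $\CC_n$, each $\liitos\{(\M_a,a)\} \in \CC_n$ has exactly one $\bo$-successor, namely $(\M_a, a)$, so $V = \bo\CC_n = \{(\M_a, a) \mid a \in \mathsf{V}_{n+1}\}$, and by Lemma~\ref{notbisim} these models are pairwise distinct. By the definition of $\DD_n$, a pair $((\M_a,a),(\M_b,b)) \in V \times V$ lies in $E$ precisely when $a \neq b$. Thus $\G(\CC_n, \DD_n)$ is the complete graph $K_{|\mathsf{V}_{n+1}|}$, whose chromatic number is $|\mathsf{V}_{n+1}| = \tower(n)$ by Lemma~\ref{tower}. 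Consequently $\log(\chi(\G(\CC_n, \DD_n))) = \log(\tower(n)) = \tower(n-1)$.

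To finish, suppose for contradiction that $\s(\theta_n) \le \tower(n-1)$. Then by Lemma~\ref{winstrat} applied with $k = \s(\theta_n)$, $\VV = \CC_n$ and $\EE = \DD_n$ (noting $\CC_n \neq \emptyset$), D has a winning strategy in $\EF{\s(\theta_n)}(\CC_n, \DD_n)$. But Theorem~\ref{peruslause} guarantees that S also has a winning strategy in this game, which is the required contradiction. Hence $\s(\theta_n) > \tower(n-1)$.

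There is no real obstacle to speak of: the hard work has been invested in Lemmas~\ref{notbisim}, \ref{tower}, \ref{väritys} and especially \ref{winstrat}, which packages the interplay between connective moves and the logarithm of the chromatic number. The only point that deserves a brief verification in the write-up is the identification of $\G(\CC_n, \DD_n)$ with a complete graph on $\tower(n)$ vertices, since this is where the non-elementary blow-up enters through $|\mathsf{V}_{n+1}|$.
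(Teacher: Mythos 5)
Your proposal is correct and follows essentially the same route as the paper's own proof: reduce to the subsets $\CC_n$ and $\DD_n$, invoke Theorem~\ref{peruslause} for S's winning strategy, identify $\G(\CC_n,\DD_n)$ with the complete graph $K_{\tower(n)}$ via Lemma~\ref{tower}, and derive the contradiction from Lemma~\ref{winstrat}. The only difference is cosmetic: you spell out slightly more explicitly why the graph is complete, which the paper states in one line.
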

\begin{proof}
Assume that a formula $\theta_n \in \ML$ separates $\AA_n$ from $\BB_n$. As observed in the end of Subsection \ref{construction}, it holds that $\CC_n \subseteq \AA_n$ and $\DD_n \subseteq \BB_n$. Therefore $\theta_n$ also separates the sets $\CC_n$ and $\DD_n$.

Assume for contradiction that $\s(\theta_n) \leq \tower(n-1)$. By Theorem~\ref{peruslause}, S has a winning strategy in the game $\EF{k}(\CC_n, \DD_n)$ for $k = \s(\theta_n)$.

On the other hand, by Lemma~\ref{tower}, we have $|\CC_n| = \tower(n)$ and the set $\DD_n$ consists of all the pointed models $\liitos\{(\M, w), (\M', w')\}$, where $\liitos\{(\M, w)\},
\liitos\{(\M', w')\} \in \CC_n$, $(\M, w) \neq (\M', w')$. Thus the graph $\G(\CC_n, \DD_n)$ is isomorphic with the complete graph $K_{\tower(n)}$. Therefore we obtain
\[
\chi(\G(\CC_n, \DD_n)) = \chi(K_{\tower(n)}) = \tower(n).
\]
By the assumption, $k \leq \tower(n-1) = \log(\tower(n)) = \log(\chi(\G(\CC_n, \DD_n)))$, so by Lemma~\ref{winstrat}, D also has a winning strategy in the game 
$\EF{k}(\CC_n, \DD_n)$, which is a contradiction.
\end{proof}

We now have everything we need for proving the non-elementary succinctness of $\FO$ 
over $\ML$. By Lemma~\ref{FO}, for each $n \in \Nset$ there is a formula $\phi_n(x) \in \FO$ such that $\phi_n$ separates the classes $\AA_n$ and $\BB_n$ with $s(\phi)=\ordo(n)$. 
On the other hand by Corollary \ref{Rosen}, there is an equivalent formula $\vartheta_n \in \ML$, but by Theorem~\ref{ML} the size of $\vartheta_n$ 
must be at least $\tower(n-1)$. 
So the property of all successors of a pointed model being $n$-bisimilar with each other can be expressed in $\FO$ with a formula of linear size, but in $\ML$ expressing it requires a formula of non-elementary size. 

\begin{corollary}\label{succinct}
Bisimulation invariant $\FO$ is non-elementarily more succinct than $\ML$.
\end{corollary}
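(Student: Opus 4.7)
The plan is to verify that the sequence $(\phi_n)_{n \in \Nset}$ of $\FO$-formulas furnished by Lemma~\ref{FO} witnesses the non-elementary succinctness gap. All the substantive work has already been done in the preceding lemmas and Theorem~\ref{ML}; what remains is essentially to assemble the ingredients and check that the separation property is preserved when passing from a separating formula to any equivalent $\ML$-formula.

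First, I would invoke Lemma~\ref{FO} to obtain, for each $n \in \Nset$, an $\FO$-formula $\phi_n(x)$ of size $\ordo(n)$ separating $\AA_n$ from $\BB_n$. The bisimulation invariance lemma immediately preceding Corollary~\ref{Rosen} shows that $\MM(\phi_n)$ is $(n+1)$-bisimulation invariant, hence bisimulation invariant, so by the version of van Benthem's Characterization Theorem recalled in Section~2 (Corollary~\ref{Rosen}), there is an $\ML$-formula $\vartheta_n$ equivalent to $\phi_n$.

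Next, I would argue that any such $\vartheta_n$ must in fact separate the classes $\AA_n$ and $\BB_n$ in the sense of the earlier definition: since $\MM(\vartheta_n) = \MM(\phi_n)$, we have $\AA_n \subseteq \MM(\vartheta_n)$ and $\BB_n \subseteq \overline{\MM(\vartheta_n)}$. Applying Theorem~\ref{ML}, it follows that $\s(\vartheta_n) > \tower(n-1)$. Since the function $n \mapsto \tower(n-1)$ is not bounded by any elementary function of $n$, while $\s(\phi_n) = \ordo(n)$, this exhibits the required non-elementary succinctness gap between bisimulation invariant $\FO$ and $\ML$, matching the definition of succinctness recalled in the introduction.

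There is no real obstacle here: the proof is purely a matter of combining Lemma~\ref{FO}, Corollary~\ref{Rosen} and Theorem~\ref{ML}. The conceptual heart of the argument lies in Theorem~\ref{ML}, whose proof in turn relies on the chromatic-number analysis of Section~\ref{graphs} and the fact that $\G(\CC_n, \DD_n)$ is the complete graph on $\tower(n)$ vertices; once that theorem is available, the corollary is essentially a bookkeeping step observing that an equivalent $\ML$-formula is automatically a separating $\ML$-formula.
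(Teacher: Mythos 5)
Your proposal is correct and follows exactly the paper's own argument: combine Lemma~\ref{FO} (linear-size $\FO$-formulas $\phi_n$ separating $\AA_n$ from $\BB_n$), Corollary~\ref{Rosen} (an equivalent $\ML$-formula exists via van Benthem), and Theorem~\ref{ML} (any separating $\ML$-formula has size exceeding $\tower(n-1)$). The extra observation that \emph{any} $\ML$-formula equivalent to $\phi_n$ automatically separates $\AA_n$ from $\BB_n$ is a correct and worthwhile bookkeeping detail that the paper leaves implicit.
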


\begin{remark}
It is well known that the DAG-size of any formula $\phi$ is greater than or equal to the logarithm of the size of $\phi$. Thus if $\theta_n$ is a formula as in Theorem~\ref{ML}, the DAG-size of $\theta_n$ must be at least $\tower(n-2)$. Consequently the result of Corollary~\ref{succinct} also holds for DAG-size.
\end{remark}

\subsection{Succinctness of $2$-dimensional modal logic}

Our proof for the non-elementary succinctness gap between bisimulation invariant $\FO$ and $\ML$
is based on the fact that $n$-bisimilarity of two points $u,v\in W$ of a Kripke-model 
$\M=(W,R)$ is definable by a linear $\FO$-formula $\psi_n(x,y)$ (see the proof of Lemma~\ref{FO}). 
We will now show that the property $(\M, u)\bisim_n (\M, v)$ is succinctly expressible also in 
\emph{$2$-dimensional modal logic}. 


The idea in $2$-dimensional modal logic is that the truth of formulas is evaluated on 
pairs $(u,v)$ of points of Kripke-models instead of single points. 
We refer to the book \cite{MarxV97} of 
Marx and Venema and the series of papers \cite{Shehtman}, \cite{Gabbay2000-GABPOM-2}, \cite{Gabbay2002-GABPOM} of Gabbay and Shehtman for a detailed exposition on $2$-dimensional and multi-dimensional
modal logics. For our purposes it suffices to consider the logic Gabbay and Shehtman call $\mathbf{K}^2$. For consistency of notation in this paper we call the logic $\ML^2$ and introduce it only semantically.

A Kripke-model $\T$ for $\ML^2$ consists of a set $W$ of points, two binary accessibility 
relations $R_1$ and $R_2$, and a valuation $V : \Phi \to \powerset(W)$. Correspondingly,
$\ML^2$ has two modal operators $\di_1,\di_2$ and their duals $\bo_1,\bo_2$.
The semantics of these operators is defined as follows:
\begin{itemize}
\item $(\T, (u,v)) \vDash \di_1\phi \Leftrightarrow$ there is $u'\in W$ such that 
$uR_1u'$ and $(\T, (u',v)) \vDash \phi$,
\item $(\T, (u,v)) \vDash \di_2\phi \Leftrightarrow$ there is $v'\in W$ such that 
$vR_2v'$ and $(\T, (u,v')) \vDash \phi$,
\item $(\T, (u,v)) \vDash \bo_1\phi \Leftrightarrow$ for all $u'\in W$, if 
$uR_1u'$, then $(\T, (u',v)) \vDash \phi$,
\item $(\T, (u,v)) \vDash \bo_2\phi \Leftrightarrow$ for all $v'\in W$, if  
$vR_2v'$, then $(\T, (u,v')) \vDash \phi$.
\end{itemize}

Any pointed Kripke-model $(\M,w)=((W,R,V),w)$ can be interpreted as the $2$-dimensional pointed model
$(\M_2,(w,w))$, where $\M_2=(W,R,R,V)$. This gives us a meaningful way of defining properties of
pointed models $(\M,w)$ by formulas of $\ML^2$. In particular, we say that a formula 
$\varphi\in\ML^2$ separates two classes $\AA$ and $\BB$ of pointed models if 
for all $(\M,w)\in \AA$, $ (\M_2,(w,w))\vDash \varphi$ and for all $(\M,w)\in \BB$, 
$(\M_2,(w,w))\nvDash\varphi$.

The \emph{size} $\s(\varphi)$ of a formula $\varphi\in\ML^2$ is defined in the same
way as for formulas of $\ML$; see Definition~\ref{mlsize}.
In other words, $\s(\varphi)$ is the total number of modal operators, binary connectives and literals
occurring in $\varphi$.

Observe now that two pointed models $(\M,u)$ and $(\M,v)$ with no propositional symbols are $1$-bisimilar if and only if
$(\M_2,(u,v))\vDash\rho_1$, where $\rho_1:=\di_1\top\leftrightarrow\di_2\top$. 
Furthermore if $\rho_n\in\ML_2$ defines the class of all $2$-dimensional pointed models 
$(\M_2,(u,v))$ such that $(\M,u)\bisim_n(\M,v)$,
then $\rho_{n+1}:=\bo_1\di_2\rho_n\land\bo_2\di_1\rho_n$ defines the class of all 
$(\M_2,(u,v))$ such that $(\M,u)\bisim_{n+1}(\M,v)$.

\begin{lemma}\label{2-dim}
For each $n \in \Nset$ there is a formula $\zeta_n \in \ML^2$ that separates the classes $\AA_n$ and $\BB_n$ such that the size of $\zeta_n$ is exponential with 
respect to $n$, i.e., $\s(\zeta_n) = \ordo(2^n)$.
\end{lemma}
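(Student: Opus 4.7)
The plan is to define
\[
\zeta_n := \bo_1\bo_2\rho_n,
\]
where $\rho_n$ is the $\ML^2$-formula introduced in the paragraph immediately preceding the lemma. The point is that under the two-dimensional reading of a pointed model $(\M,w)$ as $(\M_2,(w,w))$ with $R_1=R_2=R$, evaluating $\bo_1\bo_2$ at $(w,w)$ universally quantifies over all pairs $(u,v)$ of $R$-successors of $w$. Combined with the fact that $\rho_n$ defines $n$-bisimilarity between the two coordinates, $\zeta_n$ then says exactly that every pair of successors of $w$ is $n$-bisimilar, which is the definition of $\AA_n$.

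First I would verify carefully the assertion that $(\M_2,(u,v))\vDash\rho_n$ iff $(\M,u)\bisim_n(\M,v)$. For $n=1$, with $\Phi=\emptyset$ the $0$-bisimilarity condition is vacuous, and $1$-bisimilarity collapses to the requirement that $u$ has a successor iff $v$ does, which is precisely what $\di_1\top\leftrightarrow\di_2\top$ expresses. For the inductive step, $\bo_1\di_2\rho_n$ evaluated at $(u,v)$ gives the forth condition and $\bo_2\di_1\rho_n$ the back condition for extending an $n$-bisimulation to an $(n+1)$-bisimulation between $(\M,u)$ and $(\M,v)$; the two dimensions are used crucially here to hold one of $u,v$ fixed while existentially picking a matching successor on the other side. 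Separation of $\AA_n$ from $\BB_n$ by $\zeta_n$ then falls out immediately from the definition of $\AA_n$ together with $R_1=R_2=R$ in $\M_2$.

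For the size bound, the formula $\rho_1$ must first be unfolded into negation normal form, for instance as $(\di_1\top\land\di_2\top)\lor(\bo_1\bot\land\bo_2\bot)$, which gives $\s(\rho_1)$ equal to a small constant. The recursion $\rho_{n+1}=\bo_1\di_2\rho_n\land\bo_2\di_1\rho_n$ yields the arithmetic recurrence $\s(\rho_{n+1})=2\s(\rho_n)+5$, which solves to $\s(\rho_n)=\ordo(2^n)$, and adding the two outer boxes gives $\s(\zeta_n)=\s(\rho_n)+2=\ordo(2^n)$. No step here is genuinely hard; the only point that deserves attention is confirming that the doubling in the recurrence really is a doubling, namely that $\rho_n$ appears twice side-by-side on the right-hand side of the definition rather than being nested inside another copy of itself, so that the growth rate stays exponential rather than becoming an iterated exponential tower.
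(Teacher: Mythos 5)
Your proposal is correct and follows exactly the paper's argument: the same formula $\zeta_n=\bo_1\bo_2\rho_n$ built from the recursively defined $\rho_n$, the same verification that $\rho_n$ captures $n$-bisimilarity coordinate-wise, and the same recurrence $\s(\rho_{n+1})=2\s(\rho_n)+5$ giving $\ordo(2^n)$. The paper simply states the exact value of the size without showing the unfolding of $\rho_1$ into negation normal form, a detail you handle slightly more explicitly.
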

\begin{proof}
Let $\zeta_n$ be the formula $\bo_1\bo_2\rho_n$. Then $(\M_2,(w,w))\vDash\zeta_n$ if and only if 
$(\M,u)$ and $(\M,v)$
are $n$-bisimilar for all $(\M, u), (\M, v) \in \seur{\M}{w}$, whence $\zeta_n$ separates $\AA_n$ from its complement $\BB_n$.
An easy calculation shows that the size of $\zeta_n$ is $2^{n+4}-3$. 
\end{proof}

By Theorem~\ref{Rosen}, for each $n\in\Nset$ there is a formula $\vartheta_n\in\ML$
that is equivalent with $\zeta_n$. On the other hand, by Theorem~\ref{ML} the size of $\vartheta_n$ 
is at least $\tower(n-1)$. Thus, we obtain the non-elementary succinctness gap already between $\ML^2$
and $\ML$.

\begin{corollary}
The $2$-dimensional modal logic $\ML^2$ is non-elementarily more succinct than $\ML$.
\end{corollary}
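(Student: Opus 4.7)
The plan is to combine the succinctness lower bound of Theorem~\ref{ML} with the upper bound provided by Lemma~\ref{2-dim}. Concretely, I would take the sequence $(\zeta_n)_{n\in\Nset}$ of $\ML^2$-formulas from Lemma~\ref{2-dim}, which separate the classes $\AA_n$ and $\BB_n$ and have size $\ordo(2^n)$, and show that any sequence of equivalent $\ML$-formulas must grow non-elementarily.

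First I would fix an arbitrary sequence $(\vartheta_n)_{n\in\Nset}$ of $\ML$-formulas such that each $\vartheta_n$ is equivalent to $\zeta_n$ in the sense that $\MM(\vartheta_n)=\{(\M,w)\mid (\M_2,(w,w))\vDash\zeta_n\}$. Since $\zeta_n$ separates $\AA_n$ from $\BB_n$, the equivalent $\ML$-formula $\vartheta_n$ must separate the same two classes. Theorem~\ref{ML} then yields the lower bound $\s(\vartheta_n) > \tower(n-1)$.

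Next, to conclude non-elementary succinctness in the formal sense stated in the introduction, I would let $m_n := \s(\zeta_n)$. By Lemma~\ref{2-dim} there is a constant $c$ with $m_n \le c\cdot 2^n$, hence $n \ge \log(m_n) - \log(c)$. Combining this with the lower bound gives $\s(\vartheta_n) > \tower(n-1) \ge \tower(\log(m_n) - \log(c) - 1)$, and the right-hand side is a non-elementary function of $m_n$. This witnesses that $\ML^2$ is non-elementarily more succinct than $\ML$.

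The only subtle point is the existence step: I need to know that \emph{some} equivalent $\ML$-formula exists at all, for otherwise the statement is vacuous. This is handled by Corollary~\ref{Rosen}, since $\zeta_n$ defines the same bisimulation-invariant class as $\phi_n$, so an equivalent $\ML$-formula $\vartheta_n$ is guaranteed to exist. I do not expect any real obstacle beyond checking that the two notions of equivalence (via pointed models for $\ML$ versus diagonal pairs for $\ML^2$) match up, which is immediate from the semantics of $\ML^2$ on the diagonal model $\M_2$.
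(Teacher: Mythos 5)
Your proposal is correct and follows essentially the same route as the paper: combine the exponential-size $\ML^2$-formulas $\zeta_n$ from Lemma~\ref{2-dim} with the $\tower(n-1)$ lower bound of Theorem~\ref{ML}, using Corollary~\ref{Rosen} for the existence of equivalent $\ML$-formulas. Your additional bookkeeping translating the bound into the formal definition of succinctness (via $m_n=\s(\zeta_n)=\ordo(2^n)$) is a welcome extra detail that the paper leaves implicit.
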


\section{The formula size game for $\L_\mu$}\label{mugamesection}

To define a formula size game similar to the one of $\ML$ for $\L_\mu$, we will need some additional notation and concepts, since $\L_\mu$ is significantly more complex than $\ML$. 

Let $(\set, \rel)$ be a tree and let $s, t \in \set$. We say that $s$ is \emph{above} $t$ if there is an $E$-path from $s$ to $t$. We say that $s$ is \emph{below} $t$ if $t$ is above $s$. A triple $(\set, \rel, \back)$ is a \emph{tree with back edges} if $(\set, \rel)$ is a tree and $s$ is below $t$ for every $(s, t) \in \back$.

We define for each formula $\phi \in \L_\mu$ its \emph{syntax tree with back edges}, $T_\phi = (V_\phi, E_\phi, B_\phi, \lab_\phi)$ as follows. The set $V_\phi$ consists of occurrences of subformulas of $\phi$ and the relation $E_\phi$ is the subformula relation between those occurrences. Additionally $\lab_\phi$ labels each vertex with its type (connective, modal operator, fixed point, literal or variable). Finally the relation $B_\phi$ contains a back edge from each vertex labelled with a variable to the successor of the fixed point binding that variable.

A \emph{partial function} $f : M \part N$ is a function $f' : M' \to N$ for some $M' \subseteq M$.
For a partial function $f : M \part N$ we denote by
\begin{multline*}
f[b_1/a_1, \dots, b_m/a_m, -/a_{m+1}, \dots, -/a_{m+n}] := \\ (f \setminus \{(a_i, b) \mid i \in \{1, \dots, m+n\}, b \in N\}) \cup \{(a_i, b_i) \mid i \in \{1, \dots, m\}\},
\end{multline*}	
the partial function, where values for $a_1, \dots a_m \in M$ are set to $b_1, \dots, b_m \in N$ respectively, and the values for $a_{m+1}, \dots a_{m+n} \in M$ are set as undefined.

We add some features to pointed Kripke-models for the game. A \emph{clocked model} is a tuple $(\A, w, c, a)$, where $(\A, w)$ is a pointed Kripke-model, $c: \Var \part \kappa$ and $a \in \{\new, \old\}$. Here $\kappa$ is a fixed cardinal larger than the size of the domain of $\A$. The partial function $c$ associates to each fixed point a clock to show how many times the model can return to that fixed point. As clocked models traverse a graph in the game, we use the identifier $\old$ to keep track of where they have been previously. We suppress the age identifier $a$ from the notation in cases where the distinction between new and old models does not matter. 


For simplicity we use the symbols $w$ and $c$ extensively and they should be read as ``the distinguished point and clocks of the model currently discussed'' throughout the rest of the paper.

Let $\AAA = (\A, w, \tup, a)$ be a clocked model and $\AA$ a set of clocked models. We redefine the following notations from the $\ML$ case for clocked models:
\begin{itemize}
	\item $\bo\AAA = \bo(\A, w, \tup, a) := \{(\A, w', \tup, a) \mid wR^\A w'\}$,
	\item $\bo\AA := \bigcup\limits_{\AAA \in \AA} \bo\AAA$.
	\item Let $f : \AA \to \bo\AA$ be a function such that $f(\AAA) \in \bo\AAA$ for every $\AAA \in \AA$. Then $\di_f\AA := f(\AA)$.
\end{itemize}
As for the $\ML$-game, the $\bo$-notation denotes the set of all successors of a single clocked model or a set of clocked models. The clocks and age identifier are inherited. The set $\di_f \AA$ contains one successor for each clocked model in $\AA$, given by the function $f$. 

Now let $\AA$ be a set of clocked models, $\AA_0$ a set of pointed models and $a \in \{\new, \old\}$. We use the following new notations:
\begin{itemize}
	\item $\poin(\AA) := \{(\M, w) \mid (\M, w, c, a) \in \AA \}$,
	\item $\clock(\AA_0) := \{ (\M, w, \emptyset, \new) \mid (\M, w) \in \AA_0\}$,
	\item $\AA_a := \{(\A, w, c, b) \in \AA \mid b = a\}$,
	\item $a(\AA) := \{(\A, w, c, a) \mid (\A, w, c, b) \in \AA \text{ for some } b \in \{\new, \old\}\}$.	
\end{itemize}
The set $\poin(\AA)$ contains the underlying pointed models of all clocked models in $\AA$ and the set $\clock(\AA_0)$ is the set of minimal clocked models with underlying pointed models from $\AA_0$. For an age identifier $a$, the set $\AA_a$ gives all clocked models in $\AA$ with that identifier and the set $a(\AA)$ gives all the models in $\AA$ with the age identifier changed to $a$. 

We define for $\L_\mu$ the standard approximant formulas that evaluate a fixed point only up to a bound. These approximants are formulas of \emph{infinitary} $\L_\mu$, where infinite conjunctions and disjunctions are allowed.

\begin{definition}
	Let $\alpha$ be an ordinal and $\psi(X)$ a formula of infinitary $\L_\mu$. Then the \emph{approximant formulas} $\mu^\alpha X. \psi(X)$ and $\nu^\alpha X. \psi(X)$ are defined by recursion as follows:
	\begin{itemize}
		\item $\mu^0 X. \psi(X) = \bot$ and $\nu^0 X. \psi(X) = \top$,
		\item $\mu^{\alpha+1}X.\psi(X) = \psi(\mu^\alpha X. \psi(X))$ and $\nu^{\alpha+1}X.\psi(X) = \psi(\nu^\alpha X. \psi(X))$,
		\item $\mu^\lambda X .\psi(X) = \bigvee\limits_{0 < \alpha < \lambda}\mu^\alpha X. \psi(X)$ and $\nu^\lambda X .\psi(X) = \bigwedge\limits_{0 < \alpha < \lambda}\nu^\alpha X. \psi(X)$ for a limit ordinal $\lambda$.
	\end{itemize}
\end{definition}

This definition differs from the usual one (see e.g. \cite{lenzi}) in that we leave out the $\bot$ disjunct and $\top$ conjunct in the limit ordinal cases, and more importantly, we do not necessarily approximate all fixed points so the resulting formula is not necessarily in infinitary $\ML$ but instead in infinitary $\L_\mu$. During the game we approximate several fixed points at once, starting from a specific point in the formula.
We define our own approximate formulas to reflect this.
\begin{definition}
	Let $\phi \in \L_\mu$. Let $T_\phi = (\set_\phi, \rel_\phi, \back_\phi, \lab_\phi)$ be the syntax tree with back edges of $\phi$ and let $s \in \set_\phi$. Let $s_1, \dots, s_n$ be the fixed point nodes above $s$ in $T_\phi$ in order with $s_1$ being the outermost and $s_n$ the innermost, and let $\lab_\phi(s_i) = \eta_i X_i$ for each $i \in \{1, \dots, n\}$. Let $c : \Var \part \kappa$ be a partial function with $\dom(c) = \{X_1, \dots, X_n\}$.
	
	The \emph{$(c, s)$-approximant} of $\phi$, $\phi^c_s$, is defined recursively as follows:
	\begin{itemize}
		\item if $\lab(s) = l \in \Lit$, then $\phi_s^c = l$,
		\item if $\lab(s) = \nabla \in \{\lor, \land\}$ and $s_1, s_2$ are the successors of $s$, then $\phi_s^c = \phi_{s_1}^c \nabla \phi_{s_2}^c$, 
		\item if $\lab(s) = \Delta \in \{\di, \bo\}$ and $s_1$ is the successor of $s$, then $\phi_s^c = \Delta \phi_{s_1}^c$, 
		\item if $\lab(s) = \eta X$, where $\eta \in \{\mu, \nu\}$ and $X \in \Var$, and $s_1$ is the successor of $s$, then $\phi_s^c = \eta X.\phi_{s_1}^{c}$,
		\item if $\lab(s) = X \in \Var \setminus \dom(c)$, then $\phi_s^c = X$.
		\item Let $\lab(s) = X_i$ and let $u$ be the $B_\phi$-successor of $s$. 
		
		If $c(X_i) = 0$, then if $\lab(s_i) = \mu X_i$, $\phi_s^c = \bot$ and if $\lab(s_i) = \nu X_i$, $\phi_s^c = \top$.
		
		If $c(X_i) = \alpha + 1$ for some ordinal $\alpha$, let $c_\alpha = c[\alpha/X_i, -/X_{i+1}, \dots, -/X_n]$. Now $\phi_s^c = \phi_u^{c_\alpha}$. 
		
		If $c(X_i)$ is a limit ordinal, let $c_\alpha = c[\alpha/X_i, -/X_{i+1}, \dots, -/X_n]$ for every $\alpha < c(X_i)$. Now 
		\[
		\phi_s^c = \bigvee\limits_{\alpha < c(X_i)} \phi_u^{c_\alpha} \text{ if } \lab(s_i) = \mu X_i \text{ \quad and \quad } \phi_s^c = \bigwedge\limits_{\alpha < c(X_i)} \phi_u^{c_\alpha} \text{ if } \lab(s_i) = \nu X_i.
		\]	
	\end{itemize}
\end{definition}

The formulas $\phi_s^c$ can contain infinite conjunctions and disjunctions but if all clocks are finite, then $\phi_s^c$ is an $\L_\mu$-formula. For instance if all models considered are finite, then finite clocks suffice. The following lemma formalizes the relationship of our approximant with the usual one.

\begin{lemma}\label{kellot2}
	Let $\phi \in \L_\mu$ and let $s$ be a vertex in the syntax tree of $\phi$ with $\lab(s) = \eta X$, where $\eta \in \{\mu, \nu\}$. Let $s_1$ be the successor of $s$ and let $c : \Var \part \kappa$ be a partial function with $X \notin \dom(c)$. Let $c_\alpha = c[\alpha/X]$. Now
	\[
	\phi_{s_1}^{c_\alpha} = \eta^{\alpha+1}X.\phi_{s_1}^c(X).
	\]
\end{lemma}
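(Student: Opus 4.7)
The plan is transfinite induction on $\alpha$, matching the three cases (zero, successor, limit) in the definition of $\eta^\alpha$. The key observation driving the whole proof is that computing $\phi_{s_1}^{c_\alpha}$ by the recursive definition of approximants proceeds identically to computing $\phi_{s_1}^c$ everywhere in the subtree of $s_1$ except at nodes labeled with the variable $X$. At such a node the value $c_\alpha(X) = \alpha$ determines how to unfold; at every other node (including $X_i$-nodes for variables $X_i$ bound above $s$ and truly free variables) the outcomes agree because $c$ and $c_\alpha$ agree outside $X$. Moreover, because $X \notin \dom(c)$, the clause $\lab(s) = X \in \Var \setminus \dom(c)$ applies when computing $\phi_{s_1}^c$, so $\phi_{s_1}^c$ is literally $\phi_{s_1}^c(X)$ with $X$ free, and $\phi_{s_1}^{c_\alpha}$ is the result of substituting a fixed formula for each free occurrence of $X$.

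For the base case $\alpha = 0$, every $X$-node contributes $\bot$ (if $\eta = \mu$) or $\top$ (if $\eta = \nu$), so $\phi_{s_1}^{c_0} = \phi_{s_1}^c(\eta^0 X.\phi_{s_1}^c(X)) = \eta^{1}X.\phi_{s_1}^c(X)$ directly by the definition of $\eta^{\alpha+1}$.

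For the successor case, assume the claim for $\alpha$. Since $c_{\alpha+1}(X) = \alpha+1$ is a successor, at each $X$-node the recursion invokes $\phi_{s_1}^{(c_{\alpha+1})_\alpha}$. Because $X$ is the innermost fixed-point variable above $s_1$, the erasure part of the $[\alpha/X, -/X_{i+1}, \ldots, -/X_n]$ update is empty, so $(c_{\alpha+1})_\alpha = c_\alpha$. Hence $\phi_{s_1}^{c_{\alpha+1}}$ is $\phi_{s_1}^c$ with each free $X$ replaced by $\phi_{s_1}^{c_\alpha}$, which by induction equals $\eta^{\alpha+1}X.\phi_{s_1}^c(X)$. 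Collecting, $\phi_{s_1}^{c_{\alpha+1}} = \phi_{s_1}^c(\eta^{\alpha+1}X.\phi_{s_1}^c(X)) = \eta^{\alpha+2}X.\phi_{s_1}^c(X)$.

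The subtle case is the limit case $\alpha = \lambda$. Then, at each $X$-node, the recursion produces $\bigvee_{\beta<\lambda}\phi_{s_1}^{c_\beta}$ in the $\mu$-case and dually for $\nu$. By the induction hypothesis this is $\bigvee_{\beta<\lambda}\mu^{\beta+1}X.\phi_{s_1}^c(X)$. Reindexing via $\gamma = \beta+1$ and using that $\lambda$ is a limit ordinal (so $\{\beta+1 : \beta < \lambda\} = \{\gamma : 0 < \gamma < \lambda\}$), this disjunction equals $\mu^\lambda X.\phi_{s_1}^c(X)$ by definition. Therefore
\[
\phi_{s_1}^{c_\lambda} = \phi_{s_1}^c\bigl(\mu^\lambda X.\phi_{s_1}^c(X)\bigr) = \mu^{\lambda+1}X.\phi_{s_1}^c(X),
\]
and dually for $\nu$. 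The main obstacle I anticipate is bookkeeping: verifying that the multiple free occurrences of $X$ in $\phi_{s_1}^c$ all get uniformly replaced by the same subformula during the recursive computation of $\phi_{s_1}^{c_\alpha}$, and that the reindexing of the limit disjunction matches the clause $\mu^\lambda X.\psi = \bigvee_{0 < \gamma < \lambda}\mu^\gamma X.\psi$ on the nose. Neither difficulty requires semantic (monotonicity) reasoning; both are purely syntactic once the tree structure is read off carefully.
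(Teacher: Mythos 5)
Your proposal is correct and follows essentially the same route as the paper: a transfinite induction on $\alpha$ whose content is that $\phi_{s_1}^{c_\alpha}$ is obtained from $\phi_{s_1}^c$ by uniformly substituting at the $X$-nodes (the "bookkeeping" you defer is carried out in the paper as an inner structural induction over the nodes below $s$), using the same observations that no clocks are reset because $X$ is innermost and that the limit-case disjunction reindexes to the $\mu^\lambda$ clause. The reindexing identification $\bigvee_{\beta<\lambda}\mu^{\beta+1} = \bigvee_{0<\gamma<\lambda}\mu^{\gamma}$ is exactly the step the paper itself takes, so no gap there.
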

\begin{proof}
	Since $\eta^{\alpha+1}X.\phi_{s_1}^c(X) = \phi_{s_1}^c(\eta^\alpha X.\phi_{s_1}^c(X))$, where the parentheses notation refers to substituting free occurrences of $X$ with a formula, we may rewrite the claim in the form $\phi_{s_1}^{c_\alpha} = \phi_{s_1}^c(\eta^\alpha X.\phi_{s_1}^c(X))$. We show this by transfinite induction on $\alpha$.
	\begin{itemize}
		\item If $\alpha = 0$, then it is easy to see that $\phi_{s_1}^{c_\alpha} = \phi_{s_1}^c(\xi) = \phi_{s_1}^c(\eta^0 X.\phi_{s_1}^c)$, where $\xi = \bot$ if $\eta = \mu$ and $\xi = \top$ if $\eta = \nu$.
		
		\item Let $\alpha = \beta + 1$. By induction hypothesis $\phi_{s_1}^{c_\beta} = \phi_{s_1}^c(\eta^\beta X.\phi_{s_1}^c(X)) = \eta^\alpha X.\phi_{s_1}^c(X)$. We show by induction on the definition of $\phi_{t}^{c_\alpha}$, where $t$ is below $s$ in the syntax tree of $\phi$, that $\phi_{t}^{c_\alpha} = \phi_{t}^c(\eta^\alpha X.\phi_{s_1}^c(X))$. We first note that as $s_1$ is the successor of the fixed point node $s$, the fixed point of $X$ is the innermost one in $\dom(c_\alpha)$.
		\begin{itemize}
			\item If $\lab(t) = l \in \Lit$, then $\phi_{t}^{c_\alpha} = l = \phi_{t}^c(\eta^\alpha X.\phi_{s_1}^c(X))$.
			
			\item If $\lab(t) = Y \in \Var \setminus \dom(c_\alpha)$, then $\phi_{t}^{c_\alpha} = Y = \phi_{t}^c(\eta^\alpha X.\phi_{s_1}^c(X)).$
			
			\item Let $\lab(t) = Y \in \dom(c_\alpha) \setminus \{X\}$ and let $u$ be the $B$-successor of $t$. Now for some $c'$, $\phi_{t}^{c_\alpha} = \phi_u^{c'}$. Since the fixed point of $X$ is inside that of $Y$, $\phi_u^{c'}$ contains no free occurrences of $X$. Thus $\phi_u^{c'} = \phi_{t}^c(\eta^\alpha X.\phi_{s_1}^c(X))$.
			
			\item If $\lab(t) = X$, then $\phi_{t}^{c_\alpha} = \phi_{s_1}^{c_\beta}$. Because $\eta X$ is the innermost approximated fixed point, no clocks need to be reset. Since $\phi_t^c = X$ and by the induction hypothesis on $\alpha$, 
			\[
			\phi_t^{c_\alpha} = \phi_{s_1}^{c_\beta} = \eta^\alpha X.\phi_{s_1}^c(X) = \phi_{t}^c(\eta^\alpha X.\phi_{s_1}^c(X)).
			\]
			
			\item If $\lab(t) = \nabla \in \{\lor, \land\}$ and $t_1$ and $t_2$ are the successors of $t$, then by induction hypothesis, $\phi_{t_1}^{c_\alpha} = \phi_{t_1}^c(\eta^\alpha X.\phi_{s_1}^c(X))$ and $\phi_{t_2}^{c_\alpha} = \phi_{t_2}^c(\eta^\alpha X.\phi_{s_1}^c(X))$. Now 
			\begin{align*}
			\phi_{t}^{c_\alpha} &= \phi_{t_1}^{c_\alpha} \nabla \phi_{t_2}^{c_\alpha} = \phi_{t_1}^c(\eta^\alpha. X\phi_{s_1}^c(X)) \nabla \phi_{t_2}^c(\eta^\alpha X.\phi_{s_1}^c(X)) \\
			&= (\phi_{t_1}^c \nabla \phi_{t_2}^c)(\eta^\alpha X.\phi_{s_1}^c(X)) = \phi_t^c(\eta^\alpha X.\phi_{s_1}^c(X)).
			\end{align*}
			
			\item If $\lab(t) = \Delta \in \{\di, \bo\}$ and $t_1$ is the successor of $t$, then by induction hypothesis we have $\phi_{t_1}^{c_\alpha} = \phi_{t_1}^c(\eta^\alpha X.\phi_{s_1}^c(X))$. Thus 
			\[
			\phi_t^{c_\alpha} = \Delta \phi_{t_1}^{c_\alpha} = \Delta \phi_{t_1}^c(\eta^\alpha X .\phi_{s_1}^c(X)) = (\Delta \phi_{t_1}^c)(\eta^\alpha X.\phi_{s_1}^c(X)) = \phi_t^c(\eta^\alpha X.\phi_{s_1}^c(X)).
			\]
			
			\item If $\lab(t) = \eta_1 Y$, where $\eta_1 \in \{\mu, \nu\}$ and $Y \in \Var$, and $t_1$ is the successor of $t$, then by induction hypothesis $\phi_{t_1}^{c_\alpha} = \phi_{t_1}^c(\eta^\alpha X.\phi_{s_1}^c(X))$. Thus 
			\[
			\phi_t^{c_\alpha} = \eta_1 Y. \phi_{t_1}^{c_\alpha} = \eta_1 Y. \phi_{t_1}^c(\eta^\alpha X.\phi_{s_1}^c(X)) = (\eta_1 Y. \phi_{t_1}^c)(\eta^\alpha X.\phi_{s_1}^c(X)) = \phi_t^c(\eta^\alpha X.\phi_{s_1}^c(X)).
			\]
		\end{itemize}
		\item Now let $\alpha$ be a limit ordinal. By induction hypothesis $\phi_{s_1}^{c_\beta} = \phi_{s_1}^c(\eta^\beta X.\phi_{s_1}^c(X))$ for all $\beta < \alpha$. 
		The induction on $\phi_{t}^{c_\alpha}$ is handled in the same way as in the previous case with the exception of the $X$-case.
		\begin{itemize}
			\item If $\lab(t) = X$, then $\phi_{t}^{c_\alpha} = \mnabla\limits_{\beta < \alpha}\phi_{s_1}^{c_\beta}$, where $\mnabla = \bigvee$ if $\eta = \mu$ and $\mnabla = \bigwedge$ if $\eta = \nu$. By the induction hypothesis on $\alpha$, $\phi_{s_1}^{c_\beta} = \phi_{s_1}^c(\eta^\beta X.\phi_{s_1}^c(X))$ for all $\beta < \alpha$ so 
			\begin{align*}
			\phi_{t}^{c_\alpha} &= \mnabla\limits_{\beta < \alpha}\phi_{s_1}^c(\eta^\beta X.\phi_{s_1}^c(X)) 
			= \mnabla\limits_{\beta < \alpha}\eta^{\beta+1} X.\phi_{s_1}^c(X) 
			= \mnabla\limits_{0 < \beta < \alpha}\eta^\beta X. \phi_{s_1}^c(X) \\
			&= \eta^\alpha X. \phi_{s_1}^c(X)
			= \phi_t^c(\eta^\alpha X. \phi_{s_1}^c(X)).\qedhere
			\end{align*}
		\end{itemize}
	\end{itemize}
\end{proof}

\subsection*{The definition of the $\mathbf{\L_\mu}$-game}

Let $\Phi$ be a fixed finite set of propositional symbols. The formula size game for $\L_\mu(\Phi)$, $\FS_k(\AA_0, \BB_0)$, has two players, S (Samson) and D (Delilah). The game has as parameters two sets of pointed $\Phi$-models, $\AA_0$ and $\BB_0$, and a natural number $k$. S wants to show that the sets $\AA_0$ and $\BB_0$ can be separated with a $\L_\mu(\Phi)$-formula of size at most $k$. D on the other hand wants to show this is not possible. During the game S constructs step by step the syntax tree of a formula that, he claims, separates the sets. The number $k$ is a resource that is spent when S adds vertices to the syntax tree. If the resource $k$ ever runs out, S loses the game. S has to simultaneously show how the models in $\AA_0$ make the formula true and how the models in $\BB_0$ make it false. Each model traverses the incomplete syntax tree in a fashion similar to semantic games. The role of D is to keep S honest by deciding which branch of the tree she wants to see next. 

The modal $\mu$-calculus has the special feature of fixed point formulas. In terms of models traversing the syntax tree of a formula, the truth of a least fixed point $\mu X$ comes down to the model having to eventually stop returning to that fixed point. Thus, when entering such a fixed point, S must set a clock for each model, that shows how many more times he will return the model to that fixed point. On the other hand, to show a $\mu X$-formula is false in a model, the model would have to keep returning to the fixed point forever. Here it is the responsibility of D to declare how many returns is enough for her to be satisfied that the formula is indeed false. We now present the quite complex formalization of the game.

Let $\AA_0$ and $\BB_0$ be sets of pointed models and let $k_0 \in \Nset$. Let $\set^*$ be a predefined infinite set of vertices. The formula size game $\FS_{k_0}(\AA_0, \BB_0)$ for the modal $\mu$-calculus has two players, S and D. The positions of the game are of the form $P = (\set, \rel, \back, \lab, \res, \lef, \rig, v)$. Here $\set \subseteq \set^*$ and $(\set,\rel, \back)$ is a tree with back edges. The partial function 
\[
\lab : \set \part \{\land, \lor, \di, \bo\} \cup \Var \cup \{\mu X, \nu X \mid X \in \Var \} \cup \Lit(\Phi)
\]
assigns a label to some vertices of the tree.
The function $\res : \set \to \Nset$ assigns to each vertex the remaining resource, i.e. an upper bound for the size of the subformula starting from the vertex. The function $\lef: \set \to \powerset(\AA_0^*)$ assigns to each $v \in \set$ its left set of clocked models $\lef(v)$. Here $\AA_0^*$ contains all the clocked models obtainable from models of $\AA_0$ by altering the distinguished point, clocks and age identifier. Similarly, $\rig: \set \to \powerset(\BB_0^*)$ assigns the right set $\rig(v)$. The clock function of each model is of the form $c: \Var \part \kappa$, where $\kappa$ is a fixed cardinal larger than the size of the domain of any model in $\AA_0 \cup \BB_0$. Finally the vertex $v \in \set$ is the \emph{current vertex} of the position. We will always assume that the position $P$ has components with these names and $P'$ always consists of the same components with primes.


The starting position of the game is 
\[
(\{v_0\}, \emptyset, \emptyset, \emptyset, \{(v_0, k_0)\}, \{(v_0, \clock(\AA_0))\}, \{(v_0, \clock(\BB_0))\}, v_0).
\]

The first move is always D choosing finite subsets $\AA \subseteq \clock(\AA_0)$ and $\BB \subseteq \clock(\BB_0)$. The following position is
\[
(\{v_0\}, \emptyset, \emptyset, \emptyset, \{(v_0, k_0)\}, \{(v_0, \AA)\}, \{(v_0, \BB)\}, v_0).
\]

Throughout the whole game, D wins if at any position $P$, $\res(v) = 0$. D also wins if S is unable to make the choices required by a move.
Assume the game is in position $P$ and let $\lef(v) = \AA$, $\rig(v) = \BB$ and $\res(v) = k > 0$. We define two cases by whether $v$ already has a label or not. In each case we denote the following position by $P'$. 

\paragraph*{$v \notin \dom(\lab)$:}

S has a choice of eight different moves. Note that in this case $\AA = \AA_\new$ and $\BB = \BB_\new$.
\begin{itemize}
	\item{\bf $\lor$-move:} S chooses sets $\AA_1, \AA_2 \subseteq \AA$ s.t. $\AA_1 \cup \AA_2 = \AA$, and numbers $0< k_1, k_2 \leq k$ s.t. $k_1 + k_2 + 1 = k$. Then D chooses a number $i \in \{1,2\}$.
	Let $\set' = \set \cup \{v_1, v_2\}$, $\rel' = \rel \cup \{(v,v_1), (v,v_2)\}$, $B' = B$, 
	$\lab' = \lab[\lor/v]$,
	$\lef' = \lef[\AA_1/v_1, \AA_2/v_2, \old(\AA)/v]$, 
	$\rig' = \rig[\BB/v_1, \BB/v_2, \old(\BB)/v]$,
	$\res' = \res[k_1/v_1, k_2/v_2]$ and $v' = v_i$, where $v_1$ and $v_2$ are new vertices.
	
	\item{\bf $\land$-move:} Same as the $\lor$-move with the roles of $\AA$ and $\BB$ switched.
	
	\item{\bf $\di$-move:} S chooses a function $f : \AA \to \bo\AA$ such that $f(\AAA) \in \bo\AAA$ for each $\AAA \in \AA$. Then D chooses finite subsets $\AA' \subseteq \di_f\AA$ and $\BB' \subseteq \bo\BB$.
	Let $V' = V \cup \{v'\}$, $E' = E \cup \{(v, v')\}$, $B' = B$,
	$\lab' = \lab[\di/v]$,
	$\lef' = \lef[\AA'/v', \old(\AA)/v]$, $\rig' = \rig[\BB'/v', \old(\BB)/v]$ and $\res' = \res[k-1/v']$, where $v'$ is a new vertex.
	
	\item{\bf $\bo$-move:} Same as the $\di$-move with the roles of $\AA$ and $\BB$ switched.
	
	\item{\bf $\mu X$-move:} S chooses a variable $X \in \Var$ and for every $\AAA = (\A_\AAA, w_\AAA, \tup_\AAA) \in \AA$ an ordinal $\alpha_\AAA$. Then D chooses for every $\BBB = (\B_\BBB, w_\BBB, c_\BBB) \in \BB$ an ordinal $\alpha_\BBB$. Let $\tup'_\AAA = \tup_\AAA[\alpha_\AAA/X]$ and let $\AA' = \{(\A_\AAA, w_\AAA, \tup'_\AAA) \mid \AAA \in \AA\}$. Let $c'_\BBB = c_\BBB[\alpha_\BBB/X]$ and let $\BB' = \{(\B_\BBB, w_\BBB, c'_\BBB) \mid \BBB \in \BB\}$. Let $\set' = \set \cup \{v'\}$, $\rel' = \rel \cup \{(v, v')\}$, $B' = B$, $\lab' = \lab[\mu X/v]$, 
	$\lef' = \lef[\AA'/v', \old(\AA)/v]$, $\rig' = \rig[\BB'/v', \old(\BB)/v]$ and $\res' = \res[k-1/v']$.

	\item{\bf $\nu X$-move:} Same as the $\mu$-move with the roles of $\AA$ and $\BB$ switched.
	
	\item{\bf $X$-move:} S chooses $X \in \Var$. Let $u \in \set$ be the closest vertex above $v$ with $\lab(u) \in \{\mu X, \nu X\}$. If no such vertex exists, D wins the game. Otherwise if $\AA_\new = \BB_\new = \emptyset$, S wins the game. 
	
	Let $v'$ be the successor vertex of $u$. Let $\set' = \set$, $\rel' = \rel$, $\back' = \back \cup \{(v, v')\}$,
	$\lab' = \lab[X/v]$ and $\res' = \res$. 
	
	Assume that $\lab(u) = \mu X$. If $\tup(X) = 0$ for some $(\A, w, c) \in \AA$, D wins the game. 	
	Otherwise for every $\AAA = (\A_\AAA, w_\AAA, c_\AAA, \new) \in \AA$, S chooses $\alpha_\AAA < \tup_\AAA(X)$. 
	
	Let $\BB_+ = \{(\B, w, c, \new) \in \BB \mid c(X) \neq 0\}$. For every $\BBB = (\B_\BBB, w_\BBB, c_\BBB, \new) \in \BB_+$, D chooses $\alpha_\BBB < c_\BBB(X)$. 
	
	Let $Y_1, \dots, Y_n$ be the variables for which there is a node $t_i$ on the path from $u$ to $v$ with $\lab(t_i) \in \{\mu Y_i, \nu Y_i\}$. Let $\tup'_\AAA = \tup_\AAA[\alpha_\AAA/X, -/Y_1, \dots, -/Y_n]$ and let $\AA' = \{(\A_\AAA, w_\AAA, c'_\AAA, \new) \mid \AAA \in \AA\}$. Similarly let $c'_\BBB = c_\BBB[\alpha_\BBB/X, -/Y_1, \dots, -/Y_n]$ and $\BB' = \{(\B_\BBB, w_\BBB, c'_\BBB, \new) \mid \BBB \in \BB_+\}$. Let $\lef' = \lef[\AA'\cup\lef(v')/v', \old(\AA)/v]$ and $\rig' = \rig[\BB'\cup\lef(v')/v', \old(\BB)/v]$. 
	
	The case $\lab(u) = \nu X$ is the same with the roles of $\AA$ and $\BB$ switched.

	\item{\bf $\Lit$-move:} S chooses a $\Phi$-literal $\lit$. Let $\lab' = \lab[\lit/v]$ and let $\Delta' = \Delta$ for every other component $\Delta$. In the following position $P'$, if $\lit$ separates $\AA$ and $\BB$, then S wins the game. Otherwise, D wins.
\end{itemize}
\paragraph*{$v \in \dom(\lab)$:}
In this case S must perform the move dictated by $\lab(v)$ without creating any new vertices. These moves are essentially performed only on new models. We again denote the following position by $P'$ and in each case we have $\nabla' = \nabla$ for $\nabla \in \{\set, \rel, \back, \lab, \res\}$.
\begin{itemize}
	\item If $\lab(v) = \lor$, then let $v_1$ and $v_2$ be the successors of $v$. S chooses sets $\AA_1, \AA_2 \subseteq \AA_\new$ s.t. $\AA_1 \cup \AA_2 = \AA_\new$. Then D chooses a number $i \in \{1,2\}$. Let 
	$\lef' = \lef[\AA_1 \cup \lef(v_1)/v_1, \AA_2 \cup \lef(v_2)/v_2, \old(\AA)/v]$, $\rig' = \rig[\BB \cup \rig(v_1)/v_1, \BB \cup \rig(v_2)/v_2, \old(\BB)/v]$ and $v' = v_i$. 
	
	\item The case $\lab(v) = \land$ is the same as $\lor$ with the roles of $\AA$ and $\BB$ switched.
	
	\item If $\lab(v) = \di$, then let $v'$ be the successor of $v$. S chooses a function $f : \AA_\new \to \bo\AA_\new$ such that $f(\AAA) \in \bo\AAA$ for each $\AAA \in \AA_\new$. Then D chooses finite $\AA' \subseteq \di_f\AA_\new$ and $\BB' \subseteq \bo\BB_\new$. Let 
	$\lef' = \lef[\AA'\cup\lef(v')/v_1, \old(\AA)/v]$ and $\rig' = \rig[\BB'\cup\rig(v')/v_1, \old(\BB)/v]$. 
	
	\item The case $\lab(v) = \bo$ is the same as $\di$ with the roles of $\AA$ and $\BB$ switched.
	
	\item If $\lab(v) = \mu X$ for some $X \in \Var$, then let $v'$ be the successor of $v$. for every $\AAA = (\A_\AAA, w_\AAA, \tup_\AAA) \in \AA_\new$ an ordinal $\alpha_\AAA$. Then D chooses for every $\BBB = (\B_\BBB, w_\BBB, c_\BBB) \in \BB_\new$ an ordinal $\alpha_\BBB$. Let $\tup'_\AAA = \tup_\AAA[\alpha_\AAA/X]$ and let $\AA' = \{(\A_\AAA, w_\AAA, \tup'_\AAA) \mid \AAA \in \AA_\new\}$. Let $\tup'_\BBB = \tup_\BBB[\alpha_\BBB/X]$ and let $\BB' = \{(\B_\BBB, w_\BBB, \tup'_\BBB) \mid \BBB \in \BB_\new\}$. Let $\lef' = \lef[\AA'\cup\lef(v')/v', \old(\AA)/v]$ and $\rig' = \rig[\BB'\cup\rig(v')/v', \old(\BB)/v]$. 
	
	\item If $\lab(v) = X \in \Var$, then let $v'$ be the $B$-successor of $v$. The rest is very similar to the unlabelled $X$-move; the only differences are that the move is again essentially only performed on new models and the condition for an immediate win for S is $\AA_\new = \BB_\new = \emptyset$.
	
\end{itemize}

Note that just like in the $\ML$-game, the $\di$-move cannot be performed if $\bo \AAA = \emptyset$ for some $\AAA \in \AA$, and dually for the $\bo$-move.

Figures \ref{kuva1} and \ref{kuva2} show how the partial syntax tree $(\set, \rel, \back, \lab)$ changes when a $\land$-move or an $X$-move is made. The current vertex $v$ is highlighted. The numbers near the vertices show the remaining resource given by $\res$. In addition each vertex has a left and a right set of clocked models given by $\lef$ and $\rig$. Of these only the new models affected by the move are shown in the pictures.

\begin{figure}[ht]
	\centering
	\begin{tikzpicture}
	\node[circle, inner sep = 1pt, minimum size = 21pt, draw, very thick, label={left:8}] (1) at (0,3) {$\mu X$};
	\node[circle, inner sep = 1pt, minimum size = 20pt, draw, very thick, label={left:7}] (2) at (0,1.75) {$\lor$};
	\node[circle, inner sep = 1pt, minimum size = 20pt, draw, very thick, label={above:4}] (3) at (-1.5,.75) {$\land$};
	\node[circle, inner sep = 1pt, minimum size = 20pt, draw, very thick, label={left:2}] (4) at (1.5,.75) {$\di$};
	\node[circle, inner sep = 1pt, minimum size = 20pt, draw, very thick, label={left:1}] (5) at (1.5,-.5) {$X$};
	\node[circle, inner sep = 1pt, minimum size = 20pt, draw, very thick, label={above:1}, label={left:$\AA$}, label={right:$\BB\Smash{_1}$}] (6) at (-2.5,-.5) {};
	\node[circle, inner sep = 1pt, minimum size = 20pt, draw, very thick, label={above:2}, label={left:$\AA$}, label={right:$\BB\Smash{_2}$}] (7) at (-.5,-.5) {};
	\node[circle, inner sep = 1pt, minimum size = 16pt, draw, very thick] (7') at (-.5,-.5) {};
	
	\draw[->, thick] (1) to (2);
	\draw[->, thick] (2) to (3);
	\draw[->, thick] (2) to (4);
	\draw[->, thick] (4) to (5);
	\draw[->, thick] (3) to (6);
	\draw[->, thick] (3) to (7);
	
	\draw[->, thick] (5) .. controls (3, 1) and (1.5, 3) .. (2);
	
	\node[circle, inner sep = 1pt, minimum size = 21pt, draw, very thick, label={left:8}] (1) at (-8,3) {$\mu X$};
	\node[circle, inner sep = 1pt, minimum size = 20pt, draw, very thick, label={left:7}] (2) at (-8,1.75) {$\lor$};
	\node[circle, inner sep = 1pt, minimum size = 20pt, draw, very thick, label={above:4}, label={left:$\AA$}, label={right:$\BB$}] (3) at (-1.5-8,.75) {};
	\node[circle, inner sep = 1pt, minimum size = 20pt, draw, very thick, label={left:2}] (4) at (1.5-8,.75) {$\di$};
	\node[circle, inner sep = 1pt, minimum size = 20pt, draw, very thick, label={left:1}] (5) at (1.5-8,-.5) {$X$};
	\node[circle, inner sep = 1pt, minimum size = 16pt, draw, very thick] (3') at (-1.5-8,.75) {};
	
	\draw[->, thick] (1) to (2);
	\draw[->, thick] (2) to (3);
	\draw[->, thick] (2) to (4);
	\draw[->, thick] (4) to (5);

	\draw[->, thick] (5) .. controls (3-8, .5) and (2-8, 2) .. (2);
	
	\node (nuoli) at (-4, 1) {\huge $\Rightarrow$};
	
	\end{tikzpicture}
	\caption{}
	\label{kuva1}
\end{figure}

\begin{figure}[ht]
	\centering
	\begin{tikzpicture}
	\node[circle, inner sep = 1pt, minimum size = 21pt, draw, very thick, label={left:8}] (1) at (0,3) {$\mu X$};
	\node[circle, inner sep = 1pt, minimum size = 20pt, draw, very thick, label={below:7}, label={left:$\AA'$}, label={right:$\BB'$}] (2) at (0,1.75) {$\lor$};
	\node[circle, inner sep = 1pt, minimum size = 20pt, draw, very thick, label={above:4}] (3) at (-1.5,.75) {};
	\node[circle, inner sep = 1pt, minimum size = 20pt, draw, very thick, label={left:2}] (4) at (1.5,.75) {$\di$};
	\node[circle, inner sep = 1pt, minimum size = 20pt, draw, very thick, label={below:1}] (5) at (1.5,-.5) {$X$};
	
	\draw[->, thick] (1) to (2);
	\draw[->, thick] (2) to (3);
	\draw[->, thick] (2) to (4);
	\draw[->, thick] (4) to (5);

	\draw[->, thick] (5) .. controls (3, 1) and (1.5, 3) .. (2);
	
	\node[circle, inner sep = 1pt, minimum size = 21pt, draw, very thick, label={left:8}] (1) at (-8,3) {$\mu X$};
	\node[circle, inner sep = 1pt, minimum size = 20pt, draw, very thick, label={below:7}] (2) at (-8,1.75) {$\lor$};
	\node[circle, inner sep = 1pt, minimum size = 20pt, draw, very thick, label={above:4}, label={left:$\AA$}] (3) at (-1.5-8,.75) {};
	\node[circle, inner sep = 1pt, minimum size = 20pt, draw, very thick, label={left:2}] (4) at (1.5-8,.75) {$\di$};
	\node[circle, inner sep = 1pt, minimum size = 20pt, draw, very thick, label={below:1}, label={left:$\AA$}, label={right:$\BB$}] (5) at (1.5-8,-.5) {};
	\node[circle, inner sep = 1pt, minimum size = 16pt, draw, very thick] (5') at (1.5-8,-.5) {};
	
	\draw[->, thick] (1) to (2);
	\draw[->, thick] (2) to (3);
	\draw[->, thick] (2) to (4);
	\draw[->, thick] (4) to (5);

	\node (nuoli) at (-4, 1) {\huge $\Rightarrow$};
	
	\end{tikzpicture}
	\caption{}
	\label{kuva2}
\end{figure}
An important feature of our game is that, even though it contains infinite branching, every single play of the game is still finite.

\begin{lemma}\label{finite}
	Every play of the game $\FS_k(\AA, \BB)$ is finite.
\end{lemma}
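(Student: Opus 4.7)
The plan is to suppose for contradiction that some play is infinite and derive an infinite strictly descending sequence of ordinals, combining finiteness of the underlying syntactic tree with the strict decrease of ordinal clocks at variable moves.

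First, I would observe that the tree $(\set, \rel, \back)$ built up during any play is always finite, with $|\set| \le k_0$. Every new vertex is introduced by a move that splits the resource $\res$ of the current vertex, producing children with strictly smaller $\res$-values; since $\res$ takes natural-number values initially equal to $k_0$, only finitely many vertices can ever be created, and in particular only finitely many variables, fixed-point nodes, and back-edges occur in the tree.

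Suppose for contradiction that a play is infinite. Since the current vertex changes at every move and $\set$ is finite, some vertex is visited infinitely often, hence some back-edge is traversed via labelled $X$-moves infinitely often. Among all such variables, choose $X^*$ whose fixed-point node $u^*$ lies as close to the root as possible. By minimality, for every variable $Y$ whose fixed-point node strictly precedes $u^*$ the corresponding back-edge is traversed only finitely often, so there is a position $P_0$ after which no such $Y$-move occurs. From $P_0$ onwards the current vertex stays strictly inside the subtree rooted at the child of $u^*$, since escaping that subtree would require precisely such a $Y$-move; in particular the labelled move at the $\eta X^*$-node $u^*$ is never triggered again.

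The core of the argument is to track, at each $X^*$-move after $P_0$, the finite maximum
\[
M := \max\bigl\{\, c(X^*) \,\bigm|\, (\A, w, c, \new) \in \AA_{\new} \cup \BB_{\new} \,\bigr\}
\]
of the $X^*$-clocks over the new models currently in play. The maximum is well-defined, since otherwise $\AA_{\new} = \BB_{\new} = \emptyset$ and S would win immediately, ending the play. The game rules force each new $X^*$-clock chosen at such a move to be strictly less than its predecessor, so $M$ drops strictly at each $X^*$-move (the maximum of finitely many ordinals each strictly below the old $M$ is itself strictly below it). Between two consecutive $X^*$-moves $M$ cannot grow: modal and boolean moves restrict the set of models (via images and subsets) while preserving individual clocks; fixed-point and variable moves for variables whose fixed-point nodes lie strictly inside $u^*$'s subtree affect only clocks of variables inner to them, leaving every $X^*$-clock intact; and resetting an $X^*$-clock would require re-entering $u^*$ from above, which is ruled out after $P_0$. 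We thereby obtain an infinite strictly descending sequence of ordinals, contradicting well-foundedness.

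The main obstacle in turning this plan into a formal proof is the careful bookkeeping of how each type of move affects the $X^*$-clocks and the set of new models in play; once one verifies the two invariants that (i) no move within $u^*$'s subtree raises any individual model's $X^*$-clock, and (ii) $\AA_{\new} \cup \BB_{\new}$ remains nonempty at every $X^*$-move along the infinite play, well-foundedness of the ordinals immediately closes the argument.
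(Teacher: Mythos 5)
Your overall strategy is the same as the paper's: in an infinite play some variable is revisited infinitely often, you pick the one whose fixed point is outermost, note that after some point its clocks are never reset from above, and combine finiteness of the model sets with well-foundedness of the ordinals. Your additions (finiteness of the syntax tree, the choice of $P_0$, the observation that no later move raises an individual $X^*$-clock) are correct and in fact more explicit than the paper's own two-sentence argument.

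However, the step where you claim that the maximum $M$ of the $X^*$-clocks over the new models at the current vertex drops strictly at each $X^*$-move does not hold as stated. The parenthetical justification assumes that every new model present at the next $X^*$-move has a clock strictly below the old $M$, i.e.\ descends from a model whose clock was just decreased. This is defeated by models left waiting at unchosen branches: in a labelled $\lor$-move the right-hand new models are copied to \emph{both} children (the paper sets $\rig' = \rig[\BB \cup \rig(v_1)/v_1,\, \BB \cup \rig(v_2)/v_2,\, \old(\BB)/v]$), and the copy in the branch D does not choose sits there as a \emph{new} model retaining its old, larger clock. When D later routes the play through that branch, this stale copy is picked up (via the unions $\AA_i \cup \lef(v_i)$, $\BB \cup \rig(v_i)$) and can arrive at an $X^*$-labelled vertex carrying the same clock value that $M$ had before the previous $X^*$-move, so $M$ need not decrease. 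Your invariants (i) and (ii) are true but do not exclude this, since the problem is not that any clock is raised but that a model which skipped the last decrease re-enters the flow. Repairing the argument requires reasoning about individual model lineages or a multiset of clocks over the whole (finite) subtree rather than the pointwise maximum at the current vertex. In fairness, the paper's own proof ("the clock of at least one model is lowered \dots\ ordinals are well-founded") is at least as informal and does not address this point either; but because your write-up commits to the specific quantity $M$, the failure is visible there and is the one step you would need to rework.
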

\begin{proof}
	A play could be infinite only if at least one variable is reached infinitely many times. Of these variables let $X$ be the one with the outmost fixed point. Every time $X$ is reached, if $\lef(v)_\new = \rig(v)_\new = \emptyset$, S wins and otherwise, the clock of at least one model is lowered. There are only finitely many models at any given position, since D always chooses finite subsets of models after modal moves. Since clocks are inherited by successor models in modal moves and ordinals are well-founded, eventually either a clock of S will reach $0$ and D will win or $X$ will be reached with empty sets of models and S wins. 
\end{proof}

For the essential theorem about how the game works, we assume that the strategy of S is uniform. This essentially means that S has a formula in mind and he follows the structure of that formula when constructing the syntax tree during the game.

\begin{definition}
	Let $\phi \in \L_\mu$ and let $T_\phi = (\set_\phi, \rel_\phi, \back_\phi, \lab_\phi)$ be the syntax tree with back edges of $\phi$. Let $P = (\set, \rel, \back, \lab, \res, \lef, \rig, v)$ be a position in a game $\FS_k(\AA, \BB)$. 
	
	A function $g : \set \to \set_\phi$ is a \emph{position embedding} if it satisfies the following conditions:
	\begin{enumerate}
		\item $g(v_0)$ is the root of $T_\phi$, where $v_0$ is the vertex of the starting position,
		\item $g$ is an embedding of $(\set, \rel)$ to $(\set_\phi, \rel_\phi)$,
		\item $g \upharpoonright \dom(\lab)$ is an embedding of $(\set, \back, \lab)$ to $(\set_\phi, \back_\phi, \lab_\phi)$,
		\item for each $u \in \set$, $\size(\phi_{g(u)}) \leq \res(u)$.
	\end{enumerate}
	
	Let $\delta$ be a strategy of S. We say that \emph{$\delta$ follows $\phi$ from position $P$ (via the function $g$)} if there is a position embedding $g : \set \to \set_\phi$ such that for each position $P'$ reachable from $P$ via the strategy $\delta$, the function $g$ can be extended to a position embedding $g' : \set' \to \set_\phi$.
	
	Finally $\delta$ is \emph{uniform} if $\delta$ follows a formula $\phi \in \L_\mu$ from the starting position.
\end{definition}

We are now ready to prove that the game indeed works as we intended. In the following if there is a vertex with label $\mu X$ we shall call $X$ a \emph{$\mu$-variable}, and if there is one labelled $\nu X$, we call $X$ a \emph{$\nu$-variable}. Note that we assume all fixed points have separate variables.

\begin{theorem}\label{peruslause2}
	Let $\AA_0$ and $\BB_0$ be sets of pointed models and let $k \in \Nset$. Then the following conditions are equivalent:
	\begin{enumerate}
		\item S has a uniform winning strategy in the game $\FS_k(\AA_0, \BB_0)$.
		\item There is a sentence $\phi \in \L_\mu(\Phi)$ s.t. $\phi$ separates $\AA_0$ and $\BB_0$ and $\size(\phi) \leq k$.
	\end{enumerate}
\end{theorem}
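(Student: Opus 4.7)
Both directions are proved simultaneously by induction (on the game tree for $(1)\Rightarrow (2)$, on the structure of $\phi$ for $(2)\Rightarrow (1)$), organized around a single \emph{semantic invariant} tying the game state to the approximant formulas $\phi_s^c$. Namely, for a strategy following $\phi$ via a position embedding $g$, I want to maintain at every position $P$ reached with current vertex $v$ and $s := g(v)$:
\begin{enumerate}
\item for every $(\A,w,c,\new)\in \lef(v)$, $(\A,w)\vDash \phi_s^c$;
\item for every $(\B,w,c,\new)\in \rig(v)$, $(\B,w)\nvDash \phi_s^c$.
\end{enumerate}
At the starting position, $g(v_0)$ is the root of $T_\phi$ and the empty clock gives $\phi_{g(v_0)}^{\emptyset}=\phi$, so the invariant specializes to $\AA_0\vDash\phi$ and $\BB_0\vDash\neg\phi$; condition 4 of position embedding also gives $\size(\phi)\leq k$.

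\textbf{For $(2)\Rightarrow (1)$}, assuming a separating $\phi$ with $\size(\phi)\leq k$, S plays the move dictated by $\lab_\phi(s)$: at $\lor$ he splits $\AA_\new$ by which disjunct a model satisfies (similarly $\land$ on the right); at $\di$ he picks a witness successor for every model in $\AA_\new$ as given by the semantics of $\di\phi_{s_1}^c$ (similarly $\bo$); at $\mu X$, for each new $\AAA$, Tarski--Knaster yields a least ordinal $\alpha_\AAA$ with $(\A,w)\vDash\mu^{\alpha_\AAA+1}X.\phi_{s_1}^c$, which by Lemma~\ref{kellot2} is exactly $\phi_{s_1}^{c[\alpha_\AAA/X]}$, so the invariant passes to $v'$; D's chosen $\alpha_\BBB$ on the right harmlessly cannot make $\BBB$ satisfy $\mu^{\alpha_\BBB+1}X.\phi_{s_1}^c$ since it already fails the full fixed point. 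The $\nu X$-case is dual. At a variable move $X$ whose binder lies at $u$ with $g(u)=s_\mu$, the invariant at $v$ says each $(\A,w,c,\new)$ satisfies $\phi_s^c$, which by the successor/limit clause of the approximant definition equals $\phi_{s_1}^{c[\beta/X,-/\cdots]}$ for suitable $\beta<c(X)$ (so $c(X)>0$, and S's required ordinal exists); again Lemma~\ref{kellot2} propagates the invariant to $v'$. At a literal node, $\phi_s^c = l$ forces $\AA_\new\vDash l$ and $\BB_\new\vDash\neg l$; the \old-components of $\AA$ and $\BB$ have been merged into $\lef(v')$/$\rig(v')$ at earlier visits in such a way that they already satisfy the same invariant relative to the label along their path, which gives separation by $l$. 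By Lemma~\ref{finite} the play terminates, and by construction it terminates in a $\Lit$-win for S or in a variable position with $\AA_\new=\BB_\new=\emptyset$ (another S-win).

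\textbf{For $(1)\Rightarrow(2)$}, the bound $\size(\phi)\leq k$ is immediate from the position embedding at $v_0$. To show $\phi$ separates $\AA_0$ from $\BB_0$, suppose for contradiction that some $(\A_0,w_0)\in\AA_0$ has $(\A_0,w_0)\nvDash\phi$ (the symmetric case is analogous). Then D plays by the following rule: at every position she keeps alive a clocked descendant $(\A,w,c,\new)\in\lef(v)$ of $(\A_0,w_0)$ that violates clause (1), i.e.\ $(\A,w)\nvDash\phi_s^c$. Such a witness can be maintained through every move available to S: at $\lor$ the violating model lies in whichever part S produces, and D picks that branch; at $\di$ the offending model has no successor satisfying $\phi_{s_1}^c$, so D places it into $\AA'$; at $\mu X$ the model fails $\mu X.\phi_{s_1}^c$, so no S-chosen ordinal can make it satisfy $\phi_{s_1}^{c[\alpha/X]}$ by Lemma~\ref{kellot2}; at an $X$-move the well-foundedness of the clock forces it eventually to $0$, at which point $\phi_s^c\in\{\bot,\top\}$ and the invariant violation becomes fatal (either D wins immediately or S must play a literal move on propositionally wrong sets). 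By Lemma~\ref{finite} this play is finite; its end is a $\Lit$-move on non-separating sets or a position with $\res=0$, so D wins---contradicting that S's strategy was winning.

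\textbf{Main obstacle.} The serious bookkeeping is the interplay between the $\old$/$\new$ flag, the merging $\AA'\cup\lef(v')$ performed by the $X$-move, and Lemma~\ref{kellot2}. One must verify carefully that each time a clocked model re-enters a previously visited node $v'$, its incoming clock is strictly below the clock it carried the previous time it left $v'$ (on the side owned by the ruling fixed point), so that the well-foundedness argument of Lemma~\ref{finite} in fact cooperates with the approximant identity $\phi_s^c=\phi_{s_1}^{c_\beta}$, and the semantic invariant above survives the merge. Once this is checked, the rest of the induction is routine.
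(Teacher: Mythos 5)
Your direction $(2)\Rightarrow(1)$ is essentially the paper's argument: S follows $T_\phi$ via a position embedding and maintains exactly the invariant $(\ast)$ on new models, with Lemma~\ref{kellot2} mediating between clock updates and approximants (one small simplification you can make: an unlabelled vertex never carries old models, so the worry about merging $\old$-components at the literal step is moot). Your direction $(1)\Rightarrow(2)$, however, takes a genuinely different route. The paper proves it \emph{directly}: since every play under $\delta$ is finite (Lemma~\ref{finite}), the game tree is well-founded, and a bottom-up well-founded induction establishes the same invariant $(\ast)$ at every reachable position, for every choice of D; reading it off at the root (where D may choose any finite subsets) gives separation. You instead argue by contradiction, having D track a single clocked descendant of a falsifying model and steer it through every move so that the violation of $(\ast)$ persists until a D-win, contradicting that $\delta$ wins. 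Both work. The paper's version buys uniformity of the case analysis — each move type is handled once, symmetrically, with no need to argue that D can always "keep the witness alive" — and it automatically covers all of D's branching choices. Your version is closer in spirit to classical EF-style arguments and isolates the one semantic fact actually needed, but it obliges you to check the cases you currently leave implicit: a \emph{left}-side witness at a $\nu X$-move and at a $\nu$-variable return (there D, not S, picks the ordinal, and she must take the closure ordinal, respectively the maximal admissible clock or the failing conjunct of the limit conjunction, using $\kappa>|W|$), and the $\land$-move for a left witness (where the witness is copied to both branches and D must follow the failing conjunct). None of these is a gap in substance, but they are exactly the places where the witness-tracking strategy could silently go wrong, so they should be written out.
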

\begin{proof}
	$(2) \Rightarrow (1)$. Let $\phi \in \L_\mu(\Phi)$ be a sentence such that $\phi$ separates $\AA_0$ and $\BB_0$ and $s(\phi) \leq k$. Let $T_\phi = (\set_\phi, \rel_\phi, \back_\phi, \lab_\phi)$ be the syntax tree with back edges of $\phi$.  The strategy of S is to follow the structure of $T_\phi$ when forming $(V, E, B, \lab)$, to use the resource $k$ accordingly and to choose maximal appropriate sets of models when necessary.

	If $P = (\set, \rel, \back, \lab, \res, \lef, \rig, v)$ is a position, let $\AA = \lef(v)$, $\BB = \rig(v)$ and $k = \res(v)$. We define the strategy more precisely and simultaneously prove by induction that the strategy is uniform and the following condition holds for every position $P$ of the game:
	\begin{equation} \tag{$\ast$} \label{ehto2}
	\begin{aligned} 
	&(\A, w) \vDash \phi_{g(v)}^c \text{ for every } (\A, w, c) \in \AA_\new \text{ and } \\
	&(\B, w) \nvDash \phi_{g(v)}^c \text{ for every } (\B, w, c) \in \BB_\new,
	\end{aligned}
	\end{equation}
	where $g$ is a position embedding showing the uniformity of the strategy. 
	
	In the starting position, we set $g(v_0)$ as the root of $T_\phi$. We note that $\size(\phi_{g(v_0)}) = \size(\phi) \leq k$ by assumption. Since there are no clocks yet, $\phi_{g(v_0)}^c = \phi$ for every clocked model and since the sentence $\phi$ separates the sets $\AA_0$ and $\BB_0$, (\ref{ehto2}) holds no matter which subsets $\AA \subseteq \clock(\AA_0)$ and $\BB \subseteq \clock(\BB_0)$ D chooses.
	
	We now divide the proof into cases based on whether $v$ already has a label or not. We choose the move for $S$ according to the label of $g(v)$. We only treat one of each pair of dual cases.
	
	\paragraph*{$v \notin \dom(\lab):$}
	\begin{itemize} 
		\item $\lab_\phi(g(v)) = l \in \Lit(\Phi)$: Then by induction hypothesis $l$ separates the sets $\AA$ and $\BB$ so S wins by making the corresponding $\Lit$-move.
		
		\item $\lab_\phi(g(v)) = \lor$: By induction hypothesis, (\ref{ehto2}) holds for this position so for every $(A, w, c) \in \AA$, $(A, w) \vDash \phi_{g(v)}^c$. Now $\phi_{g(v)}^c = \phi_{s_1}^c \lor \phi_{s_2}^c$, where $s_1, s_2 \in \set_\phi$ are the successors of $g(v)$, so $(\A, w) \vDash \phi_{s_1}^c \lor \phi_{s_2}^c$. Let $\AA_1 = \{(\A, w, \tup) \in \AA \mid (\A, w) \vDash \phi_{s_1}^c\}$ and $\AA_2 = \{(\A, w, \tup) \in \AA \mid (\A, w) \vDash \phi_{s_2}^c\}$. On the other side, for every $(B, w, c) \in \BB$, $(B, w) \nvDash \phi_{s_1}^c \lor \phi_{s_2}^c$ so $(B, w) \nvDash \phi_{s_1}^c$ and $(B, w) \nvDash \phi_{s_2}^c$. We set $g' = g[v_1/s_1, v_2/s_2]$, where $v_1, v_2$ are the new vertices in $\set$. Now (\ref{ehto2}) holds in both of the possible following positions. Let $k_1 = \size(\phi_{s_1})$ and $k_2 = k - k_1 - 1$. Since $\size(\phi_{g(v)}) \leq k$, $\size(\phi_{s_2}) = \size(\phi_{g(v)}) - \size(\phi_{s_1}) - 1 \leq k - k_1 - 1 = k_2$.
		
		\item $\lab_\phi(g(v)) = \di$: By induction hypothesis, for every $(\A, w, c) \in \AA$, $(\A, w) \vDash \phi_{g(v)}^c$. Since $\phi_{g(v)}^c = \di \phi_{s_1}^c$, where $s_1$ is the successor of $g(v)$, $(\A, w) \vDash \di \phi_{s_1}^c$. Thus there is $(\A, w', c) \in \bo\AA$ s.t. $(\A, w') \vDash \phi_{s_1}^c$. Let $f : \AA \to \bo\AA$ be a function mapping every $(\A, w, \tup)$ to such a $(\A, w', \tup)$. Now $(\A, w') \vDash \phi_{s_1}^c$ for every $(\A, w', c) \in \di_f\AA$. On the other side, for every $(\B, w, c) \in \BB$, since $(\B, w) \nvDash \di \phi_{s_1}^c$, for every $(\B, w', c) \in \bo(\B, w, c)$ we get $(\B, w') \nvDash \phi_{s_1}^c$. Thus $(\B, w') \nvDash \phi_{s_1}^c$ for every $(\B, w', c) \in \bo\BB$ so (\ref{ehto2}) holds in the next position no matter which subsets $\AA' \subseteq \di_f\AA$ and $\BB' \subseteq \bo\BB$ D chooses. For uniformity we set $g' = g[v'/s_1]$, where $v'$ is the new vertice. Now $\size(\phi_{s_1}) = \size(\phi_{g(v)}) - 1 \leq k - 1$.
		
		\item Let $\lab_\phi(g(v)) = \mu X$: By induction hypothesis, for every $\AAA = (\A, w, c) \in \AA$, $(\A, w) \vDash \phi_{g(v)}^c$. Since $\phi_{g(v)}^c = \mu X.\phi_{s_1}^c(X)$, where $s_1$ is the successor of $g(v)$, $(\A, w) \vDash \mu X.\phi_{s_1}^c(X)$. Thus there is an ordinal $\alpha$ such that $(\A, w) \vDash \mu^{\alpha+1} X. \phi_{s_1}^c(X)$. S chooses $\alpha_\AAA = \alpha$ as the new clock. By Lemma~\ref{kellot2}, $(\A, w) \vDash \phi_{s_1}^{c_\alpha}$. On the other side, by induction hypothesis, for every $\BBB = (\B, w, c) \in \BB$, $(\B, w) \nvDash \mu X.\phi_{s_1}^c(X)$ so for every ordinal $\alpha$, $(\B, w) \nvDash \mu^{\alpha+1} X.\phi_{s_1}^c(X)$. Thus no matter which ordinal $\beta$ D chooses, we get $(\B, w) \nvDash \mu^{\beta+1} X.\phi_{s_1}^c(X)$ and by Lemma~\ref{kellot2}, $(\B, w) \nvDash \phi_{s_1}^{c_\beta}$. Therefore (\ref{ehto2}) holds in the following position. Uniformity is proved in the same way as in the $\di$-case.	
		
		\item $\lab_\phi(g(v)) = X$, where $X \in \Var$: Let $u$ be the $B$-successor of $g(v)$. 
		
		Assume that $X$ is a $\mu$-variable. Now by induction hypothesis, for every $\AAA = (\A, w, c) \in \AA$,  $(\A, w) \vDash \phi_{g(v)}^c$. There are three cases according to $c(X)$. 
		\begin{enumerate}
			\item If $c(X) = 0$, we get a contradiction since then $\phi_{g(v)}^c = \bot$.
			
			\item If $c(X) = \alpha + 1$ for some $\alpha$, then S chooses $\alpha$ as the new clock for $X$ in $(\A, w, c)$. Now $\phi_{g(v)}^c = \phi_u^{c_\alpha}$ so $(\A, w) \vDash \phi_u^{c_\alpha}$. 
			
			\item If $c(X)$ is a limit ordinal, then 
			\[
			\phi_{g(v)}^c = \bigvee\limits_{\alpha < c(X)} \phi_u^{c_\alpha} \text{ \quad so \quad } (\A, w) \vDash \bigvee\limits_{\alpha < c(X)} \phi_u^{c_\alpha}.
			\]
			S chooses the new clock $\alpha$ such that $(\A, w) \vDash \phi_u^{c_\alpha}$ holds. 
		\end{enumerate}
		
		On the other side, by induction hypothesis, for every $\BBB = (\B, w, c) \in \BB$, $(\B, w) \nvDash \phi_{g(v)}^c$. We again have three cases.
		\begin{enumerate}
			\item If $c(X) = 0$, $\BBB$ will be removed from the game and can be disregarded. 
			
			\item Let $c(X) = \alpha + 1$. Now $\phi_{g(v)}^c = \phi_u^{c_\alpha}$ so $(\B, w) \nvDash \phi_u^{c_\alpha}$. By Lemma~\ref{kellot2}, $(\B, w) \nvDash \mu^{\alpha+1} X.\phi_u^{c'}(X)$, where $c' = c_\alpha[-/X]$. Let $\beta \leq \alpha$ be the choice of D for the new clock. Now by monotonicity, $(\B, w) \nvDash \mu^{\beta+1} X.\phi_u^{c'}(X)$. We use Lemma~\ref{kellot2} again and obtain $(\B, w) \nvDash \phi_u^{c_\beta}$. 
			
			\item Finally let $c(X)$ be a limit ordinal. Now
			\[
			\phi_{g(v)}^c = \bigvee\limits_{\alpha < c(X)} \phi_u^{c_\alpha} \text{ \quad so \quad } (\B, w) \nvDash \bigvee\limits_{\alpha < c(X)} \phi_u^{c_\alpha}.
			\]
			Thus $(\B, w) \nvDash \phi_u^{c_\alpha}$ for any $\alpha < c(X)$ D chooses.
		\end{enumerate}
		Uniformity is trivial here since no new vertices were created.
		
		The case where $X$ is a $\nu$-variable is the same with the roles of $\AA$ and $\BB$ switched.
	\end{itemize}
	
	\paragraph*{$v \in \dom(\lab):$}
	
	The moves are essentially the same as in the unlabelled case. The main differences are that the type of the move is already determined by $\lab(v)$, and the resource splittings are already fixed. In disjunction and conjunction moves new models can be left to wait in the branch not chosen by D as the following position. We will consider only this special case of waiting new models here.
	
	$\lab(v) = \lor$: S chooses the sets $\AA_1$ and $\AA_2$ of new models as in the unlabelled case. There may however be some new models present in $v_1$ or $v_2$. If so, these models are there because of previous $\lor$-moves, for the first of which $v$ had no label. By induction hypothesis and the unlabelled case, (\ref{ehto2}) held for both of the possible following positions and therefore $(\A, w) \vDash \phi_{s_i}^c$ for every model $(\A, w, c)$ in the corresponding left model set. Inductively we see that $(\A, w) \vDash \phi_{s_i}^c$ for every $(\A, w, c) \in \lef(v_i)$. The same argument shows that $(\B, w) \nvDash \phi_{s_i}^c$ for every $(\B, w, c) \in \rig(v_i)$. Thus (\ref{ehto2}) holds for the sets $\AA_i \cup \lef(v_i)$ and $\BB \cup \rig(v_i)$ in both of the possible following positions of position $P$.
	
	\medskip
	
	$(1) \Rightarrow (2)$. Let $\delta$ be a uniform winning strategy for S. Let $\phi \in \L_\mu(\Phi)$ be the formula $\delta$ follows. We denote the position embedding showing the uniformity of $\delta$ in each position by $g$. By Lemma~\ref{finite} every play of the game is finite so the game tree induced by the strategy $\delta$ is well-founded. We prove by well-founded induction on the game positions reachable with $\delta$ that the same condition (\ref{ehto3}) as above holds in every position of the game. 
	\begin{equation} \tag{$\ast$} \label{ehto3}
	\begin{aligned} 
	&(\A, w) \vDash \phi_{g(v)}^c \text{ for every } (\A, w, c) \in \AA_\new \text{ and } \\
	&(\B, w) \nvDash \phi_{g(v)}^c \text{ for every } (\B, w, c) \in \BB_\new,
	\end{aligned}
	\end{equation}
	
	In a position $P$ reachable with $\delta$, let $\lef(v) = \AA$, $\rig(v) = \BB$ and $\res(v) = k$. We again consider the unlabelled and labelled case separately and only treat one of each pair of dual moves.
	
	\paragraph*{$v \notin \dom(\lab):$}
	\begin{itemize}	
		\item $\lab(g(v)) \in \Lit(\Phi)$: Since $\delta$ follows $\phi$, the next move according to $\delta$ is a $\Lit$ move choosing that literal. Since $\delta$ is a winning strategy, that literal separates the sets $\AA$ and $\BB$ and (\ref{ehto3}) holds.
		
		\item $\lab(g(v)) = \lor$: Let $s_1$ and $s_2$ be the successors of $g(v)$. Let $\AA_1$, $\AA_2$, $k_1$ and $k_2$ be the selections of S according to $\delta$. By induction hypothesis (\ref{ehto3}) holds in both possible following positions so for every $(\A, w, c) \in \AA_i$, $(\A, w) \vDash \phi_{s_i}^c$ for $i \in \{1, 2\}$. Since $\AA = \AA_1 \cup \AA_2$, for every $(\A, w, c) \in \AA$, $(\A, w) \vDash \phi_{s_1}^{c} \lor \phi_{s_2}^{c}$. In addition $\phi_{s_1}^{c} \lor \phi_{s_2}^{c} = \phi_{g(v)}^c$ so $(\A, w) \vDash \phi_{g(v)}^c$. Let $(\B, w, c) \in \BB$. By (\ref{ehto3}) in the following positions, $(\B, w) \nvDash \phi_{s_1}^c$ and $(\B, w) \nvDash \phi_{s_2}^c$ so $(\B, w) \nvDash \phi_{s_1}^c \lor \phi_{s_2}^c$. Since  $\phi_{s_1}^c \lor \phi_{s_2}^c = \phi_{g(v)}^c$, $(\B, w) \nvDash \phi_{g(v)}^c$. Thus (\ref{ehto3}) holds in $P$.
		
		\item $\lab(g(v)) = \di$: Let $s_1$ be the successor of $g(v)$. Let $f : \AA \to \bo\AA$ be the function chosen by S according to $\delta$. Let $(\A, w, c) \in \AA$. By induction hypothesis (\ref{ehto3}) holds in the following position no matter which subsets of $\di_f\AA$ and $\bo\BB$ D chooses so for $f(\A, w, c) = (\A, w', c) \in \di_f \AA$, $(\A, w') \vDash \phi_{s_1}^c$. Since $w'$ is a successor of $w$, now $(\A, w) \vDash \di \phi_{s_1}^c$. In addition $\di \phi_{s_1}^c = \phi_{g(v)}^c$ so $(\A, w) \vDash \phi_{g(v)}^c$. Let $(\B, w, c) \in \BB$. By induction hypothesis (\ref{ehto3}) holds for all possible following positions so for every $(\B, w', c) \in \bo\BB$, $(\B, w') \nvDash \phi_{s_1}^c$. Therefore $(\B, w) \nvDash \di \phi_{s_1}^c$ and so $(\B, w) \nvDash \phi_{g(v)}^c$. Thus (\ref{ehto3}) holds in $P$.

		\item $\lab(g(v)) = \mu X$: Let $s_1$ be the successor of $g(v)$. Let $\AAA = (\A, w, c) \in \AA$ and let $\alpha_\AAA = \alpha$ be the choice of S according to $\delta$. By induction hypothesis, (\ref{ehto3}) holds in the following position so $(\A, w) \vDash \phi_{s_1}^{c_\alpha}$. By Lemma~\ref{kellot2}, $(\A, w) \vDash \mu^{\alpha+1} X.\phi_{s_1}^c$. Thus $(\A, w) \vDash \mu X. \phi_{s_1}^c$. Since $\mu X. \phi_{s_1}^c = \phi_{g(v)}^c$, $(\A, w) \vDash \phi_{g(v)}^c$. On the other side, by induction hypothesis, for every $\BBB = (\B, w, c) \in \BB$ and for any choice $\alpha$ of D for the new clock, $(\B, w) \nvDash \phi_{s_1}^{c_\alpha}$. Thus by Lemma~\ref{kellot2}, $(\B, w) \nvDash \mu^{\alpha+1}X.\phi_{s_1}^c(X)$ for every $\alpha < \kappa$. Since $\kappa > \card(\B)$, this means that $(\B, w) \nvDash \mu X.\phi_{s_1}^c$ and so $(\B, w) \nvDash \phi_{g(v)}^c$. Thus (\ref{ehto3}) holds in $P$.
		
		\item $\lab(g(v)) = X$: If $\AA_\new = \BB_\new = \emptyset$, S wins and (\ref{ehto3}) trivially holds. Let $u$ be the $B$-successor of $g(v)$.
		Assume that $X$ is a $\mu$-variable and let $(\A, w, c) \in \AA$. 
		We have three cases according to the ordinal~$c(X)$.
		\begin{enumerate}	
			\item If $c(X) = 0$, D wins the game, which is a contradiction, since $\delta$ is a winning strategy for S. 
			\item Assume that $c(X) = \alpha +1$. Let $\beta \leq \alpha$ be the choice of S for the new clock according to $\delta$. By induction hypothesis (\ref{ehto3}) holds in the following position so $(\A, w) \vDash \phi_u^{c_\beta}$. By Lemma~\ref{kellot2}, $(\A, w) \vDash \mu^{\beta+1} X.\phi_u^{c'}$, where $c' = c_\beta[-/X]$. Thus by monotonicity, $(\A, w) \vDash \mu^{\alpha+1} X.\phi_u^{c'}$. By Lemma~\ref{kellot2} again, $(\A, w) \vDash \phi_u^{c_\alpha}$ and so $(\A, w) \vDash \phi_{g(v)}^c$. 
			\item Now assume $c(X)$ is a limit ordinal and let $\alpha < c(X)$ be the choice of S according to $\delta$. Now by induction hypothesis
			$(\A, w) \vDash \phi_u^{c_\alpha}$. Thus 
			\[
			(\A, w) \vDash \bigvee\limits_{\alpha < c(X)} \phi_u^{c_\alpha}
			\]
			so $(\A, w) \vDash \phi_{g(v)}^c$.
		\end{enumerate}
		For every $(\B, w, c) \in \BB$, regardless of the choice of D for the new clock, (\ref{ehto3}) holds in the following position. We again have three cases.
		\begin{enumerate}
			\item If $c(X) = 0$, then since $(\B, w) \nvDash \bot$, we get $(\B, w) \nvDash \phi_{g(v)}^c$.
			\item If $c(X) = \alpha + 1$, then $\alpha$ is a choice available to D so $(\B, w) \nvDash \phi_u^{c_\alpha}$. Thus $(\B, w) \nvDash \phi_{g(v)}^c$. 
			\item If $c(X)$ is a limit ordinal, every $\alpha < c(X)$ is a choice available to D so $(\B, w) \nvDash \phi_u^{c_\alpha}$ for every $\alpha < c(X)$. Thus 
			\[
			(\B, w) \nvDash \bigvee\limits_{\alpha < c(X)}\phi_u^{c_\alpha}.
			\]
			Therefore $(\B, w) \nvDash \phi_{g(v)}^c$ so (\ref{ehto3}) holds in $P$.
		\end{enumerate}
	\end{itemize}
	
	\paragraph*{$v \in \dom(\lab):$}
	
	All the moves in this case are proved the same way as in the unlabelled case. Note that since (\ref{ehto3}) refers only to new models, it trivially holds for terminal positions where $\AA_\new = \BB_\new = \emptyset$ for an $X$-move.
	
	For the very first move of the game, where D chooses finite subsets of the original sets of clocked models $\clock(\AA_0)$ and $\clock(\BB_0)$, by induction hypothesis (\ref{ehto3}) holds in the following position no matter which subsets D chooses. Therefore all models in $\AA_0$ and $\BB_0$ also satisfy the condition (\ref{ehto3}). Since there are no clocks in the starting position, this means that $\phi$ separates the sets $\AA_0$ and $\BB_0$. By the uniformity of $\delta$, $\size(\phi) = \size(\phi_{g(v_0)}) \leq \res(v_0) = k$.
\end{proof}
Note that condition (\ref{ehto2}) does not depend on old models and so we do not refer to them in this proof. We add old models to the game to make the proof of Lemma~\ref{bisimilar} in the next section easier.

Unlike other similar theorems, Theorem~\ref{peruslause2} has the added requirement of uniformity for the strategy of S. We conjecture that the theorem would still hold even without this condition, but proving this has turned out to be difficult. Note, however, that to prove undefinability results for $\L_\mu(\Phi)$, one need only define a winning strategy for D, and so the uniformity of strategies for S need not be considered. Note further that if a property of $\Phi$-models is not definable in $\L_\mu(\Phi)$, then clearly it is not definable in $\L_\mu$.

\section{Succinctness of FO over $\mathbf{\L_\mu}$}

We move on to the definitions and lemmas needed to show that $\FO$ is non-elementarily more succinct than $\L_\mu$. We need a lemma similar to Lemma~\ref{bisim} that gives D a winning strategy if bisimilar models are produced on both sides of a vertex. In the case of the $\L_\mu$-game, the clocks of the clocked models must also be taken into account. We define a sufficient condition for clocked models to be useful for D in the game and call them \nice\ models.

\begin{definition}
	The \emph{depth} of a pointed finite tree model $(\M, w)$, $d(\M, w)$, is the length of a maximal path of transitions in the model starting from $w$.
\end{definition}

\begin{definition}
	In a position $P$ of a game $\FS_k(\AA, \BB)$, let $u$ be a vertex and let $\MMM = (\M, w, c, a)$ be a clocked finite tree model in $\lef(u) \cup \rig(u)$. We say the model $\MMM$ is \emph{\nice}, if for every $X \in \dom(c)$, the clock of D is equal to or greater than the depth of the model, i.e.
	\begin{itemize}
		\item if $\MMM \in \lef(u)$, for every $\nu$-variable $X \in \dom(c)$, $c(X) \geq d(\M, w)$,
		\item if $\MMM \in \rig(u)$, for every $\mu$-variable $X \in \dom(c)$, $c(X) \geq d(\M, w)$.
	\end{itemize}
	We also say the model $\MMM$ is \emph{strictly \nice} if the above condition holds for strict inequality $>$ instead of $\geq$.
\end{definition}

We prove the analogue of Lemma~\ref{bisim} for \nice\ clocked models.

\begin{lemma}\label{bisimilar}
	Let $P$ be a position of a game $\FS_k(\AA, \BB)$. If there are strictly \nice\ clocked models $(\A, w_\A, c_\A, \new) \in \lef(v)$ and $(\B, w_\B, c_\B, \new) \in \rig(v)$ such that $(\A, w_\A)$ and $(\B, w_\B)$ are bisimilar finite tree models, then D has a winning strategy from position $P$.
\end{lemma}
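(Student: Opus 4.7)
The plan is to construct a winning strategy for D that maintains, throughout the play, the invariant that there is a pair of new clocked models $\AAA^* \in \lef(v)$ and $\BBB^* \in \rig(v)$, both strictly \nice, whose underlying pointed models are bisimilar finite trees. I would argue this inductively: bisimilarity of the tracked pair blocks any winning literal move of S, and by Lemma~\ref{finite} every play is finite, so D only needs to survive finitely many moves.

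The case analysis mirrors that of Theorem~\ref{bisim}, with the additional bookkeeping of clocks. For a $\lor$-move (labelled or unlabelled) S partitions $\AA_\new$; D selects the branch containing $\AAA^*$, while $\BBB^*$ is automatically duplicated onto both branches, so the tracked pair survives intact. Dually for $\land$-moves. For a $\di$-move, bisimilarity of $(\A, w^*_\A)$ and $(\B, w^*_\B)$ provides, for S's chosen successor $(\A, w', c^*_\A, \new) = f(\AAA^*)$, a matching successor $w''$ of $w^*_\B$ with $(\A, w') \bisim (\B, w'')$; D includes both $(\A,w',c^*_\A,\new)$ and $(\B,w'',c^*_\B,\new)$ in her finite subsets of the next position. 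Since $(\A, w^*_\A)$ and $(\B, w^*_\B)$ are finite trees, the depths strictly decrease, so the inherited clocks remain strictly above the new depths. Dually for $\bo$-moves. For a $\mu X$-move, D chooses clocks on the right side, and for the tracked model $\BBB^*$ she picks an ordinal much larger than $d(\B, w^*_\B)$, re-establishing strict niceness with slack for later unfoldings. Dually for $\nu X$-moves.

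The delicate case is an $X$-move. Taking $\lab(u) = \mu X$ (the $\nu X$ case being symmetric), strict niceness of $\BBB^*$ for the $\mu$-variable $X$ gives $c^*_{\BBB^*}(X) > d(\B, w^*_\B) \geq 0$, so the model is not removed and the game proceeds. S then picks $\alpha_{\AAA^*} < c^*_{\AAA^*}(X)$, and D must pick $\alpha_{\BBB^*} < c^*_{\BBB^*}(X)$. Strict niceness of $\BBB^*$ is precisely the slack D needs to choose $\alpha_{\BBB^*}$ at or above the current depth, preserving niceness at the target vertex (which is the successor of $u$); the clocks of inner fixed point variables are wiped, so any previous strict niceness there becomes vacuous, and bisimilarity together with unchanged underlying models carry the tracked pair forward. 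The main obstacle is guaranteeing that strict niceness rather than mere niceness is preserved through successive $X$-moves at the same depth (which can happen if the scope of $\mu X$ has no intervening modal operator); the remedy is to calibrate D's ordinal choices at prior fixed point introduction moves to leave enough room, exploiting the bound on the length of the play afforded by Lemma~\ref{finite} so that D's clocks can be lowered strictly above depth at every subsequent unfolding of $X$.
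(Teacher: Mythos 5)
There is a genuine gap, and it sits exactly where you locate the difficulty: the case of repeated $X$-returns to the same fixed point vertex with no intervening modal move. Your invariant insists that both tracked models are \emph{new} and strictly \nice, but this cannot be maintained. At each return to a vertex labelled $\mu X$, D must strictly decrease $c_{\BBB^*}(X)$ (and a right-side model whose $X$-clock reaches $0$ is simply dropped from the game, destroying your tracked pair). If the path from $\mu X$ to $X$ contains no modal operator, the depth does not decrease between returns, so after the first return D's clock equals the depth and after the second it is already below it; after finitely many more it is $0$. Your proposed remedy --- calibrating D's ordinal choices at earlier fixed-point introduction moves, using Lemma~\ref{finite} --- fails for two reasons. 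First, for a variable $X$ already in $\dom(c_\B)$ at position $P$, those introduction moves happened \emph{before} $P$; the lemma's hypothesis only guarantees $c_\B(X) > d(\B,w_\B)$, which may be $d(\B,w_\B)+1$, and this is precisely what happens in the intended application in Theorem~\ref{foandmu}, where D sets clocks equal to depth. D cannot retroactively enlarge them. Second, Lemma~\ref{finite} gives no a priori bound D could calibrate against: the number of returns S can force is governed by S's own left-side clocks, which S chooses as arbitrarily large ordinals, and finiteness of the play is a \emph{consequence} of clocks running out, not an independent bound.

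The paper's proof escapes this by weakening the invariant: it only requires the tracked models to be \nice\ (not strictly), and crucially allows \emph{one of the two to be old}. When S forces a return with no intervening modal move, D abandons the new copy of $\BBB^*$ and instead tracks the old copy left behind (with a clock one higher); she then argues that either S eventually splits $\AAA^*$ off the path from $u$ to the $X$-vertex at some disjunction --- where a fresh new copy of $\BBB^*$ is waiting, since right-side models are duplicated onto both branches of every $\lor$ --- or S stays on that path until his own left-side clock for $X$ reaches $0$, at which point D wins outright. This is the reason old models were added to the game at all (as the paper remarks after Theorem~\ref{peruslause2}), and it is the missing idea in your argument. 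The rest of your case analysis ($\lor$, $\land$, modal moves, literal moves, and the resetting of inner clocks at an $X$-move) matches the paper's proof.
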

\begin{proof}
	We show that D can maintain a slightly modified condition where we only require the models to be \nice\ and allow one of the two models to be old. For $\lor$- and $\land$-moves D need only choose the side for which the two models are both present. For modal moves we see by bisimilarity that no matter which successor S chooses, a bisimilar model will be present on the opposite side. Moreover, the depth of the models is decreased and the clocks are inherited so the condition is maintained. For new fixed points D need only set her clock to be equal to the depth of the models. On a literal move, D will win since the bisimilar models cannot be separated by a literal.
	
	For $X$-moves, let $u$ be the vertex the game returns to. If this is the first time since position $P$ the game returns to $u$, D will lower the clock and since the models are strictly \nice, D can now decrease the clock to the same value as the depth. If on the other hand there has been a previous return to $u$, then there are two cases. If there has been a modal move in between this and the previous return, then the depth has decreased and D will decrease the clock to the same value. If there have been no modal moves, the pointed models have not changed and since we allow one of the models to be old, D will now consider the old version of her model, with a clock larger by one, instead of the new one. Consider the position $P'$ right after D switches a new model for an old one in this fashion. Assume by symmetry that this model is $\BBB$ on the right side. Consider the path from the vertex just returned to, $u$, to the vertex $s$ where the $X$-move was made. If there are no vertices with label $\lor$ on this path, then D can just follow this path to $s$ until the clock $c_\A(X)$ runs out and D wins. Assume there are some vertices with label $\lor$ on the path. For each of those vertices, the child that is not on the path from $u$ to $s$ has a new version of $\BBB$ in the right set. This is because the model $\BBB$ has passed through the disjunction before the $X$-move and models on the right side are always copied on both sides of a disjunction. If S splits the left model $\AAA$ away from the path to $s$, then D will consider the new copy of $\BBB$ from then on. If $\AAA$ stays on the path from $u$ to $s$ indefinitely, D wins when the clock $c_\A(X)$ runs out.
	
	If D uses this strategy, eventually S will either make a literal move and lose, or a clock of S for one of the bisimilar models will eventually run out. In either case, D wins.
\end{proof}

We want to use the same graph based invariant for the proof as we did for the $\ML$-case. The only question that remains is: which models should determine the graph of the current vertex $v$? In other words, which models is S claiming to be able to separate in each position of the game? Certainly the models in $\lef(v)$ and $\rig(v)$ should be included, but they will not suffice, since other models can already be below $v$ and thus involved with the subformula beginning from $v$. We define a way for D to collect all the models in the tree below a vertex $s$ to see which models are, in a sense, "currently in $s$". We define the collected sets separately for the left and the right side. Recall that $\poin(\AA)$ is the set of underlying pointed models of clocked models in $\AA$.

\begin{definition}
Let $P$ be a position such that $\lab(v)$ is not a literal and no vertices have label $\di$ or $\bo$. For each vertex $s \in \set$, we define the \emph{left collection of $s$ in $P$}, denoted by $\LL(s)$, recursively starting from the leaves of $(\set, \rel, \back)$:
\begin{itemize}
	\item if $s$ is an unlabelled leaf, then $\LL(s) = \poin(\lef(s))$,
	\item if $\lab(s)$ is a $\mu$-variable, then $\LL(s) = \emptyset$,
	\item if $\lab(s)$ is a $\nu$-variable, then $\LL(s) = \poin(\lef(s))$,
	\item if $\lab(s) = \eta X$ and the successor is $s_1$, then $\LL(s) = \poin(\lef(s)_{\new}) \cup \LL(s_1)$,
	\item if $\lab(s) = \lor$ and the successors are $s_1, s_2$, then $\LL(s) = \poin(\lef(s)_{\new}) \cup \LL(s_1) \cup \LL(s_2)$
	\item if $\lab(s) = \land$ and the successors are $s_1, s_2$, then $\LL(s) = \poin(\lef(s)_{\new}) \cup (\LL(s_1) \cap \LL(s_2))$.
\end{itemize}
\end{definition}

The \emph{right collection of $s$ in $P$}, $\RR(s)$, is defined symmetrically with $\lef$ and $\rig$, as well as $\mu$ and $\nu$, switched at every point of the definition.

Note that since $v$ does not have a literal label, a leaf of $(\set, \rel, \back)$ can only be either an unlabelled vertex or a vertex with a variable label. This is because any vertex labelled with an operator will have at least one successor and the game ends in the next position after any literal move. 

In the following, we will associate superscripted sets like $\LL'(s)$ with the position $P'$ with the same superscript just like we have done so far with the components of the position.

\begin{lemma}\label{lrsets}
	Let $P^\circ$ be a position in a game $\FS_k(\AA, \BB)$, where the current vertex $u = v^\circ$ is a successor of a fixed point vertex and no modal moves are made. Let $P'$ be a position after $P^\circ$ such that no $X$-move has returned to a vertex above $u$ since $P^\circ$. Then $\LL^\circ(u) \subseteq \LL'(u)$ and $\RR^\circ(u) \subseteq \RR'(u)$.
\end{lemma}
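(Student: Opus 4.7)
The plan is to prove the stronger statement by induction on the number $n$ of moves in the play from $P^\circ$ to $P'$, with $n=0$ the trivial base case: for every vertex $s \in \set^\circ$ lying in the subtree rooted at $u$, $\LL^\circ(s) \subseteq \LL'(s)$ and $\RR^\circ(s) \subseteq \RR'(s)$. It suffices to show, for every allowed single move from a position $P$ to the next $P''$, that $\LL(s) \subseteq \LL''(s)$ and $\RR(s) \subseteq \RR''(s)$ for every such $s$; chaining the inclusions then closes the induction. First observe that the current vertex stays within the subtree rooted at $u$ throughout the play: modal moves are excluded, $\lor$-, $\land$-, and fixed-point moves descend into the subtree, and every $X$-move between $P^\circ$ and $P'$ targets a back-edge vertex that is, by hypothesis, not strictly above $u$. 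Every affected vertex (the children created by the move, or the back-edge target of an $X$-move) therefore also lies in the subtree of $u$.

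For moves made at an unlabelled vertex $v$ (the $\lor$-, $\land$-, $\mu X$-, and $\nu X$-cases), the definitions of $\LL$ and $\RR$ unfold directly to $\LL''(v) = \LL(v)$ and $\RR''(v) = \RR(v)$; for instance, an $\lor$-move partitions $\lef(v) = \AA_1 \cup \AA_2$ between the two fresh children and yields $\LL''(v) = \emptyset \cup \poin(\AA_1) \cup \poin(\AA_2) = \poin(\lef(v)) = \LL(v)$, while $\rig(v)$ is copied unchanged to both children giving $\RR''(v) = \RR(v)$. For revisit moves at a previously labelled $v$, the loss of the term $\poin(\lef(v)_{\new})$ at $v$ (which becomes $\emptyset$ after the move since the models at $v$ become $\old$) is compensated by the appearance of $\poin(\AA_i)$ in the term $\poin(\lef''(v_i)_{\new})$ contributing to $\LL''(v_i)$, and dually for $\RR$. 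A sub-case analysis on the label of each child confirms $\LL''(v) \supseteq \LL(v)$ and $\RR''(v) \supseteq \RR(v)$, and the collections above $v$ in the subtree of $u$ are consequently preserved or increased.

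The principal case, and the main obstacle, is the $X$-move. Say the move is made at $v$ with back-edge target $v'$; by hypothesis $v'$ lies in the subtree rooted at $u$, and $v'$ is an ancestor of $v$ by the definition of back edges. The new models of $\lef(v)$ are transferred (with clock of $X$ updated and intermediate clocks reset) to $\lef''(v')$; the vertex $v$ acquires the label $X$, so $\LL''(v)$ collapses to either $\emptyset$ (if $X$ is a $\mu$-variable) or $\poin(\lef''(v))$ (if $X$ is a $\nu$-variable). The contribution $\poin(\lef(v))$ that previously flowed from $v$ upward through the chain of ancestors is exactly reintroduced at $v'$ via the term $\poin(\lef''(v')_{\new})$; a case analysis on the labels along the path from $v$ up to $v'$ (unions for $\lor$- and fixed-point nodes, intersections for $\land$-nodes) shows that the loss propagating along the chain is bounded by $\poin(\lef(v))$, so the gain at $v'$ covers it. Nothing changes on the path from $v'$ up to $u$ other than the increased contribution from $v'$, so $\LL''(u) \supseteq \LL(u)$; the symmetric argument (with $\mu/\nu$ and left/right swapped) gives $\RR''(u) \supseteq \RR(u)$. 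The most delicate bookkeeping is the intersection case for $\land$-labelled ancestors between $v$ and $v'$, where one must verify that the set-theoretic loss through the intersection is still dominated by $\poin(\lef(v))$; this is secured by the observation that the sibling subtree at each such $\land$-node is unaffected by the move, so the reduction of the intersected set is a subset of $\poin(\lef(v))$.
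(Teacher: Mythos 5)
Your proof is correct and follows essentially the same route as the paper's: a move-by-move induction with a case analysis on the move type ($\lor$/$\land$, fixed-point, and $X$-moves), showing that each move preserves or enlarges the left and right collections. The only difference is presentational: you strengthen the induction hypothesis to all vertices of the subtree rooted at $u$ and spell out the loss-bounding argument through $\land$-intersections on the path from the back-edge target down to the $X$-labelled vertex, a propagation step the paper handles only implicitly via the remark that everything not below the affected vertex remains unchanged.
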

\begin{proof}
	The proof proceeds by induction. We assume that for position $P$, $\LL^\circ(u) \subseteq \LL(u)$ and $\RR^\circ(u) \subseteq \RR(u)$ and we show that the inclusion also holds for the next position $P'$.
	
	If S makes a $\eta X$-move, the new models in $\lef(v)$ are moved to $\lef'(v')$ but they still remain in $\LL'(v)$ so $\LL'(u) = \LL(u)$. Note that clocks of the models do change but $\LL'(v)$ only looks at the underlying pointed models. For the same reasons $\RR'(u) = \RR(u)$.
	
	For $\lor$-moves the models in $\lef(v)$ are split among the successors $v_1$ and $v_2$ so they are still in $\LL'(v)$ in position $P'$. The models in $\rig(v)$ are copied to both $v_1$ and $v_2$ so they are still in $\RR'(v)$. Thus $\LL'(u) = \LL(u)$ and $\RR'(u) = \RR(u)$. The case of $\land$-moves is symmetric with the two sides switched everywhere.
	
	If S makes an $X$-move, it can either be a return to $u$ or to a vertex $s$ below $u$. Assume that the return is made to $u$ and that $X$ is a $\mu$-variable. Now the models in $\lef(v)$ are moved to $\lef'(u)$ so any that were already in $\LL(u)$ stay there and more may be added so $\LL(u) \subseteq \LL'(u)$. The models in $\rig(v)$ cease to be relevant in $\rig'(u)$ but they remain as old models in $\rig'(v)$ and are still counted for $\RR'(v)$ in $P'$ as they were in $P$ so $\RR'(u) = \RR(u)$. The case of a $\nu$ is symmetric with the two sides switched.
	
	Finally assume that an $X$-move is made returning to a vertex $s$ below $u$. Assume again that $X$ is a $\mu$-variable. The models in $\lef(v)$ are moved to $\lef'(s)$ so $\LL(s) \subseteq \LL'(s)$. Everything not below $s$ remains unchanged so $\LL(u) \subseteq \LL'(u)$. The models in $\rig(v)$ cease to be relevant in $\rig'(s)$ but they remain as old models in $\rig'(v)$ and are still counted for $\RR'(s)$ and therefore also for $\RR'(u)$ just like in $P$. Thus $\RR'(u) = \RR(u)$. The case of a $\nu$ is again symmetric.
\end{proof}

We finally have all of the required notation and lemmas to show that the non-elementary succinctness gap is present also between $\FO$ and modal $\mu$-calculus. 

\begin{theorem}\label{foandmu}
	First-order logic is non-elementarily more succinct than modal $\mu$-calculus.
\end{theorem}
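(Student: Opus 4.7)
The plan is to mirror the argument for $\ML$ in Section~\ref{nelonen}. By Lemma~\ref{FO}, the formula $\phi_n \in \FO$ of size $\ordo(n)$ separates $\AA_n$ from $\BB_n$; it therefore suffices to show that every $\theta_n \in \L_\mu$ separating these classes has $\s(\theta_n) > \tower(n-1)$. Assume for contradiction that such $\theta_n$ exists with $\s(\theta_n) \leq \tower(n-1)$. Since $\CC_n \subseteq \AA_n$ and $\DD_n \subseteq \BB_n$, $\theta_n$ also separates $\CC_n$ from $\DD_n$, so by Theorem~\ref{peruslause2}, S has a uniform winning strategy in $\FS_k(\CC_n, \DD_n)$ with $k \leq \tower(n-1)$. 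The contradiction will come from producing a winning strategy for D in this same game.

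The central technical step is an analogue of Lemma~\ref{winstrat}: if $\emptyset \neq \VV \subseteq \CC_n$, $\EE \subseteq \DD_n$ and $k \leq \log(\chi(\G(\VV, \EE)))$, then D has a winning strategy in $\FS_k(\VV, \EE)$. The argument is a hybrid. For $\lor$ and $\land$ moves, D plays the graph-coloring strategy from Lemma~\ref{winstrat}, splitting the vertex or edge set of $\G$ and choosing the branch in which the chromatic bound is preserved. At each $\mu X$ or $\nu X$ move, D sets her own clocks (on the right for $\mu X$, on the left for $\nu X$) strictly greater than the bounded depth of the models in $\CC_n \cup \DD_n$, using tree unfoldings if needed; this maintains the strictly \nice\ condition required by Lemma~\ref{bisimilar}, while the underlying pointed models and hence $\chi(\G)$ are unchanged, and the resource decreases by one, so the induction hypothesis is preserved. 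For $\di$ or $\bo$ moves, $\chi(\G) \geq 2$ forces an edge, producing a bisimilar pair of pointed models on opposite sides; D then transitions to the strategy of Lemma~\ref{bisimilar}.

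For $X$-moves the game returns to a vertex above with decreased clocks; since D has always chosen her clocks strictly larger than the relevant model depths and each return strictly decreases some clock, the well-foundedness argument underlying Lemma~\ref{finite} ensures that S's clocks exhaust first and D wins. Applying the key lemma with $\VV = \CC_n$ and $\EE = \DD_n$: exactly as in the proof of Theorem~\ref{ML}, $\G(\CC_n, \DD_n)$ is isomorphic to the complete graph $K_{\tower(n)}$ by Lemma~\ref{tower}, so $\log(\chi(\G(\CC_n, \DD_n))) = \tower(n-1)$. Hence $k \leq \tower(n-1)$ yields a winning strategy for D, the desired contradiction. The main obstacle will be verifying the induction across $X$-moves, since they do not decrease $k$ and so the induction must proceed by well-founded induction on the game tree rather than on $k$ alone; the collections $\LL, \RR$ introduced in Section~\ref{mugamesection} together with Lemma~\ref{lrsets} are used to track how models accumulate at earlier vertices after such returns.
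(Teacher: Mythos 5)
Your proposal follows essentially the same route as the paper: it adapts Lemma~\ref{winstrat} to the $\L_\mu$-game by maintaining the chromatic invariant on the collections $\LL(v)$, $\RR(v)$, handles clocks via the \nice{}ness condition so that modal moves hand off to Lemma~\ref{bisimilar}, uses Lemma~\ref{lrsets} to preserve the invariant across $X$-moves under well-founded induction on positions, and concludes from $\G(\CC_n,\DD_n)\cong K_{\tower(n)}$ via Theorem~\ref{peruslause2}. The only deviations are cosmetic (e.g.\ setting D's clocks strictly above the model depth rather than equal to it and letting the modal move supply the strict decrease), so the argument is correct and matches the paper's proof.
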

\begin{proof}
	We prove an analogous result to Lemma~\ref{winstrat} for the $\L_\mu$ game. We use the notation $\G(\VV, \EE)$ for the same graph as in subsection \ref{graphs}. The precise statement we prove is as follows:
	
	Let $n \in \Nset$ and let $k_0 \in \Nset$. If $k_0 < \log ( \chi (\G(\CC_n, \DD_n)))$, then D has a winning strategy in the game $\FS_{k_0}(\CC_n, \DD_n)$.
	
	In this proof we only consider \nice\ models. Many positions in the game $\FS_{k_0}(\CC_n, \DD_n)$ have also non-\nice\ models but they are not needed for the strategy of D we describe here and can safely be ignored. We will assume all models are \nice\ and occasionally comment on why models remain or cease to be \nice\ after some moves of the game.
	
	We show by induction that D has a strategy to maintain the following condition in any position $P = (\set, \rel, \back, \lab, \res, \lef, \rig, v)$:
	\begin{equation*}\tag{$\ast$}\label{ehto}
		\res(v) < \log(\chi(\G(\LL(v), \RR(v)))
	\end{equation*}
	
	At the start of the game, \ref{ehto} holds by assumption. Since the sets $\CC_n$ and $\DD_n$ are already finite, D can keep the full sets for the first move of the game.
	
	We first show that if S ever makes a modal move while \ref{ehto} holds, D gets a winning strategy for the game. We assume $v \notin \dom(\lab)$ since the first modal move in a game must always be made in an unlabelled vertex. In this case there are no other vertices below $v$ so $\LL(v) = \lef(v)$ and $\RR(v) = \rig(v)$. We assume $\res(v) > 1$ so that S can make a modal move and not lose immediately due to the resource running out. From \ref{ehto} we obtain $\chi(\G(\lef(v), \rig(v))) > 2$ so there are \nice\ clocked models $(\liitos(\M_1, w_1), c_1, \new), (\liitos(\M_2, w_2), c_2, \new) \in \lef(v)$ and $(\liitos\{(\M_1, w_1), (\M_2, w_2)\}, c_3, \new) \in \rig(v)$. Now if S makes a $\di$- or $\bo$-move, then in the following position $P'$ there is $i \in \{1,2\}$ s.t. $(\M_i, w_i, c_i, \new) \in \lef(v')$ and $(\M_i, w_i, c_3, \new) \in \rig(v')$. As these two share the same underlying pointed model they are bisimilar and moreover, since the depth has decreased by at least $1$ from the previous position, the models are strictly \nice. By Lemma~\ref{bisimilar}, D now has a winning strategy from position $P'$.
	
	If S makes a $\lor$-move, let $v_1$ and $v_2$ be the successors of $v$. In the following position, whichever it may be, we have $\LL(v) = \LL(v_1) \cup \LL(v_2)$ and $\RR(v) = \RR(v_1) \cap \RR(v_2)$. Let $\G_s = \G(\LL(s), \RR(s)) = (V_s, E_s)$ for $s \in \{v, v_1, v_2\}$. We obtain $V_v = V_{v_1} \cup V_{v_2}$ and $E_v \upharpoonright V_{v_i} \subseteq E_{v_i}$ for $i \in \{1,2\}$. By Lemma~\ref{väritys}, 
	\[
	\chi(\G_v) \leq \chi(V_{v_1}, E_v \upharpoonright V_{v_1}) + \chi(V_{v_2}, E_v \upharpoonright V_{v_2}) \leq \chi(\G_{v_1}) + \chi(\G_{v_2}).
	\]
	Thus (just like in the proof of Theorem~\ref{winstrat}) we obtain $\res(v_i) < \log(\chi(\G_{v_i})$ for some $i \in \{1,2\}$ so \ref{ehto} holds in the following position after D chooses this $i$.
	
	If S makes a $\land$-move, let $v_1$ and $v_2$ be the successors of $v$. In the following position we have $\LL(v) = \LL(v_1) \cap \LL(v_2)$ and $\RR(v) = \RR(v_1) \cup \RR(v_2)$. We use the notation $\G_s = (V_s, E_s)$ from the previous case and obtain $V_v = V_{v_1} \cap V_{v_2}$ and $E_v = (E_{v_1}\upharpoonright V_v) \cup (E_{v_2}\upharpoonright V_v)$. By Lemma~\ref{väritys},
	\[
	\chi(G_v) \leq \chi(V_v, E_{v_1}\upharpoonright V_v)\chi(V_v, E_{v_2}\upharpoonright V_v) \leq \chi(\G_{v_1})\chi(\G_{v_2}).
	\]
	Thus we again obtain $\res(v_i) < \log(\chi(\G_{v_i})$ for some $i \in \{1,2\}$ so \ref{ehto} holds in the following position after D chooses this $i$.
	
	If S makes a $\eta X$-move, where $\eta \in \{\mu, \nu\}$, then D sets her clock for each model at the same value as the depth of the model. All \nice\ models remain \nice\ and \ref{ehto} is maintained.
	
	If S makes an $X$-move, by \ref{ehto}, S does not immediately win the game. Assume that $u$ is the vertex returned to and $P^\circ$ is the previous position when $u$ was the current vertex. Let $P'$ be the position after this $X$-move. By Lemma~\ref{lrsets} we obtain $\LL^\circ(u) \subseteq \LL'(u)$ and $\RR^\circ(u) \subseteq \RR'(u)$. By induction hypothesis, \ref{ehto} held in $P^\circ$ and clearly $\res'(u) = \res^\circ(u)$, so \ref{ehto} still holds in $P'$.
	
	If S makes a $\Lit$-move, by \ref{ehto}, $\lef(v) \neq \emptyset$ and $\rig(v) \neq \emptyset$. Since all the models are propositionally equivalent, D wins the game.
	
	By Theorem~\ref{peruslause2}, we obtain that there is no sentence $\phi \in \L_\mu(\emptyset)$ that separates $\CC_n$ from $\DD_n$ with $\size(\phi) \leq \log(\chi(\G(\CC_n, \DD_n))) = \tower(n-1)$. Thus $\FO$ is non-elementarily more succinct than $\L_\mu$.

\end{proof}

\begin{remark}
	Just like in the case of $\ML$, we remark that the result of Theorem~\ref{foandmu} also holds for DAG-size. This is again because the difference between the size of an $\L_\mu$ formula in our sense and the DAG-size of the same formula is at most exponential.
\end{remark}

\section{Conclusion}

We have defined formula size games for basic modal logic and modal $\mu$-calculus. The games utilize resource parameters to achieve a truly two-player game. In the case of modal logic the players only construct one branch of the game tree. This is in contrast with the original Adler-Immerman game, where the players form the whole tree in a single play. For modal $\mu$-calculus the recursive nature of fixed point operators necessitates returning to previously visited nodes. However, the game still traverses only one possible path through the game tree in a single play and some branches can remain unvisited for the entire play. The $\mu$-calculus game has infinite branching but the use of decreasing ordinal clocks, as in \cite{boundedGTS}, makes each play of the game finite.

We used the games to show that the property ``all successor models are $n$-bisimilar with each other'' cannot be defined in basic modal logic or modal $\mu$-calculus with a formula of size less than the exponential tower of height $n-1$. On the other hand this property can be defined in $\FO$ with a formula of size linear in $n$. This means that $\FO$ is non-elementarily more succinct than both $\ML$ and $\L_\mu$. We also show that the same property can be defined in two-dimensional modal logic $\ML^2$ with a formula of size exponential in $n$. Thus the non-elementary succinctness gap is also present between $\ML^2$ and both $\ML$ and $\L_\mu$ 

We find the $\ML$-game to be a useful tool for proving lower bounds on the size of $\ML$-formulas. Depending on the desired result, the game can also be modified to count a more specific parameter instead of formula size, such as the number or nesting depth of a specific operator. 

The $\mu$-calculus game is also functional for proving lower bounds, but with some caveats. The main theorem stating the usefulness of the game, Theorem~\ref{peruslause2}, requires uniform strategies for S. This means that we assume S has a single formula in mind and always plays according to that formula. It may be that this restriction could be removed but we have been unable to prove this. However, to show succinctness results we only need one direction of the equivalence so the issue is usually not relevant in practice. The greater concern is whether the game can be successfully used to prove succinctness results for $\mu$-calculus in more complicated contexts. Here we only generalize a result already obtained with the $\ML$ game and we have so far failed to produce any other results with the $\L_\mu$ game due to its sheer complexity. A question related to this difficulty would be whether the game could be simplified significantly while still preserving its functionality. It would be especially interesting to apply the game to open problems related to $\mu$-calculus and succinctness, such as whether there is a polynomial transformation from $\L_\mu$ to the guarded fragment or from vectorial form to regular~$\L_\mu$ \cite{10224192520150401}.


\bibliographystyle{abbrv}
\bibliography{formulasizejournal}

\end{document}